\setlist[enumerate,1]{label={(\arabic*)}}
\newcounter{algorithmicH}
\let\oldalgorithmic\algorithmic
\renewcommand{\algorithmic}{%
    \stepcounter{algorithmicH}
    \oldalgorithmic}
\renewcommand{\theHALG@line}{ALG@line.\thealgorithmicH.\arabic{ALG@line}}
\spnewtheorem{thm}[theorem]{Theorem}{\bfseries}{\itshape}
\spnewtheorem{defn}[theorem]{Definition}{\bfseries}{\itshape}
\spnewtheorem{prop}[theorem]{Proposition}{\bfseries}{\itshape}
\spnewtheorem{lem}[theorem]{Lemma}{\bfseries}{\itshape}
\spnewtheorem{cor}[theorem]{Corollary}{\bfseries}{\itshape}
\spnewtheorem{hyp}[theorem]{Hypothesis}{\bfseries}{\itshape}
\spnewtheorem{conj}[theorem]{Conjecture}{\bfseries}{\itshape}
\spnewtheorem{ex}[theorem]{Example}{\bfseries}{}
\spnewtheorem{rem}[theorem]{Remark}{\bfseries}{}
\crefname{thm}{theorem}{theorems}
\Crefname{thm}{Theorem}{Theorems}
\crefname{defn}{definition}{definitions}
\Crefname{defn}{Definition}{Definitions}
\crefname{prop}{proposition}{propositions}
\Crefname{prop}{Proposition}{Propositions}
\crefname{lem}{lemma}{lemmas}
\Crefname{lem}{Lemma}{Lemmas}
\crefname{cor}{corollary}{corollaries}
\Crefname{cor}{Corollary}{Corollaries}
\crefname{ex}{example}{examples}
\Crefname{ex}{Example}{Examples}
\crefname{rem}{remark}{remarks}
\Crefname{rem}{Remark}{Remarks}
\crefname{hyp}{hypothesis}{hypotheses}
\Crefname{hyp}{Hypothesis}{Hypotheses}
\crefname{conj}{conjecture}{conjectures}
\Crefname{conj}{Conjecture}{Conjectures}
\newcommand{\subalign}[1]{%
  \vcenter{%
    \Let@ \restore@math@cr \default@tag
    \baselineskip\fontdimen10 \scriptfont\tw@
    \advance\baselineskip\fontdimen12 \scriptfont\tw@
    \lineskip\thr@@\fontdimen8 \scriptfont\thr@@
    \lineskiplimit\lineskip
    \ialign{\hfil$\m@th\scriptstyle##$&$\m@th\scriptstyle{}##$\hfil\crcr
      #1\crcr
    }%
  }%
}
\newcolumntype{P}[1]{>{\centering\arraybackslash}p{#1}}
\newcolumntype{M}[1]{>{\centering\arraybackslash}m{#1}}
\newcommand{\F}{\mathbb{F}}                                         
\newcommand{\Fq}{\F_{q}}                                            
\newcommand{\Fqn}{\Fq^{n}}                                          
\newcommand{\abs}[1]{\left\vert #1 \right\vert}                     
\newcommand{\ceil}[1]{\left\lceil #1 \right\rceil}                  
\DeclareMathOperator{\dtv}{d_{TV}}                                  
\DeclareMathOperator{\imag}{im}                                     
\DeclareMathOperator{\lcm}{lcm}                                     
\DeclareMathOperator{\rowspace}{rowsp}                              
    \newcommand\widebar[1]{\mathop{\overline{#1}}}                  
\newcommand{\AffSp}[2]{\mathbb{A}^{#1}_{#2}}                        
\newcommand{\ProjSp}[2]{\mathbb{P}^{#1}_{#2}}                       
\newcommand{\homog}{\text{\normalfont hom}}                         
\DeclareMathOperator{\inid}{in}                                     
\newcommand{\satur}{\text{\normalfont sat}}                         
\DeclareMathOperator{\solvdeg}{sd}                                  
\DeclareMathOperator{\reg}{reg}                                     
\newcommand{\topcomp}{\text{\normalfont top}}                       
\DeclareMathOperator{\prob}{\mathbb{P}}                             
\newcommand*{\degree}[1]{\deg \left( #1 \right)}                    
\DeclareMathOperator{\LM}{LM}                                       
\DeclareMathOperator{\LT}{LT}                                       
\begin{document}
    \title{The Complexity of Algebraic Algorithms for {LWE}}
    \author{Matthias Johann Steiner \orcidlink{0000-0001-5206-6579}}
    \authorrunning{M.\ J.\ Steiner}
    \institute{Alpen-Adria-Universit\"at Klagenfurt, Klagenfurt am W\"orthersee, Austria \\ \email{matthias.steiner@aau.at}}

    \maketitle

    \begin{abstract}
        Arora \& Ge introduced a noise-free polynomial system to compute the secret of a Learning With Errors (LWE) instance via linearization.
        Albrecht et al.\ later utilized the Arora-Ge polynomial model to study the complexity of Gr\"obner basis computations on LWE polynomial systems under the assumption of semi-regularity.
        In this paper we revisit the Arora-Ge polynomial and prove that it satisfies a genericity condition recently introduced by Caminata \& Gorla, called being in generic coordinates.
        For polynomial systems in generic coordinates one can always estimate the complexity of DRL Gr\"obner basis computations in terms of the Castelnuovo-Mumford regularity and henceforth also via the Macaulay bound.

        Moreover, we generalize the Gr\"obner basis algorithm of Semaev \& Tenti to arbitrary polynomial systems with a finite degree of regularity.
        In particular, existence of this algorithm yields another approach to estimate the complexity of DRL Gr\"obner basis computations in terms of the degree of regularity.
        In practice, the degree of regularity of LWE polynomial systems is not known, though one can always estimate the lowest achievable degree of regularity.
        Consequently, from a designer's worst case perspective this approach yields sub-exponential complexity estimates for general, binary secret and binary error LWE.

        In recent works by Dachman-Soled et al.\ the hardness of LWE in the presence of side information was analyzed.
        Utilizing their framework we discuss how hints can be incorporated into LWE polynomial systems and how they affect the complexity of Gr\"obner basis computations.
    \end{abstract}
    \keywords{LWE \and LWE with hints \and Gr\"obner bases}

    \section{Introduction}
    With the emerging threat of Shor's quantum polynomial time algorithms for factoring and discrete logarithms \cite{FOCS:Shor94} on the horizon, cryptographers in the past 20 years have been in desperate search for new cryptographic problems that cannot be solved in polynomial time on classical as well as quantum computers.
    So far, lattice-based cryptography built on \emph{Learning With Errors} (LWE) and the \emph{Short Integer Solution} (SIS) \cite{STOC:Ajtai96} has emerged as most promising candidate for cryptography in the presence of quantum computers.

    In this paper we revisit polynomial models to solve the Search-LWE problem via Gr\"obner basis computations.
    Solving LWE via a polynomial system was first done by Arora \& Ge \cite{ICALP:AroGe11}, though they solved the system via linearization not via Gr\"obner bases.
    Albrecht et al.\ \cite{Albrecht-AlgebraicAlgorithms,EPRINT:ACFP14} studied the complexity of Gr\"obner basis computations for the Arora-Ge polynomial model under the assumption that the polynomial system is \emph{semi-regular} \cite{Froeberg-Conjecture,Pardue-Generic}.
    Moreover, for binary error LWE Sun et al.\ \cite{ACISP:SunTibAbe20} refined the complexity estimates for linearization under the semi-regularity assumption.
    For a general review of the computational hardness of LWE we refer to \cite{Albrecht-Hardness}.

    We stress that the complexity estimates of \cite{Albrecht-AlgebraicAlgorithms,EPRINT:ACFP14,ACISP:SunTibAbe20} are still hypothetical since both works do not provide a proof that a LWE polynomial system is semi-regular except for very special cases, see e.g.\ \cite[Theorem~11]{EPRINT:ACFP14}.
    Moreover, the complexity bounds rely on asymptotic studies of the Hilbert series of a semi-regular polynomial system.
    Needless to say that a priori is not guaranteed that these complexity estimates apply for practical LWE instantiations.

    In this paper we consider two new approaches to estimate the complexity of Gr\"obner basis computations.
    Caminata \& Gorla \cite{Caminata-SolvingPolySystems} revealed that the solving degree of polynomial system in \emph{generic coordinates} is always upper bounded by the Castelnuovo-Mumford regularity and henceforth also by the Macaulay bound, see \cite[Theorem~10]{Caminata-SolvingPolySystems}.
    For our first approach we prove that any fully determined LWE polynomial system is in generic coordinates.
    In particular this implies that for any LWE polynomial system there exists a Gr\"obner basis algorithm in exponential time as well as memory complexity.
    Semaev \& Tenti \cite{Semaev-Complexity} revealed that the complexity of Gr\"obner basis algorithms can also be estimated via the \emph{degree of regularity} of a polynomial system.
    Though, their bound is only applicable over finite fields and the polynomial system must contain the field equations, see \cite[Theorem~2.1]{Semaev-Complexity} and \cite[Theorem~3.65]{Tenti-Overdetermined}.
    We generalize their result to any polynomial system that admits a finite degree of regularity regardless of the underlying field.
    For a fixed degree of regularity we will determine the minimal number of LWE samples necessary so that the polynomial system could achieve the degree of regularity.
    Hence, for a designer this implies that there \emph{could} exist Gr\"obner basis algorithms in sub-exponential time as well as memory to solve Search-LWE.

    In two recent works Dachman-Soled et al.\ \cite{C:DDGR20,C:DGHK23} introduced a framework to study the complexity of attacks on Search-LWE in the presence of side information.
    In \Cref{Sec: hints} we shortly review their framework and describe how hints can be incorporated into LWE polynomial systems.
    Moreover, in \Cref{Ex: hints complexity} we showcase the complexity impact of hints on Gr\"obner basis computations.

    Finally, Semaev \& Tenti \cite{Semaev-Complexity} also investigated the probability that a uniformly and independently distributed polynomial system $\mathcal{F} \subset \Fq [x_1, \dots, x_n] / (x_1^q - x_1, \dots, x_n^q - x_n)$ achieves a certain degree of regularity.
    Their proof depends only on combinatorial properties, hence we expect that a similar result can be proven for uniformly and independently distributed polynomial system $\mathcal{F} \subset \Fq [x_1, \dots, x_n] / \big (f (x_1), \dots, f (x_n) \big)$, where $f$ is univariate and $\degree{f} \geq 2$ is arbitrary.
    In \Cref{Sec: Appendix} we study the related problem whether a LWE polynomial is close to the uniform distribution or not.
    We find a negative answer for this question, in particular we show that the statistical distance between the highest degree component of a LWE polynomial and the uniform distribution is always $\geq \frac{1}{2}$ and has limit $1$ if the degree of the LWE polynomial goes to infinity.
    Hence, even if Semaev \& Tenti's analysis generalizes it is not applicable to LWE polynomial systems.

    \section{Preliminaries}
    By $k$ we will always denote a field, by $\bar{k}$ we denote its algebraic closure, and by $\Fq$ we denote the finite field with $q$ elements.
    Let $I \subset k [x_1, \dots, x_n]$ be an ideal, then we denote the zero locus of $I$ over $\bar{k}$ as
    \begin{equation}
        \mathcal{Z} (I) = \left\{ \mathbf{x} \in \bar{k}^n \mid f (\mathbf{x}) = 0,\ \forall f \in I \right\} \subset \AffSp{n}{\bar{k}}.
    \end{equation}
    If in addition $I$ is homogeneous, then we denote the projective zero locus over $\bar{k}$ by $\mathcal{Z}_+ (I) \subset \ProjSp{n - 1}{\bar{k}}$.

    Let $f \in K [x_1, \dots, x_n]$ be a polynomial, and let $x_0$ be an additional variable, we call
    \begin{equation}
        f^\homog (x_0, \dots, x_n) = x_0^{\degree{f}} \cdot f \left( \frac{x_1}{x_0}, \dots, \frac{x_n}{x_0} \right) \in K [x_0, \dots, x_n]
    \end{equation}
    the homogenization of $f$ with respect to $x_0$, and analog for the homogenization of ideals $I^\homog = \left\{ f^\homog \mid f \in I \right\}$ and finite systems of polynomials $\mathcal{F}^\homog = \left\{ f_1^\homog, \dots, f_m^\homog \right\}$.
    Further, we will always assume that we can extend a term order on $k[x_1, \dots, x_n]$ to a term order on $k[x_0, \dots, x_n]$ according to \cite[Definition~8]{Caminata-SolvingPolySystems}.

    For a term order $>$ and an ideal $I \subset k [x_1, \dots, x_n]$ we denote with
    \begin{equation}
        \inid_> (I) = \{ \LT_> (f) \mid f \in I \}
    \end{equation}
    the initial ideal of $I$, i.e.\ the ideal of leading terms of $I$, with respect to $>$.

    Every polynomial $f \in [x_1, \dots, x_n]$ can be written as $f = f_d + f_{d - 1} + \ldots + f_0$, where $f_i$ is homogeneous of degree $i$.
    We denote the highest degree component $f_d$ of $f$ with $f^\topcomp$, and analog we denote $\mathcal{F}^\topcomp = \left\{ f_1^\topcomp, \dots, f_m^\topcomp \right\}$.

    For a homogeneous ideal $I \subset P$ and an integer $d \geq 0$ we denote
    \begin{equation}
        I_d = \left\{ f \in I \mid \degree{f} = d,\ f \text{ homogeneous} \right\},
    \end{equation}
    and analog for the polynomial ring $P$.

    Let $I, J \subset k [x_1, \dots, x_n]$ be ideals, then we denote with
    \begin{equation}
        I : J = \left\{ f \in k [x_1, \dots, x_n] \mid \forall g \in J \colon f \cdot g \in I \right\}
    \end{equation}
    the usual ideal quotient, and with $I : J^\infty = \bigcup_{i \geq 1} I : J^i$ the saturation of $I$ with respect to $J$.

    Let $I, \mathfrak{m} \in k [x_0, \dots, x_n]$ be homogeneous ideals where $\mathfrak{m} = (x_0, \dots, x_n)$, then we call $I^\satur = I : \mathfrak{m}^\infty$ the saturation of $I$.

    We will often encounter the lexicographic and the degree reverse lexicographic term order which we will abbreviate as LEX and DRL respectively.

    For $\mathbf{x}, \mathbf{y} \in k^n$ we denote the standard inner product as
    \begin{equation}
        \braket{\mathbf{x}, \mathbf{y}} = \mathbf{x}^\intercal \mathbf{y} = \sum_{i = 1}^{n} x_i \cdot y_i.
    \end{equation}

    By $\log$ we denote the natural logarithm and by $\log_2$ the logarithm in base $2$.

    \subsection{Learning With Errors}
    Learning With Errors (LWE) was introduced by Ajtai in his seminal work \cite{STOC:Ajtai96}.
    In its base form it can be formulated as a simple computational linear algebra problem.
    \begin{defn}[{Learning with errors, \cite{STOC:Ajtai96}}]
        Let $q$ be a prime, let $n \geq 1$ be an integer, and let $\chi$ be a probability distribution on $\mathbb{Z}$.
        For a secret vector $\mathbf{s} \in \Fqn$ the LWE distribution $A_{\mathbf{s}, \chi}$ over $\Fqn \times \Fq$ is sampled by choosing $\mathbf{a} \in \Fqn$ uniformly at random, choosing $e \leftarrow \chi$, and outputting $\left( \mathbf{a}, \braket{\mathbf{s}, \mathbf{a}} + e \in \Fq \right)$.
    \end{defn}

    In Search-LWE we are given $m$ LWE samples $(\mathbf{a}_i, b_i)$ sampled according to some probability distribution.
    Our task is then to recover the secret vector $\mathbf{s} \in \Fqn$ that has been used to generate the samples.

    As probability distribution one typically chooses a discrete Gaussian distribution with mean $0$ and standard deviation $\sigma$.
    For ease of computation in this paper, we ignore the discretization and assume $\chi = \mathcal{N} (0, \sigma)$ if not specified otherwise, hence we do not discuss discretization techniques further.
    Assume that $X \sim \mathcal{N} (0, \sigma)$, we will utilize the following well-known property of the Gaussian distribution several times in this paper
    \begin{equation}\label{Equ: Gaussian exponential decrease}
        \prob \left[ \left| X \right| > t \cdot \sigma \right] \leq \frac{2}{t \cdot \sqrt{2 \cdot \pi}} \cdot \exp \left( -\frac{t^2}{2} \right).
    \end{equation}

    It is well-known that solving Search-LWE for a discrete Gaussian error distribution and $\sigma \in \mathcal{O} \left( \sqrt{n} \right)$ is at least as hard as solving several computational lattice problems, see e.g.\ \cite{STOC:Regev05,STOC:Peikert09,STOC:BLPRS13,Micciancio-BinaryLWE}.

    Moreover, on top of LWE many cryptographic functions can be built, e.g.\ Regev's public key cryptosystem \cite{STOC:Regev05} as well as a key exchange mechanism \cite{CCS:BCDMNN16}.

    \subsection{Gr\"obner Bases}
    For an ideal $I \subset k [x_1, \dots, x_n]$ and a term order $>$ on the polynomial ring, a $>$- Gr\"obner basis $\mathcal{G} = \{ g_1, \dots, g_m \}$ is a finite set of generators such that
    \begin{equation}
        \inid_> (I) = \big( \LT_> (g_1), \dots, \LT_> (g_m) \big).
    \end{equation}
    Gr\"obner bases were introduced by Bruno Buchberger in his PhD thesis \cite{Buchberger}.
    With Gr\"obner bases one can solve many computational problems on ideals like the ideal membership problem or the computation of the zero locus \cite{Cox-Ideals}.
    For a general introduction to the theory of Gr\"obner bases we refer to \cite{Cox-Ideals}.

    Today, two classes of Gr\"obner basis algorithms are known: \emph{Buchberger's algorithm} and \emph{linear algebra-based algorithms}.
    In this paper we only study the latter family.

    Let $\mathcal{F} = \{ f_1, \dots, f_m \} \subset P = k [x_1, \dots, x_n]$ be a homogeneous polynomial system, and let $>$ be a term order on $P$.
    The \emph{homogeneous Macaulay matrix} in degree $d$, denoted as $M_d$, has columns indexed by monomials in $P_d$ sorted from left to right with respect to $>$.
    The rows of $M_d$ are indexed by polynomials $s \cdot f_i$, where $s \in P$ is a monomial such that $\degree{s \cdot f_i} = d$.
    The entry of row $s \cdot f_i$ at column $t$ is the coefficient of $s \cdot f_i$ at the monomial $t$.
    For an inhomogeneous polynomial system $M_d$ is replaced by $M_{\leq d}$ and the degree equalities by inequalities.
    By performing Gaussian elimination on $M_0, \dots, M_d$ respectively $M_{\leq d}$ for $d$ big enough one will produce a $>$-Gr\"obner basis of $\mathcal{F}$.
    This idea can be traced back to Lazard \cite{Lazard-Groebner}.
    Since $d$ determines the complexity of this algorithm in space and time, the least suitable $d$ is of special interest \cite{Ding-SolvingDegree}.
    \begin{defn}[{Solving degree, \cite[Definition~6]{Caminata-SolvingPolySystems}}]\label{Def: solving degree}
        Let $\mathcal{F} = \{ f_1, \dots, f_m \} \subset k[x_1, \dots, \allowbreak x_n]$ and let $>$ be a term order. The solving degree of $\mathcal{F}$ is the least degree $d$ such that Gaussian elimination on the Macaulay matrix $M_{\leq d}$ produces a Gr\"obner basis of $\mathcal{F}$ with respect to $>$. We denote it by $\solvdeg_> (\mathcal{F})$.

        If $\mathcal{F}$ is homogeneous, we consider the homogeneous Macaulay matrix $M_d$ and let the solving degree of $\mathcal{F}$ be the least degree $d$ such that Gaussian elimination on $M_0, \dots, M_d$ produces a Gr\"obner basis of $\mathcal{F}$ with respect to $>$.
    \end{defn}

    Today, the most efficient variants of linear algebra-based Gr\"obner basis algorithms are Faug\'ere's F4 \cite{Faugere-F4} and Matrix-F5 \cite{Faugere-F5} algorithms.
    These algorithms utilize efficient selection criteria to avoid redundant rows in the Macaulay matrices.
    Moreover, they construct the matrices for increasing values of $d$.
    Therefore, they also need stopping criteria, though one could artificially stop the computation once the solving degree is reached since then a Gr\"obner basis must already be contained in the system produced by Gaussian elimination.
    Hence, we do not discuss termination criteria further.

    Let $\mathcal{F} \subset k [x_1, \dots, x_n]$ be a polynomial system, and let $\mathcal{F}^\homog$ be its homogenization.
    We always have that, see \cite[Theorem~7]{Caminata-SolvingPolySystems},
    \begin{equation}
        \solvdeg_{DRL} \left( \mathcal{F} \right) \leq \solvdeg_{DRL} \left( \mathcal{F}^\homog \right).
    \end{equation}

    \subsubsection{Complexity Estimate via the Solving Degree.}
    For a matrix $\mathbf{A} \in k^{n \times m}$ of rank $r$ the reduced row echelon form can be computed in $\mathcal{O} \left( n \cdot m \cdot r ^{\omega - 2} \right)$ \cite[\S 2.2]{Storjohann-Matrix}, where $2 \leq \omega < 2.37286$ is a linear algebra constant \cite{SODA:AlmWil21}.

    Let $\mathcal{F} \subset P = k [x_1, \dots, x_n]$ be a system of $m$ homogeneous polynomials, it is well-known that the number of monomials in $P_d$ is given by $\binom{n + d - 1}{d}$.
    Moreover, at most $\binom{n + d - \degree{f_i} - 1}{d - \degree{f_i}} \leq \binom{n + d - 1}{d}$ many columns can stem from the polynomial $f_i$.
    Therefore, the cost of Gaussian elimination on $M_0, \dots, M_d$ is bounded by
    \begin{equation}\label{Equ: complexity estimate}
        \mathcal{O} \left( m \cdot d \cdot \binom{n + d - 1}{d}^\omega \right).
    \end{equation}
    Thus, by estimating the solving degree $\solvdeg_{DRL} (\mathcal{F})$ we yield a complexity upper bound for linear algebra-based Gr\"obner basis computations.

    \subsection{Generic Coordinates \& the Solving Degree}
    For completeness, we shortly recall the definition of the Castelnuovo-Mumford regularity \cite[Chapter~4]{Eisenbud-Syzygies}, a well-established invariant from commutative algebra and algebraic geometry.
    Let $P = k[x_0, \dots, x_n]$ be the polynomial ring and let
    \begin{equation}
        \mathbf{F}: \cdots \rightarrow F_i \rightarrow F_{i - 1} \rightarrow \cdots
    \end{equation}
    be a graded complex of free $P$-modules, where $F_i = \sum_j P(-a_{i, j})$.
    \begin{defn}
        The Castelnuovo-Mumford regularity of $\mathbf{F}$ is defined as
        \[
        \reg \left( \mathbf{F} \right) = \sup_i a_{i,j} - i.
        \]
    \end{defn}
    By Hilbert's Syzygy theorem \cite[Theorem~1.1]{Eisenbud-Syzygies} any finitely graded $P$-module has a finite free graded resolution. I.e., for every homogeneous ideal $I \subset P$ the regularity of $I$ is computable.

    Next we introduce the notion of generic coordinates which first appeared in the seminal work of Bayer \& Stillman \cite{BayerStillman}.
    Let $I \subset P$ be an ideal, and let $r \in P$.
    We use the shorthand notation ``$r \nmid 0 \mod I$'' for expressing that $r$ is not a zero-divisor on $P / I$.
    \begin{defn}[{\cite[Definition~5]{Caminata-SolvingPolySystems,Caminata-SolvingPolySystemsPreprint}}]\label{Def: generic coordinates}
        Let $k$ be an infinite field. Let $I \subset k [x_0, \dots, \allowbreak x_n]$ be a homogeneous ideal with $| \mathcal{Z}_+ (I) | < \infty$. We say that $I$ is in generic coordinates if either $| \mathcal{Z}_+ (I) | = 0$ or $x_0 \nmid 0 \mod I^{\satur}$.

        Let $k$ be any field, and let $k \subset K$ be an infinite field extension.
        $I$ is in generic coordinates over $K$ if $I \otimes_k K [x_0, \dots, x_n] \subset K [x_0, \dots, x_n]$ is in generic coordinates.
    \end{defn}

    Provided a polynomial system is in generic coordinates, then the solving degree is always upper bounded by the Castelnuovo-Mumford regularity.
    \begin{thm}[{\cite[Theorem~9, 10]{Caminata-SolvingPolySystems}}]\label{Th: solvdeg and CM-regularity}
        Let $K$ be an algebraically closed field, and let $\mathcal{F} = \{ f_1, \dots, f_m \} \subset K [x_1, \dots, x_n]$ be an inhomogeneous polynomial system such that $\left( \mathcal{F}^\homog \right)$ is in generic coordinates.
        Then
        \begin{equation*}
            \solvdeg_{DRL} \left( \mathcal{F} \right) \leq \reg \left( \mathcal{F}^\homog \right).
        \end{equation*}
    \end{thm}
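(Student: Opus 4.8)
Since the statement is that of Caminata \& Gorla, the plan is to recall the architecture of their argument. It proceeds in three steps: reduce from the inhomogeneous system to its homogenization, express the solving degree of a homogeneous ideal through a combinatorial invariant of its initial ideal, and then bound that invariant by the Castelnuovo--Mumford regularity via the Bayer--Stillman criterion --- the last step being the one that genuinely uses the generic-coordinates hypothesis.

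For the first two steps: by the inequality $\solvdeg_{DRL} (\mathcal{F}) \leq \solvdeg_{DRL} (\mathcal{F}^\homog)$ recalled above, it suffices to bound $\solvdeg_{DRL} (\mathcal{F}^\homog)$, so set $I := \left( \mathcal{F}^\homog \right) \subseteq P := K [x_0, \dots, x_n]$. By assumption $I$ is homogeneous, $| \mathcal{Z}_+ (I) | < \infty$, and $I$ is in generic coordinates: either $\mathcal{Z}_+ (I) = \emptyset$ --- whence $I^\satur = P$ and the bound is trivial --- or $x_0 \nmid 0 \bmod I^\satur$, i.e.\ $x_0$ is a non-zero-divisor on $P / I^\satur$. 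Discarding redundant generators, we may assume each $f_i^\homog$ is a minimal generator of $I$, so that $\degree{f_i^\homog} \leq \reg (I)$. Lazard's description of linear-algebra Gr\"obner basis computations, already recalled in the excerpt, shows that for $d$ at least the largest input degree, Gaussian elimination on $M_0, \dots, M_d$ recovers vector-space bases of all graded pieces $I_e$ with $e \leq d$, hence collects exactly the minimal generators of $\inid_{DRL} (I)$ of degree $\leq d$, and so returns a Gr\"obner basis once $d$ reaches their maximal degree $\maxGBdeg_{DRL} (I)$. Therefore $\solvdeg_{DRL} (\mathcal{F}^\homog) \leq \max \big\{ \maxGBdeg_{DRL} (I),\, \max_i \degree{f_i^\homog} \big\}$, and the second argument of the maximum is already $\leq \reg (I)$.

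The heart of the proof is the bound $\maxGBdeg_{DRL} (I) \leq \reg (I)$, which genuinely needs the hypothesis: for an arbitrary term order one only has $\reg \big( \inid_> (I) \big) \geq \reg (I)$, and for badly positioned ideals the top generator degree of the initial ideal can far exceed $\reg (I)$. The mechanism is Bayer--Stillman \cite{BayerStillman}: after a generic linear change of coordinates the DRL initial ideal becomes the Borel-fixed generic initial ideal $\mathrm{gin}_{DRL} (I)$, whose regularity equals its top generator degree, and for the reverse lexicographic order one moreover has $\reg \big( \mathrm{gin}_{DRL} (I) \big) = \reg (I)$. The point of \Cref{Def: generic coordinates} is that the single colon condition $x_0 \nmid 0 \bmod I^\satur$ already makes the given coordinates good enough that $\maxGBdeg_{DRL} (I) \leq \maxGBdeg_{DRL} \big( \mathrm{gin}_{DRL} (I) \big)$: concretely, DRL with $x_0$ as its distinguished variable satisfies $\inid_{DRL} (I) : x_0 = \inid_{DRL} (I : x_0)$ unconditionally, and one then runs an induction reducing modulo $x_0$ whose base case is the Artinian quotient $P / \left( I^\satur + (x_0) \right)$ --- here $| \mathcal{Z}_+ (I) | < \infty$ keeps the Krull dimension small enough for the induction to bottom out --- relating $\reg (I)$ to a top nonzero degree through the exact sequence $0 \to \left( P / I^\satur \right) (-1) \xrightarrow{x_0} P / I^\satur \to P / \left( I^\satur, x_0 \right) \to 0$.

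Chaining the estimates then yields
\[
\solvdeg_{DRL} (\mathcal{F}) \leq \solvdeg_{DRL} \left( \mathcal{F}^\homog \right) \leq \max \big\{ \maxGBdeg_{DRL} (I),\, \max_i \degree{f_i^\homog} \big\} \leq \reg (I) = \reg \left( \mathcal{F}^\homog \right).
\]
I expect the main obstacle to be exactly the third paragraph: verifying that the one easily checkable condition of \Cref{Def: generic coordinates}, rather than the full Bayer--Stillman chain of generic hyperplane sections, already suffices to bound the top generator degree of the DRL initial ideal by the regularity.
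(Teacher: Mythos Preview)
The paper does not supply a proof of this theorem at all: it is stated purely as a citation of \cite[Theorem~9, 10]{Caminata-SolvingPolySystems} and then immediately used, so there is no argument in the paper to compare your proposal against. Your sketch is a faithful outline of the Caminata--Gorla argument itself --- reduce to the homogenized system, identify the solving degree with the top generator degree of the DRL initial ideal, and invoke the Bayer--Stillman criterion to bound that by $\reg(I)$ under the generic-coordinates hypothesis --- so in that sense it matches what the paper defers to.
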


    By a classical result one can always bound the regularity of an ideal with the Macaulay bound (see \cite[Theorem~1.12.4]{Chardin-Regularity}).
    \begin{cor}[{Macaulay bound, \cite[Theorem~2]{Lazard-Groebner}, \cite[Corollary~2]{Caminata-SolvingPolySystems}}]\label{Cor: Macauly bound}
        Consider a system of equations $\mathcal{F} = \{ f_1, \dots, f_m \} \subset k[x_1, \dots, x_n]$ with $d_i = \deg \left( f_i \right)$ and $d_1 \geq \ldots \geq d_m$. Set $l = \min \{ n + 1, m \}$.
        Assume that $\left| \mathcal{Z}_+ \left( \mathcal{F}^\homog \right) \right| < \infty$ and that $\left( F^\homog \right)$ is in generic coordinates over $\bar{k}$.
        Then
        \[
        \solvdeg_{DRL} \left( \mathcal{F} \right) \leq \reg \left( \mathcal{F}^\homog \right) \leq d_1 + \ldots + d_l - l + 1.
        \]
        In particular, if $m > n$ and $d = d_1$, then
        \[
        \solvdeg_{DRL} \left( \mathcal{F} \right) \leq (n + 1) \cdot (d - 1) + 1.
        \]
    \end{cor}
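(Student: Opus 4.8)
The plan is to read the statement as a corollary: the first inequality is \Cref{Th: solvdeg and CM-regularity} after a reduction to the algebraic closure, the second is the classical Macaulay bound on the Castelnuovo--Mumford regularity, and the ``in particular'' is arithmetic. For the first inequality I would begin by recording that the solving degree does not change under field extension: the Macaulay matrices $M_{\le d}$ of $\mathcal{F}$ are defined over $k$, Gaussian elimination is a $k$-linear operation, and a family of polynomials is a $>$-Gr\"obner basis over $k$ if and only if it remains one over $\bar k$, since leading terms and Buchberger's $S$-polynomial reductions are unaffected by enlarging the field. Hence $\solvdeg_{DRL}(\mathcal{F}) = \solvdeg_{DRL}\!\big(\mathcal{F} \otimes_k \bar k\big)$, while $\reg$ of the ideal $(\mathcal{F}^\homog)$ is invariant under the flat extension $k \hookrightarrow \bar k$. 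By hypothesis $(\mathcal{F}^\homog)$ is in generic coordinates over $\bar k$ and $\big|\mathcal{Z}_+(\mathcal{F}^\homog)\big| < \infty$, so \Cref{Th: solvdeg and CM-regularity} applies over the algebraically closed field $\bar k$ and gives $\solvdeg_{DRL}(\mathcal{F}) \le \reg(\mathcal{F}^\homog)$.

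For the second inequality I would invoke the classical regularity bound already recalled before the corollary, namely \cite[Theorem~1.12.4]{Chardin-Regularity} (equivalently \cite[Corollary~2]{Caminata-SolvingPolySystems}). Over $R = \bar k[x_0, \dots, x_n]$ set $I = (\mathcal{F}^\homog)$; finiteness of $\mathcal{Z}_+(I)$ means $\dim(R/I) \le 1$, so $I$ has codimension at least $n$ and in particular $m \ge n$. The bound is then obtained by extracting from $I$, after a generic linear change of coordinates, a regular sequence $h_1, \dots, h_n$ with $\degree{h_i} = d_i$ generating a complete intersection $J \subseteq I$ with $\reg(R/J) = \sum_{i=1}^n d_i - n$, and, when $m \ge n+1$, adjoining one further generic element of degree $d_{n+1}$ and comparing $R/I$ with $R/J$ via a short exact sequence together with a saturation argument, which upgrades the bound to $\reg(R/I) \le \sum_{i=1}^{l} d_i - l$ with $l = \min\{n+1,m\}$; for $m = n$ the ideal $I$ is itself a complete intersection and $l = n$. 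Since $\reg(\mathcal{F}^\homog) = \reg(R/I) + 1$, this is exactly $\reg(\mathcal{F}^\homog) \le d_1 + \ldots + d_l - l + 1$.

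The ``in particular'' then follows at once: if $m > n$ then $l = n+1$, and bounding each $d_i$ by $d = d_1$ gives $d_1 + \ldots + d_{n+1} - (n+1) + 1 \le (n+1)\cdot(d-1) + 1$, which combined with the two inequalities above yields $\solvdeg_{DRL}(\mathcal{F}) \le (n+1)(d-1)+1$. I expect the only genuinely substantial ingredient to be the classical regularity bound used in the second step, whose self-contained proof requires the generic-coordinates reduction and careful control of the saturation $I^\satur$; here it is simply quoted. The one point requiring care in the part that is our own responsibility is the reduction to $\bar k$ in the first step, which is necessary because \Cref{Th: solvdeg and CM-regularity} is stated over an algebraically closed field while the corollary allows an arbitrary base field $k$.
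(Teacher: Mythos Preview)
Your proposal is correct and matches the paper's approach: the paper does not give its own proof of this corollary but simply records it as the combination of \Cref{Th: solvdeg and CM-regularity} with the classical regularity bound \cite[Theorem~1.12.4]{Chardin-Regularity} (equivalently \cite[Corollary~2]{Caminata-SolvingPolySystems}), exactly the two ingredients you invoke. Your added remarks on passing to $\bar{k}$ and on why $m \geq n$ are correct elaborations that the paper leaves implicit in the citations.
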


    In the proof of \cite[Theorem~11]{Caminata-SolvingPolySystems} Caminata \& Gorla implicitly revealed an efficient criterion to prove that a polynomial system is in generic coordinates.
    This observation was later formalized by Steiner in terms of the highest degree components of a polynomial system \cite{Steiner-Solving}.
    \begin{thm}[{\cite[Theorem~3.2]{Steiner-Solving}}]\label{Th: generic generators and highest degree components}
        Let $k$ be an algebraically closed field, and let $\mathcal{F} = \{ f_1, \dots, f_m \} \subset k [x_1, \dots, x_n]$ be an inhomogeneous polynomial system such that
        \begin{enumerate}[label=(\roman*)]
            \item $(\mathcal{F}) \neq (1)$, and

            \item $\dim \left( \mathcal{F} \right) = 0$.
        \end{enumerate}
        Then the following are equivalent.
        \begin{enumerate}
            \item $\left( \mathcal{F}^\homog \right)$ is in generic coordinates and $\left| \mathcal{Z}_+ \left( \mathcal{F}^\homog \right) \right| \neq 0$.

            \item $\sqrt{\mathcal{F}^\topcomp} = \left( x_1, \dots, x_n \right)$.

            \item $\left( \mathcal{F}^\topcomp \right)$ is zero-dimensional in $k [x_1, \dots, x_n]$.

            \item\label{Item: zero-dimensional} For every $1 \leq i \leq n$ there exists an integer $d_i \geq 1$ such that $x_i^{d_i} \in \inid_{DRL} \left( \mathcal{F}^\homog \right)$.
        \end{enumerate}
    \end{thm}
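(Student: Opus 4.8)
The plan is to establish the four conditions as equivalent by treating (2), (3) and (4) as statements about $\mathcal{F}^\topcomp$ alone and bridging to (1) through the points at infinity of $\mathcal{Z}_+ (\mathcal{F}^\homog)$. The starting point is the identity $(\mathcal{F}^\homog) + (x_0) = (\mathcal{F}^\topcomp) + (x_0)$ in $k[x_0, \dots, x_n]$, which holds because $f_i^\homog$ and $f_i^\topcomp$ agree modulo $x_0$; set-theoretically this says
\[
\mathcal{Z}_+ (\mathcal{F}^\homog) = \mathcal{Z}(\mathcal{F}) \;\sqcup\; \mathcal{Z}_+ (\mathcal{F}^\topcomp),
\]
the first piece lying in the chart $x_0 \neq 0$ and the second on the hyperplane $x_0 = 0$. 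By the hypotheses, $\mathcal{Z}(\mathcal{F})$ is nonempty (since $(\mathcal{F}) \neq (1)$, by the Nullstellensatz) and finite (since $\dim(\mathcal{F}) = 0$). I would then dispatch (2) $\Leftrightarrow$ (3) directly: $\mathcal{F}^\topcomp$ consists of forms of positive degree, so $\mathcal{Z} (\mathcal{F}^\topcomp)$ is a nonempty cone, hence finite if and only if it equals $\{0\}$, if and only if $\mathcal{Z}_+ (\mathcal{F}^\topcomp) = \emptyset$, if and only if $\sqrt{\mathcal{F}^\topcomp} = (x_1, \dots, x_n)$ by the Nullstellensatz; and ``$(\mathcal{F}^\topcomp)$ zero-dimensional'' is just ``$\mathcal{Z} (\mathcal{F}^\topcomp)$ finite''.

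Next I would prove (1) $\Leftrightarrow$ (3). Unwinding \Cref{Def: generic coordinates}, assertion (1) amounts to: $\mathcal{Z}_+ (\mathcal{F}^\homog)$ is a nonempty finite set and $x_0 \nmid 0 \mod (\mathcal{F}^\homog)^\satur$. Assume (3). Then $\mathcal{Z}_+ (\mathcal{F}^\topcomp) = \emptyset$, so by the identity above $\mathcal{Z}_+ (\mathcal{F}^\homog) = \mathcal{Z}(\mathcal{F})$ is nonempty, finite, and disjoint from $\{x_0 = 0\}$; since $(\mathcal{F}^\homog)^\satur$ is saturated with this same finite zero locus, its associated primes are exactly the ideals $I(P)$ of the points $P \in \mathcal{Z}_+ (\mathcal{F}^\homog)$ — there is no embedded prime, because the only homogeneous prime strictly above such an $I(P)$ is the irrelevant ideal $\mathfrak{m}$, which is never associated to a saturated ideal — and none of these primes contains $x_0$, so $x_0$ is a non-zero-divisor modulo $(\mathcal{F}^\homog)^\satur$, giving (1). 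Conversely, assume (1). Were $\mathcal{Z}_+ (\mathcal{F}^\topcomp)$ nonempty, a point $P$ in it would give a minimal — hence associated — prime $I(P) \ni x_0$ of $(\mathcal{F}^\homog)^\satur$, contradicting $x_0 \nmid 0 \mod (\mathcal{F}^\homog)^\satur$; hence $\mathcal{Z}_+ (\mathcal{F}^\topcomp) = \emptyset$, which is (3).

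For (3) $\Leftrightarrow$ (4) I would fix the DRL order on $k[x_0, x_1, \dots, x_n]$ extended so that $x_0$ is the smallest variable. The key observation is that for homogeneous $g$ every monomial of $x_0$-degree $0$ is DRL-larger than every monomial of positive $x_0$-degree of the same total degree; hence $\LT_{DRL}(g) = \LT_{DRL}(g|_{x_0 = 0})$ whenever $g|_{x_0 = 0} \neq 0$, and $x_0 \mid \LT_{DRL}(g)$ if and only if $x_0 \mid g$. Writing a homogeneous $F \in (\mathcal{F}^\homog) + (x_0)$ as $F = G + x_0 H$ with $G \in (\mathcal{F}^\homog)$ and distinguishing the cases $x_0 \mid \LT_{DRL}(F)$ and $x_0 \nmid \LT_{DRL}(F)$, this observation gives $\LT_{DRL}(F) \in \inid_{DRL}(\mathcal{F}^\homog) + (x_0)$, so $\inid_{DRL}\big( (\mathcal{F}^\homog) + (x_0) \big) = \inid_{DRL}(\mathcal{F}^\homog) + (x_0)$; the same reasoning over $k[x_1, \dots, x_n]$ yields $\inid_{DRL}\big( (\mathcal{F}^\topcomp) + (x_0) \big) = \inid_{DRL}(\mathcal{F}^\topcomp) + (x_0)$. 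Since $(\mathcal{F}^\homog) + (x_0) = (\mathcal{F}^\topcomp) + (x_0)$, reducing these monomial ideals modulo $x_0$ shows that the $x_0$-free monomial generators of $\inid_{DRL}(\mathcal{F}^\homog)$ generate $\inid_{DRL}(\mathcal{F}^\topcomp)$ inside $k[x_1, \dots, x_n]$. Consequently a pure power $x_i^{d_i}$ with $1 \leq i \leq n$ lies in $\inid_{DRL}(\mathcal{F}^\homog)$ if and only if it lies in $\inid_{DRL}(\mathcal{F}^\topcomp)$ (a generator involving $x_0$ cannot divide an $x_0$-free monomial); ranging over $i$, condition (4) says precisely that $\inid_{DRL}(\mathcal{F}^\topcomp)$ is zero-dimensional, which — since passing to an initial ideal preserves the Krull dimension of the quotient — is equivalent to (3).

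I expect the third step to require the most care: the identity $\inid_{DRL}(I + (x_0)) = \inid_{DRL}(I) + (x_0)$ genuinely fails when $x_0$ is not the \emph{smallest} DRL variable, so the standing assumption that the term order has been extended to $k[x_0, \dots, x_n]$ compatibly (in the sense of \cite{Caminata-SolvingPolySystems}) is essential and must be invoked. A further bookkeeping point is that $(\mathcal{F}^\homog)$ denotes the ideal generated by the individually homogenized $f_i$, which may be strictly contained in the homogenization of the ideal $(\mathcal{F})$; this is harmless here, as every argument passes either through the zero locus — where only the generators matter — or through the explicit computations above with the given $f_i^\homog$ and $f_i^\topcomp$. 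The commutative-algebra ingredient of the second step, namely that a saturated homogeneous ideal with finite projective zero locus has no embedded primes, is standard and I would cite it rather than reprove it.
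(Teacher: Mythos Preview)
The paper does not prove this theorem: it is quoted verbatim from \cite[Theorem~3.2]{Steiner-Solving} and no argument is supplied in the present paper, so there is no ``paper's own proof'' to compare your proposal against.

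That said, your sketch is sound and follows the natural route one would expect. The identity $(\mathcal{F}^\homog)+(x_0)=(\mathcal{F}^\topcomp)+(x_0)$ is the right bridge; your handling of (2)$\Leftrightarrow$(3) via the Nullstellensatz and the cone structure of $\mathcal{Z}(\mathcal{F}^\topcomp)$ is clean; the (1)$\Leftrightarrow$(3) step is correct, and your justification that a saturated homogeneous ideal with finite projective zero locus has no embedded primes (because the only homogeneous prime above a point ideal is $\mathfrak{m}$, which saturation excludes) is exactly the standard argument. For (3)$\Leftrightarrow$(4), the Bayer--Stillman style identity $\inid_{DRL}\big(I+(x_0)\big)=\inid_{DRL}(I)+(x_0)$ for $x_0$ the DRL-smallest variable is the right tool, and you correctly flag that this depends on the convention from \cite[Definition~8]{Caminata-SolvingPolySystems} that the paper explicitly adopts. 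Your caveat about $(\mathcal{F}^\homog)$ denoting the ideal generated by the individually homogenized $f_i$ rather than the homogenization of the ideal is apt and correctly dismissed as harmless for the argument.
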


    In particular, \ref{Item: zero-dimensional} implies that every inhomogeneous polynomial system that contains a zero-dimensional DRL Gr\"obner basis is already in generic coordinates.

    \subsection{A Refined Solving Degree}
    In the Gr\"obner basis complexity literature there is another quantity that is also known as solving degree that refines \Cref{Def: solving degree}, cf.\ \cite[\S 1]{Caminata-Degrees}.
    Again let $\mathcal{F} = \{ f_1, \dots, f_m \} \subset P = k [x_1, \dots, x_n]$ be a finite set of polynomials, and let $>$ be a term order on $P$.
    We start with $M_{\leq d}$ the Macaulay matrix for $\mathcal{F}$ up to degree $d$ and compute a basis $\mathcal{B}$ of the row space of $M_{\leq d}$ via Gaussian elimination.
    Now we construct the Macaulay matrix $M_{\leq d}$ for the polynomial system $\mathcal{B}$ and again compute the basis $\mathcal{B}'$ of the row space via Gaussian elimination.
    We repeat this procedure until $\mathcal{B} = \mathcal{B}'$, at this point multiplying the polynomials in $\mathcal{B}'$ with all monomials up to degree $\leq d$ does not add any new elements to the basis after Gaussian elimination.
    We denote the final Macaulay matrix for $\mathcal{F}$ with $\hat{M}_d$, and we also denote $\hat{M}_d$'s row space via $\rowspace \left( \hat{M}_d \right)$.
    It is clear that
    \begin{equation}
        \rowspace \left( \hat{M}_d \right) \subset (\mathcal{F})_{\leq d} = \left\{ f \in (\mathcal{F}) \mid \degree{f} \leq d \right\},
    \end{equation}
    and for $d$ big enough $\rowspace \left( \hat{M}_d \right)$ will contain a $>$-Gr\"obner basis for $\mathcal{F}$.
    This motivates the following definition.
    \begin{defn}[{Refined solving degree, see \cite[Definition~1.1]{Caminata-Degrees}}]
        Let $\mathcal{F} = \{ f_1, \dots, f_m \} \subset k[x_1, \dots, \allowbreak x_n]$ and let $>$ be a term order.
        The refined solving degree of $\mathcal{F}$ is the least degree $d$ such that $\rowspace \left( \hat{M}_d \right)$ contains a Gr\"obner basis of $\mathcal{F}$ with respect to $>$.
        We denote it by $\overline{\solvdeg}_> (\mathcal{F})$.
    \end{defn}

    It is clear from the definitions that
    \begin{equation}
        \overline{\solvdeg}_> \left( \mathcal{F} \right) \leq \solvdeg_> \left( \mathcal{F} \right),
    \end{equation}
    but the inequality might be strict.

    \subsubsection{Complexity Estimate via the Refined Solving Degree.}
    Let $\mathcal{F} \subset P = k [x_1, \dots, x_n]$ be a system of $m$ homogeneous polynomials, let $\overline{\solvdeg}_> (\mathcal{F}) \leq d$ for some term order $>$ on $P$, and let $D$ denote the number of monomials in $P$ of degree $\leq d$.
    Then the dimensions of the Macaulay matrix $M_{\leq d}$ for $\mathcal{F}$ are bounded by $D \cdot m \times D$.
    Without loss of generality we can assume that $\mathcal{F}$ does not contain redundant elements, then the row space basis of $M_{\leq d}$ has either at least $m + 1$ elements or it contains a Gr\"obner basis with $\leq m$ many elements.
    In the first case, we have to build a new Macaulay matrix whose size is bounded by $D \cdot (m + 1) \times D$.
    Iterating this argument we can build at most $(D - m)$ many Macaulay matrices, and we have to perform Gaussian elimination at most $D - m$ times.
    With $D \leq d \cdot \binom{n + d - 1}{d}$ and our estimation from \Cref{Equ: complexity estimate} we obtain the following worst case complexity estimate
    \begin{align}
        &\mathcal{O} \left( \sum_{i = 0}^{D - m - 1} (m + i) \cdot d \cdot \binom{n + d - 1}{d}^\omega \right) \\
        &\in \mathcal{O} \left( \left( m \cdot D + \frac{(D - m - 1) \cdot (D - m - 2)}{2} \right) \cdot d \cdot \binom{n + d - 1}{d}^\omega \right) \\
        &\in \mathcal{O} \left( m \cdot D^2 \cdot d \cdot \binom{n + d - 1}{d}^\omega \right) \\
        &\in \mathcal{O} \left( m \cdot d^3 \cdot \binom{n + d - 1}{d}^{\omega + 2} \right). \label{Equ: refined complexity estimate}
    \end{align}

    \subsection{Approximation of Binomial Coefficients}
    We recall the following well-known approximation of binomial coefficients.
    \begin{lem}[{\cite[Lemma~17.5.1]{Cover-InformationTheory}}]
        For $0 < p < 1$, $q = 1 - p$ such that $n \cdot p$ is an integer
        \begin{equation*}
            \frac{1}{\sqrt{8 \cdot n \cdot p \cdot q}} \leq \binom{n}{n \cdot p} \cdot 2^{-n \cdot H_2 (p)} \leq \frac{1}{\sqrt{\pi \cdot n \cdot p \cdot q}}.
        \end{equation*}
    \end{lem}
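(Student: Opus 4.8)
The plan is to derive both inequalities from a sharp two-sided form of Stirling's approximation. I would invoke Robbins' bounds: for every integer $m \geq 1$,
\[
\sqrt{2 \pi m} \left( \frac{m}{e} \right)^{m} e^{\frac{1}{12 m + 1}} < m! < \sqrt{2 \pi m} \left( \frac{m}{e} \right)^{m} e^{\frac{1}{12 m}} ,
\]
i.e.\ $m! = \sqrt{2 \pi m} \, (m / e)^{m} e^{\theta_m}$ with $\frac{1}{12 m + 1} \leq \theta_m \leq \frac{1}{12 m}$. Since $0 < p < 1$ and $n p \in \mathbb{Z}$, automatically $n p \geq 1$ and $n q \geq 1$, hence $n \geq 2$. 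Expanding $\binom{n}{n p} = \frac{n!}{(n p)! \, (n q)!}$ with these estimates, the factors $e^{-m}$ cancel because $n p + n q = n$, the powers of the arguments collapse to $\frac{n^{n}}{(n p)^{n p} (n q)^{n q}} = p^{- n p} q^{- n q} = 2^{n H_2 (p)}$ (with $H_2 (p) = - p \log_2 p - q \log_2 q$), and the square-root prefactors combine to $\tfrac{\sqrt{2 \pi n}}{\sqrt{2 \pi n p} \cdot \sqrt{2 \pi n q}} = \tfrac{1}{\sqrt{2 \pi n p q}}$. This gives the exact identity
\[
\binom{n}{n p} \cdot 2^{- n H_2 (p)} = \frac{1}{\sqrt{2 \pi n p q}} \cdot e^{\theta_n - \theta_{n p} - \theta_{n q}} ,
\]
so the lemma reduces to controlling the error exponent $\theta_n - \theta_{n p} - \theta_{n q}$.

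For the upper bound I would simply drop the two negative terms: $\theta_n - \theta_{n p} - \theta_{n q} \leq \theta_n \leq \frac{1}{12 n} \leq \frac{1}{24} < \frac{1}{2} \log 2$, whence $e^{\theta_n - \theta_{n p} - \theta_{n q}} < \sqrt{2}$ and therefore $\binom{n}{n p} 2^{- n H_2 (p)} < \tfrac{\sqrt 2}{\sqrt{2 \pi n p q}} = \tfrac{1}{\sqrt{\pi n p q}}$. For the lower bound I would use $\theta_n \geq \frac{1}{12 n + 1}$ together with $\theta_{n p} + \theta_{n q} \leq \frac{1}{12 n p} + \frac{1}{12 n q}$ and the elementary fact that, over positive integers $a + b = n$, the quantity $\tfrac{1}{a} + \tfrac{1}{b}$ is maximised at $\{ a, b \} = \{ 1, n - 1 \}$. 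This yields
\[
\theta_n - \theta_{n p} - \theta_{n q} \geq \frac{1}{12 n + 1} - \frac{1}{12} - \frac{1}{12 (n - 1)} \geq - \frac{1}{10} > \frac{1}{2} \log \frac{\pi}{4} \qquad (n \geq 3) ,
\]
hence $e^{\theta_n - \theta_{n p} - \theta_{n q}} > \tfrac{\sqrt{\pi}}{2}$ and, after simplifying $\tfrac{\sqrt{\pi / 4}}{\sqrt{2 \pi n p q}}$, the bound $\binom{n}{n p} 2^{- n H_2 (p)} > \tfrac{1}{\sqrt{8 n p q}}$.

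It remains to handle $n = 2$, where $n p, n q \geq 1$ forces $p = q = \tfrac{1}{2}$; here $\binom{2}{1} 2^{-2} = \tfrac{1}{2} = \tfrac{1}{\sqrt{8 \cdot 2 \cdot 1/4}}$, so the lower bound holds with equality and the claim is complete. I expect this last instance to be the only real obstacle: the constant $\tfrac{1}{\sqrt{8}}$ is attained at $n = 2$, $p = \tfrac{1}{2}$, so a Stirling estimate carrying strict slack cannot by itself close the gap there, and one must either check this single case by hand (as above) or appeal to a finer enveloping-series form of Stirling's formula. The rest is routine bookkeeping.
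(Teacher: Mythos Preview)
Your argument is correct. The Robbins form of Stirling's formula gives the exact identity $\binom{n}{np}2^{-nH_2(p)}=\frac{1}{\sqrt{2\pi npq}}\,e^{\theta_n-\theta_{np}-\theta_{nq}}$, and your numerical control of the error exponent is sound: the upper bound via $\theta_n\le\tfrac{1}{24}<\tfrac12\log 2$ and the lower bound via the extremal choice $\{np,nq\}=\{1,n-1\}$ both check out, with the boundary case $n=2$, $p=\tfrac12$ hitting the constant $1/\sqrt{8}$ exactly as you note.

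As for comparison: the paper does not give its own proof of this lemma at all---it simply quotes the statement from Cover \& Thomas and uses it as a black box. The proof in that reference proceeds exactly along the lines you outline (Stirling with explicit error terms), so your proposal is the standard argument and there is nothing further to compare.
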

    With $p = \frac{k}{n}$ the inequality then becomes
    \begin{equation}\label{Equ: binomial coefficient approximation}
        \sqrt{\frac{n}{8 \cdot k \cdot \left( n - k \right)}} \leq \binom{n}{k} \cdot 2^{-n \cdot H_2 \left( \frac{k}{n} \right)} \leq \sqrt{\frac{n}{\pi \cdot k \cdot \left( n - k \right)}}.
    \end{equation}

    In case the solving degree is an integer polynomial in the number of variables, then we have the following generic estimation for the binomial coefficient.
    \begin{prop}\label{Prop: general complexity estimation}
        Let $n \geq 2$ be an integer, let $\alpha \geq 1$, and let $p \in \mathbb{Z} [x]$.
        \begin{enumerate}
            \item If $p (n) \geq n - 1$ for all $n \geq 2$, then
            \begin{equation*}
                \left( \frac{n + p (n) - 1}{p (n) \cdot (n - 1)} \right)^\alpha \leq \frac{2^\alpha}{n - 1}.
            \end{equation*}

            \item If $p (n) \geq 0$ for all $n \geq 2$, then
            \begin{equation*}
                H_2 \left( \frac{p (n)}{n + p (n) - 1} \right) \leq \left( 4 \cdot \frac{(n - 1) \cdot p (n)}{(n + p (n) - 1)^2} \right)^\frac{1}{\log \left( 4 \right)} \leq \left( 4 \cdot \frac{p (n)}{n - 1} \right)^\frac{1}{\log \left( 4 \right)}.
            \end{equation*}
        \end{enumerate}
        In particular if $\alpha \geq 2$ and $p (n) \geq n - 1$ for all $n \geq 2$, then
        \begin{equation*}
            \binom{n + p (n) - 1}{p (n)}^\alpha \in \mathcal{O} \left( \frac{1}{n - 1} \cdot 2^{\alpha \cdot \left( \frac{4 \cdot (n - 1) \cdot p (n)}{\big( n + p (n) - 1 \big)^{2 - \log \left( 4 \right)}} \right)^\frac{1}{\log \left( 4 \right)}} \right).
        \end{equation*}
    \end{prop}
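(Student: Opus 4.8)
The plan is to prove items~(1) and~(2) by elementary means and then feed them, together with the upper bound in \Cref{Equ: binomial coefficient approximation}, into the concluding estimate. Item~(1) is a one-line computation: writing $m = n - 1 \geq 1$ and $P = p(n) \geq m$, the partial fraction identity $\frac{n + p(n) - 1}{p(n)(n-1)} = \frac{m + P}{mP} = \frac1m + \frac1P \leq \frac2m$ holds because $P \geq m$; raising to the power $\alpha \geq 1$ and using $m^\alpha \geq m$ gives $\big(\frac{m+P}{mP}\big)^\alpha \leq \frac{2^\alpha}{m^\alpha} \leq \frac{2^\alpha}{m}$, as claimed. The second inequality of item~(2) is equally immediate: $n + p(n) - 1 \geq n - 1$ forces $\frac{(n-1)p(n)}{(n+p(n)-1)^2} \leq \frac{p(n)}{n-1}$, and $t \mapsto t^{1/\log 4}$ is increasing.

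The substance of the proposition is the first inequality of item~(2). Setting $x = \frac{p(n)}{n+p(n)-1} \in [0,1)$ one has $4 \cdot \frac{(n-1)p(n)}{(n+p(n)-1)^2} = 4x(1-x)$, so it reduces to the classical entropy bound $H_2(x) \leq \big(4x(1-x)\big)^{1/\log 4}$, and this is the step I expect to be the main obstacle. I would argue as follows. Since $H_2(x) = H_2(1-x)$ it suffices to take $x \in [\tfrac12,1)$ and substitute $x = \tfrac{1+t}{2}$, $t \in [0,1)$, so that $4x(1-x) = 1-t^2$ and, via $(1+t)\log(1+t)+(1-t)\log(1-t) = \sum_{k\geq1}\frac{t^{2k}}{k(2k-1)}$, also $H_2\big(\tfrac{1+t}{2}\big) = 1 - \frac{1}{\log 4}\sum_{k\geq1}\frac{t^{2k}}{k(2k-1)}$. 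Taking logarithms, the claim becomes $\phi(t) := \frac{1}{\log 4}\log(1-t^2) - \log H_2\big(\tfrac{1+t}{2}\big) \geq 0$ on $[0,1)$, with $\phi(0) = 0$. Differentiating, and using $\frac{d}{dt}H_2\big(\tfrac{1+t}{2}\big) = -\frac{1}{\log 2}\operatorname{arctanh}(t)$, one checks that $\phi'(t) \geq 0$ is equivalent to $(1-t^2)\operatorname{arctanh}(t) \geq t\,H_2\big(\tfrac{1+t}{2}\big)$, and a power series comparison of the two sides collapses this to $\psi(t) := \sum_{k\geq1}\frac{c_k}{2k-1}t^{2k} \geq 0$ with $c_k = \frac{1}{k\log 4} - \frac{2}{2k+1}$. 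Here $c_1 > 0$ (equivalently $e^3 > 16$) while $c_k < 0$ for every $k \geq 2$ (equivalently $k > \tfrac{1}{2(\log 4 - 1)}$), so $\psi(t)/t^2 = c_1 + \sum_{k\geq2}\frac{c_k}{2k-1}t^{2k-2}$ is nonincreasing on $[0,1]$; and the evaluations $\sum_{k\geq1}\frac{1}{k(2k-1)} = \log 4$ and $\sum_{k\geq1}\frac{1}{(2k-1)(2k+1)} = \tfrac12$ (the latter telescoping) give $\psi(1) = \frac{1}{\log 4}\cdot\log 4 - 2\cdot\tfrac12 = 0$. Hence $\psi(t)/t^2$ decreases from $c_1 > 0$ to $0$, so $\psi \geq 0$ on $[0,1)$, whence $\phi' \geq 0$ and $\phi \geq \phi(0) = 0$.

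For the concluding estimate I apply the upper bound in \Cref{Equ: binomial coefficient approximation} with $n$ replaced by $n + p(n) - 1$ and $k$ by $p(n)$ (so $n-k$ becomes $n-1$; the hypotheses hold since $p(n) \geq n-1 \geq 1$), and raise it to the power $\alpha$. The prefactor $\big(\frac{n+p(n)-1}{\pi\,p(n)(n-1)}\big)^{\alpha/2} = \pi^{-\alpha/2}\big(\frac{n+p(n)-1}{p(n)(n-1)}\big)^{\alpha/2}$ is at most $\frac{1}{n-1}$ by item~(1) applied with exponent $\alpha/2 \geq 1$ together with $\pi > 2$; the exponent $\alpha(n+p(n)-1)H_2\big(\frac{p(n)}{n+p(n)-1}\big)$ is at most $\alpha(n+p(n)-1)\big(\frac{4(n-1)p(n)}{(n+p(n)-1)^2}\big)^{1/\log 4}$ by item~(2), which equals $\alpha\big(\frac{4(n-1)p(n)}{(n+p(n)-1)^{2-\log 4}}\big)^{1/\log 4}$. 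Multiplying the two bounds yields the stated estimate (in fact with implied constant $1$); the only genuinely nontrivial ingredient is the entropy inequality above, everything else being routine bookkeeping.
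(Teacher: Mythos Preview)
Your proof is correct and follows the same overall architecture as the paper: item~(1) via the partial fraction $\frac{1}{p(n)}+\frac{1}{n-1}\le\frac{2}{n-1}$, item~(2) via the bound $H_2(x)\le(4x(1-x))^{1/\log 4}$, and the conclusion by feeding both into \Cref{Equ: binomial coefficient approximation}. The one substantive difference is that the paper simply quotes the entropy inequality $H_2(p)\le(4p(1-p))^{1/\log 4}$ from Tops{\o}e \cite[Theorem~1.2]{Topsoe-BinaryEntropy}, whereas you supply a self-contained proof via the substitution $x=\tfrac{1+t}{2}$, the power series for $H_2$, and the monotonicity argument for $\psi(t)/t^2$. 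Your derivation is sound (the sign pattern $c_1>0$, $c_k<0$ for $k\ge2$ and the telescoping evaluation $\psi(1)=0$ are all correct); it makes the proposition independent of the external reference at the cost of a paragraph of analysis, while the paper's version is a two-line citation. Your treatment of the concluding $\mathcal{O}$-estimate is also more explicit than the paper's one-sentence ``follows from \Cref{Equ: binomial coefficient approximation} combined with the two inequalities'', but the content is the same.
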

    \begin{proof}
        For (1), since $\alpha \geq 1$ and $n \geq 2$ we have that
        \begin{equation*}
            \left( \frac{n + p (n) - 1}{p (n) \cdot (n - 1)} \right)^\alpha
            = \left( \frac{1}{p (n)} + \frac{1}{n - 1} \right)^\alpha
            \leq \left( \frac{2}{n - 1} \right)^\alpha
            \leq \frac{2^\alpha}{n - 1},
        \end{equation*}
        which proves the claim.

        For (2), let $0 < p < 1$ we recall the following inequality for the binary entropy \cite[Theorem~1.2]{Topsoe-BinaryEntropy}
        \begin{equation*}
            H_2 (p) \leq \big( 4 \cdot p \cdot (1 - p) \big)^\frac{1}{\log \left( 4 \right)}.
        \end{equation*}
        Then
        \begin{align*}
            H_2 \left( \frac{p (n)}{n + p (n) - 1} \right) \leq \left( 4 \cdot \frac{(n - 1) \cdot p (n)}{(n + p (n) - 1)^2} \right)^\frac{1}{\log \left( 4 \right)}.
        \end{align*}
        Since $\log \left( 4 \right) \approx 1.3863$ we have that $n - 1 \leq n + p (n) - 1 \Rightarrow (n - 1)^\frac{1}{\log \left( 4 \right)} \leq (n + p (n) - 1)^\frac{1}{\log \left( 4 \right)}$, so the second inequality follows.

        The last claim follows from \Cref{Equ: binomial coefficient approximation} combined with the two inequalities. \qed
    \end{proof}

    \section{Refined Solving Degree \& Degree of Regularity}
    Another measure to estimate the complexity of linear algebra-based Gr\"obner basis algorithms is the so-called degree of regularity.
    \begin{defn}[{Degree of regularity, \cite[Definition~4]{Bardet-Complexity}}]
        Let $k$ be a field, and let $\mathcal{F} \subset P = k [x_1, \dots, x_n]$.
        Assume that $\left( \mathcal{F}^\topcomp \right)_d = P_d$ for some integer $d \geq 0$.
        The degree of regularity is defined as
        \begin{equation*}
            d_{\reg} \left( \mathcal{F} \right) = \min \left\{ d \geq 0 \; \middle\vert \; \left( \mathcal{F}^\topcomp \right)_d = P_d \right\}.
        \end{equation*}
    \end{defn}

    Note that by \Cref{Th: generic generators and highest degree components} and the projective weak Nullstellensatz \cite[Chapter~8~\S 3~Theorem~8]{Cox-Ideals} $\mathcal{F}$ is in generic coordinates if and only if $d_{\reg} (\mathcal{F}) < \infty$.

    Let $\mathcal{F} = \left\{ f_1, \dots, f_m, x_1^q - x_1, \dots, x_n^q - x_n \right\} \subset \Fq [x_1, \dots, x_n]$ be a polynomial system such that $d_{\reg} \left( \mathcal{F} \right) \geq \max \{ q, \degree{f_1}, \dots, \degree{f_m} \}$, Semaev \& Tenti \cite[Theorem~2.1]{Semaev-Complexity} showed that all S-polynomials appearing in Buchberger's algorithm have degree $\leq 2 \cdot d_{\reg} \left( \mathcal{F} \right) - 2$.
    Due to the requirement $d_{\reg} \left( \mathcal{F} \right) \geq q$ we do not expect that Semaev \& Tenti's bound outperforms the Macaulay bound in practice.
    On the other hand, the inclusion of the field equations was only made to restrict to the $\Fq$-valued solutions of a polynomial system, the proof of \cite[Theorem~2.1]{Semaev-Complexity} only requires that $d_{\reg} \left( \mathcal{F} \right) < \infty$.
    Moreover, we will see that LWE polynomial systems contain a univariate polynomial $f_i \mid x_i^q - x_i$ for all variables $x_i$.
    Hence, LWE polynomial systems can restrict to the $\Fq$-valued solutions with polynomials of much smaller degrees than $q$.
    Therefore, we will now generalize \cite[Theorem~2.1]{Semaev-Complexity} to the general case $d_{\reg} \left( \mathcal{F} \right) < \infty$.

    Let $\mathcal{F} = \{ f_1, \dots, f_m \} \subset k [x_1, \dots, x_n]$ be such that $d_{\reg} \left( \mathcal{F} \right) < \infty$.
    Moreover, let $>$ be a degree compatible\footnote{A term order $>$ on $P$ is called degree compatible if for $f, g \in P$ with $\degree{f} > \degree{f}$ one also has that $f > g$.} term order on $k [x_1, \dots, x_n]$.
    In principle, we simply repeat the refined analysis presented in \cite[\S 3.4]{Tenti-Overdetermined}:
    \begin{enumerate}
        \item Compute the Macaulay matrices $M_{\leq d_{\reg} (\mathcal{F})}$ of the sequence $f_1, \dots, f_m$ with respect to $>$, and put the matrix into row echelon form.

        \item Choose a finite set of generators $(\mathcal{B}) = I$ such that every element of $\mathcal{B}$ has degree $\leq d_{\reg} (\mathcal{F})$, and every monomial in $k [x_1, \dots, x_n]$ of degree $\geq d_{\reg} (\mathcal{F})$ is divisible by at least one monomial in $\big( \LM_> (\mathcal{B}) \big)$.
        \footnote{
            For ease of writing we introduce the shorthand notation: $\mathcal{B} = \{ h_1, \dots, h_r \}$, then $\big( \LM_> (\mathcal{B}) \big) = \big( \LM_> (h_1), \dots, \LM_> (h_r) \big)$.
        }
        Then we perform Buchberger's algorithm on $\mathcal{B}$ to obtain a Gr\"obner basis $\mathcal{G}$.

        \item Compute a reduced Gr\"obner basis of $(\mathcal{F})$ via $\mathcal{G}$.
    \end{enumerate}

    Let us now collect some properties of the basis $\mathcal{B}$.
    \begin{prop}\label{Prop: degree of regularity ideal basis}
        Let $k$ be a field, let $>$ be a degree compatible term order on $P = k [x_1, \dots, x_n]$, and let $\mathcal{F} = \{ f_1, \dots, f_m \} \subset P$ be such that $d_{\reg} \left( \mathcal{F} \right) < \infty$.
        There exists a finite generating set $\mathcal{B}$ for $(\mathcal{F})$ such that
        \begin{enumerate}
            \item $\max_{f \in \mathcal B} \degree{f} \leq d_{\reg} \left( \mathcal{F} \right)$.

            \item Every monomial $m \in k [x_1, \dots, x_n]$ with $\degree{m} \geq d_{\reg} \left( \mathcal{F} \right)$ is divisible by some $\LM_> (f)$, where $f \in \mathcal{B}$.

            \item For $f \in \mathcal{B}$ with $\degree{f} = d_{\reg} (\mathcal{F})$ one has $\deg \big( f - \LT_> (f) \big) < d_{\reg} (\mathcal{F})$.
        \end{enumerate}
    \end{prop}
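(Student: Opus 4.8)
The plan is to take $\mathcal{B}$ to be the reduced $>$-Gr\"obner basis of the ideal $I = (\mathcal{F})$. Writing $D = d_{\reg}(\mathcal{F})$ and $\mathfrak{m} = (x_1, \dots, x_n)$, every assertion will then follow from the single observation that $\mathfrak{m}^D \subseteq \inid_> (I)$, combined with the standard fact that the leading monomials of a reduced Gr\"obner basis form the unique minimal monomial generating set of $\inid_> (I)$.

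First I would record that $\left( \mathcal{F}^\topcomp \right)_d = P_d$ for all $d \geq D$: the case $d = D$ is the definition of $d_{\reg}(\mathcal{F})$, and for $d > D$ one gets $\left( \mathcal{F}^\topcomp \right)_d \supseteq P_{d - D} \cdot \left( \mathcal{F}^\topcomp \right)_D = P_d$ since $\left( \mathcal{F}^\topcomp \right)$ is an ideal. For the key containment, fix any monomial $m$ with $\degree{m} = d \geq D$. Since $m \in P_d = \left( \mathcal{F}^\topcomp \right)_d$, we may write $m = \sum_i c_i \cdot f_i^\topcomp$ with each $c_i$ homogeneous of degree $d - \degree{f_i}$. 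Then $g := \sum_i c_i \cdot f_i \in I$ has highest degree component $g^\topcomp = m$, and because $>$ is degree compatible this forces $\LT_> (g) = \LT_> (m) = m$. Hence $m \in \inid_> (I)$, and as $m$ was arbitrary, $\mathfrak{m}^D \subseteq \inid_> (I)$.

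Now let $\mathcal{B}$ be the reduced $>$-Gr\"obner basis of $I$, so $(\mathcal{B}) = I = (\mathcal{F})$ and $\inid_> (I) = \big( \LM_> (f) : f \in \mathcal{B} \big)$ with the $\LM_> (f)$ pairwise non-dividing. Property (2) is then immediate, since any monomial of degree $\geq D$ lies in $\mathfrak{m}^D \subseteq \inid_> (I)$ and hence is a multiple of some $\LM_> (f)$. Property (1) follows because a minimal monomial generator of $\inid_> (I)$ of degree $> D$ would be a proper multiple of one of its own degree-$D$ divisors, which already lies in $\mathfrak{m}^D \subseteq \inid_> (I)$ — impossible; thus $\degree{f} = \deg \LM_> (f) \leq D$ for every $f \in \mathcal{B}$, using degree compatibility. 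Property (3) follows because if $f \in \mathcal{B}$ had $\degree{f} = D$ and a monomial $t \neq \LM_> (f)$ of degree $D$ occurring in $f$, then $t \in \mathfrak{m}^D \subseteq \inid_> (I)$ would be divisible by some $\LM_> (f')$ with $f' \neq f$, contradicting that $\mathcal{B}$ is reduced; so $\degree{f - \LT_> (f)} < D$.

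The step I expect to require the most care is the interaction between $d_{\reg}(\mathcal{F})$ and the degrees of the $f_i$. If $D < \max_i \degree{f_i}$, the row space of $M_{\leq D}$ need not generate $I$ as an ideal — for instance $\mathcal{F} = \left\{ x^2,\, x^4 + x,\, x^5 \right\} \subset k[x]$ has $d_{\reg}(\mathcal{F}) = 2$ but $(\mathcal{F}) = (x)$, which is invisible in degree $\leq 2$ — so one cannot simply extract $\mathcal{B}$ from the degree-$D$ Macaulay matrix. Passing through the reduced Gr\"obner basis avoids this, since it generates $I$ by construction and the degree bound then drops out of the minimal-generator structure above; equivalently, one may row-reduce $M_{\leq d}$ for any $d$ large enough that its row space generates $I$ (e.g.\ $d \geq \max_i \degree{f_i}$) and then delete every polynomial whose leading monomial is divisible by that of another, which produces a set $\mathcal{B}$ with the same three properties.
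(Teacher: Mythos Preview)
Your proof is correct and takes a genuinely different route from the paper's. The paper constructs $\mathcal{B}$ \emph{bottom-up}: it Gaussian-eliminates the Macaulay matrix $M_{\leq d_{\reg}}$ to obtain an initial $\mathcal{B}$, then for each $f_i$ of degree $> d_{\reg}$ reduces $f_i$ modulo $\mathcal{B}$ and appends the (low-degree) remainder, and finally does one more Gaussian elimination pass in degree $d_{\reg}$ to obtain (3). You instead go \emph{top-down}, taking $\mathcal{B}$ to be the reduced Gr\"obner basis and deriving everything from the single inclusion $\mathfrak{m}^{D} \subseteq \inid_>(I)$, which you establish cleanly.

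What each approach buys: yours is shorter and more conceptual, and your final paragraph correctly isolates the delicate case $D < \max_i \deg(f_i)$ that forces the paper to add the remainder step; the reduced Gr\"obner basis sidesteps it entirely because you never need the degree-$D$ Macaulay matrix to generate $I$. The paper's construction, on the other hand, produces a $\mathcal{B}$ that is visibly contained in $\rowspace\!\big(\hat{M}_{d_{\reg}}(\mathcal{F})\big)$ --- a property not asserted in the Proposition but invoked in the very next theorem when bounding $\overline{\solvdeg}_>(\mathcal{F})$. Your $\mathcal{B}$ need not lie in that row space a priori, so while your argument proves the Proposition as stated, the paper's constructive version is the one that plugs directly into the complexity bound that follows.
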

    \begin{proof}
        We abbreviate $d_{\reg} \left( \mathcal{F} \right) = d_{\reg}$.
        First we construct the Macaulay matrix $M_{\leq d_{\reg}}$ of $\mathcal{F}$ with respect to $>$ and denote with $\mathcal{B}$ basis of the row space of $M_{\leq d_{\reg}}$.
        By assumption, we have that $d_{\reg} = d_{\reg} \left( \mathcal{B} \right)$.

        For $f \in \mathcal{F}$, if $\degree{f} \leq d_{\reg}$, then by construction $f \in (\mathcal{B})_{\leq d_{\reg}}$.
        If $\degree{f} > d_{\reg}$, then we compute the remainder $r_f$ of $f$ modulo $\mathcal{B}$ with respect to $>$ and add it to $\mathcal{B}$.
        By elementary properties of multivariate polynomial division, see \cite[Chapter~2~\S 3~Theorem~3]{Cox-Ideals}, and the degree of regularity we then have that $\degree{r_f} < d_{\reg}$.

        Obviously, we have that $(\mathcal{B}) = (\mathcal{F})$ and (1) follows by construction, (2) follows from $d_{\reg} = d_{\reg} \left( \mathcal{B} \right)$, and lastly basis elements that satisfy (3) can always be constructed with another round of Gaussian elimination on the elements of $\mathcal{B}$ of degree $d_{\reg}$. \qed
    \end{proof}

    Now we can prove the generalization of Semaev \& Tenti's bound.
    \begin{thm}\label{Th: refined complexity estimate}
        Let $k$ be a field, let $>$ be a degree compatible term order on $P = k [x_1, \dots, x_n]$, and let $\mathcal{F} = \{ f_1, \dots, f_m \} \subset P$ such that $d_{\reg} \left( \mathcal{F} \right) < \infty$.
        If $d_{\reg} \left( \mathcal{F} \right) \geq \max \big\{ \degree{f_1}, \dots, \degree{f_m} \big\}$, then
        \begin{equation*}
            \overline{\solvdeg}_> \left( \mathcal{F} \right) \leq 2 \cdot d_{\reg} \left( \mathcal{F} \right) - 1.
        \end{equation*}
    \end{thm}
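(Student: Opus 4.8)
The plan is to transfer the refined analysis of \cite[\S 3.4]{Tenti-Overdetermine­d} (which was carried out for finite fields with the field equations present) to the hypothesis $d_{\reg}(\mathcal{F}) < \infty$, following the three-step scheme stated just before \Cref{Prop: degree of regularity ideal basis}. Write $d_{\reg} = d_{\reg}(\mathcal{F})$. Since $d_{\reg} \geq \max_i \degree{f_i}$, every $f_i$ already occurs with multiplier $1$ among the rows of $M_{\leq d_{\reg}}$, so in our situation the construction in the proof of \Cref{Prop: degree of regularity ideal basis} does not require its ``add the remainders $r_f$'' step: a single Gaussian elimination on $M_{\leq d_{\reg}}$ yields a finite generating set $\mathcal{B}$ of $(\mathcal{F})$ with $\max_{f \in \mathcal{B}} \degree{f} \leq d_{\reg}$, with $\rowspace(M_{\leq d_{\reg}}) = \Span_k(\mathcal{B}) \subseteq \rowspace\big(\hat{M}_{2 d_{\reg} - 1}\big)$, and such that every monomial of degree $\geq d_{\reg}$ lies in $\big(\LM_>(\mathcal{B})\big)$. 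I would then run Buchberger's algorithm, equipped with the product (coprimality) criterion, on $\mathcal{B}$, obtaining a Gr\"obner basis $\mathcal{G}$ of $(\mathcal{B}) = (\mathcal{F})$, and prove that every polynomial handled during this run lies in $\rowspace\big(\hat{M}_{2 d_{\reg} - 1}\big)$; in particular $\mathcal{G}$ does, which gives $\overline{\solvdeg}_>(\mathcal{F}) \leq 2 d_{\reg} - 1$.

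The first key point is that every \emph{new} basis element produced by Buchberger's algorithm has degree $< d_{\reg}$. Such an element is the remainder of some S-polynomial on division by the current basis, which always contains $\mathcal{B}$; by construction of the remainder, none of its monomials is divisible by any $\LM_>(f)$ with $f \in \mathcal{B}$. But by \Cref{Prop: degree of regularity ideal basis}(2) every monomial of degree $\geq d_{\reg}$ \emph{is} divisible by such a leading monomial, so every monomial of the remainder---hence the remainder itself, as $>$ is degree compatible---has degree $< d_{\reg}$. Thus throughout the computation the leading monomials have degree $\leq d_{\reg}$ for the elements of $\mathcal{B}$ and $\leq d_{\reg} - 1$ for the new elements.

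The second key point is a degree bound on the S-polynomials that are actually formed. For a pair $f, g$ from $\mathcal{B}$ one has $\degree{\lcm\big(\LM_>(f), \LM_>(g)\big)} \leq \degree{f} + \degree{g} \leq 2 d_{\reg}$, and equality forces $\degree{f} = \degree{g} = d_{\reg}$ together with $\gcd\big(\LM_>(f), \LM_>(g)\big) = 1$---exactly the pairs discarded by the product criterion. Every remaining processed pair involves at least one new element, so the lcm of its two leading monomials has degree at most $d_{\reg} + (d_{\reg} - 1) = 2 d_{\reg} - 1$. Hence whenever an S-polynomial $S(f,g) = u f - v g$ is formed, the two rows $u f$ and $v g$ have degree $\degree{\lcm\big(\LM_>(f), \LM_>(g)\big)} \leq 2 d_{\reg} - 1$, and $S(f,g)$ itself has degree $\leq 2 d_{\reg} - 2$ (the Semaev \& Tenti bound). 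Each reduction step replaces a polynomial $p$ of degree $\leq 2 d_{\reg} - 2$ by $p - w h$ with $h$ in the current basis and $\degree{w h} = \degree{p}$, so it also stays in degree $\leq 2 d_{\reg} - 2$. By induction on the steps of the algorithm, starting from $\mathcal{B} \subseteq \rowspace\big(\hat{M}_{2 d_{\reg} - 1}\big)$, every polynomial computed is a $k$-linear combination of monomial multiples of degree $\leq 2 d_{\reg} - 1$ of polynomials already obtained, hence again lies in $\rowspace\big(\hat{M}_{2 d_{\reg} - 1}\big)$; in particular $\mathcal{G}$ does.

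I expect the main obstacle to be the bookkeeping in the last step: one must make precise that $\rowspace(\hat{M}_d)$, being the fixed point of the Macaulay-matrix-plus-Gaussian-elimination iteration, is closed under multiplying an already obtained polynomial by a monomial and taking $k$-linear combinations \emph{provided all intermediate polynomials remain in degree $\leq d$}, and one must track where one extra degree is unavoidably spent. The latter is exactly why the bound is $2 d_{\reg} - 1$ rather than the $2 d_{\reg} - 2$ that bounds the S-polynomials themselves---forming $S(f,g) = u f - v g$ costs the degree of $\lcm\big(\LM_>(f), \LM_>(g)\big)$, one more than $\degree{S(f,g)}$---while the product criterion applied to the degree-$d_{\reg}$ pairs of $\mathcal{B}$ is precisely what keeps that degree from reaching $2 d_{\reg}$.
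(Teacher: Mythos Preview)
Your proposal is correct and follows essentially the same approach as the paper's proof: construct the basis $\mathcal{B}$ of \Cref{Prop: degree of regularity ideal basis}, run Buchberger's algorithm with the product criterion on $\mathcal{B}$, and show that every S-polynomial that is actually formed, together with all its reduction steps, stays in degree $\leq 2 d_{\reg} - 1$ and hence in $\rowspace(\hat{M}_{2 d_{\reg}-1})$. You are in fact more explicit than the paper about the inductive bookkeeping---handling pairs that involve newly added basis elements and spelling out the closure of $\rowspace(\hat{M}_d)$ under monomial multiplication in bounded degree---points the paper leaves largely implicit.
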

    \begin{proof}
        We abbreviate $d_{\reg} (\mathcal{F}) = d_{\reg}$.
        Let $\mathcal{B} = \{ g_1, \dots, g_t \}$ be the ideal basis from \Cref{Prop: degree of regularity ideal basis} for $(\mathcal{F})$.
        By assumption, we have that $\mathcal{F} \subset (\mathcal{B})_{\leq d_{\reg}}$, and by construction $\mathcal{B} \subset \rowspace \big( M_{d_{\reg}} (\mathcal{F}) \big)$.
        Starting from $\mathcal{B}$ we compute a $>$-Gr\"obner basis via Buchberger's algorithm, see \cite[Chapter~2~\S 7]{Cox-Ideals}.
        Let $g_i, g_j \in \mathcal{B}$, we consider their $>$-S-polynomial
        \begin{equation*}
            S_> (g_i, g_j) = \frac{x^\gamma}{\LM_> (g_i)} \cdot g_i - \frac{x^\gamma}{\LM_> (g_j)} \cdot g_j,
        \end{equation*}
        where $x^\gamma = \lcm \big( \LM_> (g_i), \LM_> (g_j) \big)$.
        Note that by \cite[Chapter~2~\S 9~Proposition~4]{Cox-Ideals} we only have to consider the pairs with $\gcd \big( \LM_> (g_i), \LM_> (g_j) \big) \neq 1$.
        Since $\LM_> (g_i)$ and $\LM_> (g_j)$ must coincide in at least one variable and their degree is $\leq d_{\reg}$ we can conclude that
        \begin{equation*}
            \degree{\frac{x^\gamma}{\LM_> (g_i)} \cdot g_i}, \degree{\frac{x^\gamma}{\LM_> (g_j)} \cdot g_j} \leq 2 \cdot d_{\reg} - 1.
        \end{equation*}
        After performing division by remainder of the S-polynomial with respect to $\mathcal{B}$ we then also have that the remainder has degree $< d_{\reg}$ since $\big( \LM_> (\mathcal{B}) \big)_d = \big( k [x_1, \dots, x_n] \big)_d$ for all $d \geq d_{\reg}$.
        Therefore, we can construct all S-polynomials within Buchberger's algorithm with non-trivial remainder via polynomials whose degree is $\leq 2 \cdot d_{\reg} - 1$.
        Since Buchberger's algorithm always produces a $>$-Gr\"obner basis we can conclude that $\overline{\solvdeg}_> (\mathcal{F}) \leq 2 \cdot d_{\reg} - 1$. \qed
    \end{proof}
    \begin{cor}
        In the scenario of \Cref{Th: refined complexity estimate}, the largest degree of S-polynomials appearing in Buchberger's algorithm is less than or equal to $2 \cdot d_{\reg} \left( \mathcal{F} \right) - 2$.
    \end{cor}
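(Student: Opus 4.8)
The plan is to reopen the proof of \Cref{Th: refined complexity estimate} and extract the slightly sharper bound by using the one feature of S-polynomials not yet exploited there: an S-polynomial is, by construction, a \emph{difference} of two polynomial multiples whose leading terms cancel. So I would keep the generating set $\mathcal{B}$ furnished by \Cref{Prop: degree of regularity ideal basis} and run Buchberger's algorithm starting from it. The proof of \Cref{Th: refined complexity estimate} already records that every polynomial ever adjoined to the working basis is a non-trivial remainder of an S-polynomial modulo the current basis, and that such remainders have degree strictly less than $d_{\reg}(\mathcal{F})$; the original members of $\mathcal{B}$ have degree $\leq d_{\reg}(\mathcal{F})$. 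Hence, at every stage of the algorithm, every element of the working basis has degree $\leq d_{\reg}(\mathcal{F})$.

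Next I would fix two elements $g_i, g_j$ of the working basis with $\gcd\big(\LM_>(g_i), \LM_>(g_j)\big) \neq 1$ — by \cite[Chapter~2~\S 9~Proposition~4]{Cox-Ideals} these are the only S-pairs that must be processed — and set $x^\gamma = \lcm\big(\LM_>(g_i), \LM_>(g_j)\big)$. Since $>$ is degree compatible, $\LM_>(g_i)$ is a monomial of maximal degree occurring in $g_i$, so
\[
\degree{\frac{x^\gamma}{\LM_>(g_i)} \cdot g_i} = \degree{x^\gamma} = \degree{\frac{x^\gamma}{\LM_>(g_j)} \cdot g_j},
\]
and, because $\LM_>(g_i)$ and $\LM_>(g_j)$ share at least one variable, $\degree{x^\gamma} \leq 2 \cdot d_{\reg}(\mathcal{F}) - 1$. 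By definition of the S-polynomial the degree-$\degree{x^\gamma}$ leading terms of the two summands cancel in $S_>(g_i, g_j)$, hence $\degree{S_>(g_i, g_j)} \leq \degree{x^\gamma} - 1 \leq 2 \cdot d_{\reg}(\mathcal{F}) - 2$.

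Finally I would observe that this estimate is uniform over the whole run of Buchberger's algorithm: the only input to the argument was that both basis elements involved have degree $\leq d_{\reg}(\mathcal{F})$, which holds for every polynomial that is ever present in the working basis, so no S-polynomial of degree exceeding $2 \cdot d_{\reg}(\mathcal{F}) - 2$ is ever constructed, and the reduced Gr\"obner basis is obtained from the output of this run. I do not anticipate a genuine obstacle: the corollary is essentially a bookkeeping refinement of the proof of \Cref{Th: refined complexity estimate}, and the only point requiring a moment's care is that newly adjoined basis elements retain degree $< d_{\reg}(\mathcal{F})$ — which is precisely what that proof already establishes.
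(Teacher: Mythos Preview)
Your argument has a genuine gap at the step ``By definition of the S-polynomial the degree-$\deg(x^\gamma)$ leading terms of the two summands cancel in $S_>(g_i,g_j)$, hence $\deg\big(S_>(g_i,g_j)\big)\le\deg(x^\gamma)-1$.'' What cancels in an S-polynomial is a \emph{single} monomial from each summand --- the leading monomial with respect to $>$ --- not the entire top-degree homogeneous component. A degree-compatible order guarantees that the leading monomial has maximal degree, but it does not guarantee that it is the \emph{only} monomial of that degree. Concretely, with DRL and $x>y$ take $g_i=x^2+y^2$, $g_j=xy+y^2$: then $x^\gamma=x^2y$ has degree $3$, yet
\[
S_>(g_i,g_j)=y\,(x^2+y^2)-x\,(xy+y^2)=y^3-xy^2
\]
still has degree $3=\deg(x^\gamma)$. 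So the inequality $\deg(S)\le\deg(x^\gamma)-1$ fails in general, and the only problematic case is exactly when both $g_i,g_j$ have degree $d_{\reg}(\mathcal{F})$.

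The paper closes this gap by invoking property~(3) of \Cref{Prop: degree of regularity ideal basis}: every $g\in\mathcal{B}$ with $\deg(g)=d_{\reg}(\mathcal{F})$ satisfies $\deg\big(g-\LT_>(g)\big)<d_{\reg}(\mathcal{F})$, i.e.\ its top degree component is a single monomial. Writing $S_>(g_i,g_j)=\frac{x^\gamma}{\LM_>(g_i)}\,\tilde g_i-\frac{x^\gamma}{\LM_>(g_j)}\,\tilde g_j$ with $\tilde g_l=g_l-\LT_>(g_l)$, one then has $\deg(\tilde g_l)\le d_{\reg}(\mathcal{F})-1$ (by property~(3) when $\deg(g_l)=d_{\reg}(\mathcal{F})$, trivially otherwise) and $\deg\!\big(x^\gamma/\LM_>(g_l)\big)\le d_{\reg}(\mathcal{F})-1$ from non-coprimality, giving the bound $2\,d_{\reg}(\mathcal{F})-2$. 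Your observation that all later-adjoined remainders have degree $<d_{\reg}(\mathcal{F})$ is correct and useful, but for S-pairs between two original degree-$d_{\reg}(\mathcal{F})$ elements of $\mathcal{B}$ you still need property~(3); once you insert that, your proof becomes essentially the paper's.
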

    \begin{proof}
        Let us take another look at the S-polynomial
        \begin{align*}
            S_> (g_i, g_j)
            &= \frac{x^\gamma}{\LM_> (g_i)} \cdot g_i - \frac{x^\gamma}{\LM_> (g_j)} \cdot g_j \\
            &= \frac{x^\gamma}{\LM_> (g_i)} \cdot \tilde{g}_i - \frac{x^\gamma}{\LM_> (g_j)} \cdot \tilde{g}_j,
        \end{align*}
        where $x^\gamma = \lcm \big( \LM_> (g_i), \LM_> (g_j) \big)$ and $\tilde{g}_l = g_l - \LM_> (g_l)$ for $l = i, j$.
        Since the leading monomials are not coprime we have that
        \begin{equation*}
            \degree{\frac{x^\gamma}{\LM_> (g_i)}}, \degree{\frac{x^\gamma}{\LM_> (g_j)}} \leq d_{\reg} - 1.
        \end{equation*}
        Moreover, by \Cref{Prop: degree of regularity ideal basis} we have that $\degree{\tilde{g}_i}, \degree{\tilde{g}_j} < d_{\reg}$. \qed
    \end{proof}

    \section{Affine-Derived Polynomial Systems}\label{Sec: affine-derived polynomial systems}
    LWE polynomial systems follow a very special structure.
    To construct one polynomial one starts with a univariate polynomial $f$ and then substitutes a multivariate affine equation $\braket{\mathbf{a}, \mathbf{x}} + b$ into $f$.
    Many properties of LWE polynomial systems solely stem from this substitution, this motivates the following definition.
    \begin{defn}[Affine-derived polynomial systems]
        Let $k$ be a field, let $n, m \geq 1$ be integers, let $g_1, \dots, g_m \in k [x]$ be non-constant polynomials, let $\mathbf{a}_1, \dots, \mathbf{a}_m \in k^n$, and let $b_1, \dots, b_m \in k$.
        In the polynomial ring $k [x_1, \dots, x_n]$, we call
        \begin{align*}
            g_1 \left( \mathbf{a}_1^\intercal \mathbf{x} + b_1 \right) &= 0, \\
            &\dots \\
            g_m \left( \mathbf{a}_m^\intercal \mathbf{x} + b_m \right) &= 0,
        \end{align*}
        where $\mathbf{x} = \left( x_1, \dots, x_n \right)^\intercal$, the affine-derived polynomial system of $g_1, \dots, g_m$ by $(\mathbf{a}_1, b_1), \dots, (\mathbf{a}_m, b_m)$.
        We also abbreviate affine-derived polynomial systems as tuple $\Big( \big( g_1, \mathbf{a}_1, b_1 \big), \dots, \big( g_m, \mathbf{a}_m, b_m \big) \Big)$.
    \end{defn}

    Next let us collect some properties of zero-dimensional affine-derived polynomial systems.
    \begin{thm}\label{Th: affine-derived polynomial system generic generators}
        Let $k$ be a field and let $\bar{k}$ be its algebraic closure, let $n \geq 1$ be an integer, and let $\mathcal{F} = \Big( \big( g_1, \mathbf{a}_1, b_1 \big), \dots, \allowbreak \big( g_n, \mathbf{a}_n, b_n \big) \Big) \subset k [x_1, \dots, x_n]$ be an affine-derived polynomial system.
        Assume that the matrix
        \begin{equation*}
            \mathbf{A} =
            \begin{pmatrix}
                \mathbf{a}_1 & \dots & \mathbf{a}_n
            \end{pmatrix}
            ^\intercal \in k^{n \times n}
        \end{equation*}
        has rank $n$.
        Then
        \begin{enumerate}
            \item LEX and DRL Gr\"obner bases of $\mathcal{F}$ can be computed via an affine transformation.

            \item $\mathcal{F}$ is a $0$-dimensional polynomial system.

            \item $\dim_k \big( k [x_1, \dots, x_n] / (\mathcal{F}) \big) = \prod_{i = 1}^{n} \degree{g}_i$.

            \item Let $\mathcal{G} \subset \bar{k} [x_1, \dots, x_n]$ be such that $\mathcal{F} \subset \mathcal{G}$ and $(\mathcal{G}) \neq (1)$.
            Then $\left( \mathcal{G}^\homog \right)$ is in generic coordinates.
        \end{enumerate}
        If in addition $k$ is a finite field with $q$ elements, and $g_i \mid x^q - x$ for all $1 \leq i \leq n$.
        Then
        \begin{enumerate}[resume]
            \item Any ideal $I \subset k [x_1, \dots, x_n]$ such that $\mathcal{F} \subset I$ is radical.
        \end{enumerate}
    \end{thm}
    \begin{proof}
        For (1), we define new variables via
        \begin{equation*}
            \begin{pmatrix}
                y_1 \\ \vdots \\ y_n
            \end{pmatrix}
            =
            \begin{pmatrix}
                \mathbf{a}_1 & \dots & \mathbf{a}_n
            \end{pmatrix}
            ^\intercal
            \begin{pmatrix}
                x_1 \\ \vdots \\ x_n
            \end{pmatrix}
            +
            \begin{pmatrix}
                b_1 \\ \vdots \\ b_n
            \end{pmatrix}
            ,
        \end{equation*}
        and since the matrix $\mathbf{A}$ has full rank this construction is invertible.
        Then the polynomial system is of the form $g_1 (y_1) = \ldots = g_n (y_n) = 0$, so under any LEX and DRL term order the leading monomials of the polynomials are pairwise coprime, so by \cite[Chapter~2~\S 9~Theorem~3, Proposition~4]{Cox-Ideals} we have found a Gr\"obner basis.

        For (2), follows from \cite[Chapter~5~\S 3~Theorem~6]{Cox-Ideals}.

        For (3), the quotient space dimension can be computed by counting the number of monomials not contained in $\left( y_1^{\degree{g_1}}, \dots, y_n^{\degree{g_n}} \right)$.

        For (4), follows from \Cref{Th: generic generators and highest degree components}.

        For (5), let $F = \left( x_1^q - x_1, \dots, x_n^q - x_n \right) \subset k [x_1, \dots, x_n]$ be the ideal of field equations.
        It is well-known that for any ideal $I \subset k [x_1, \dots, x_n]$ the ideal $I + F$ is radical, see for example \cite[Lemma~3.1.1]{Gao-Counting}.
        Since $g_i \mid x^q - x$ we have for all $1 \leq i \leq n$ that
        \begin{equation*}
            \begin{split}
                \left( \mathbf{a}_i^\intercal \mathbf{x} + c_i \right)^q - \left( \mathbf{a}_i^\intercal \mathbf{x} + c_i \right)
                &= \left( \mathbf{a}_i^\intercal \mathbf{x} \right)^q - \left( \mathbf{a}_i^\intercal \mathbf{x} \right) \\
                &= \sum_{j = 1}^{n} a_{i, j} \cdot \left( x_j^q - x_j \right)
                = \mathbf{a}_i^\intercal
                \begin{pmatrix}
                    x_1^q - x_1 \\
                    \vdots \\
                    x_n^q - x_n
                \end{pmatrix}
                \in (\mathcal{F}).
            \end{split}
        \end{equation*}
        So by invertibility $\mathbf{A}$ we have that $x_i^q - x_i \in (\mathcal{F})$ for all $1 \leq i \leq n$ which proves the claim. \qed
    \end{proof}
    \begin{rem}
        Note that being in generic coordinates also follows from \cite[Remark~13]{Caminata-SolvingPolySystems}.
    \end{rem}

    \begin{cor}\label{Cor: solving degree}
        Let $k$ be an algebraically closed field, let $m > n \geq 1$, let $\mathcal{F} = \Big( \big( g_1, \mathbf{a}_1, b_1 \big), \dots, \allowbreak \big( g_m, \mathbf{a}_m, b_m \big) \Big) \subset k [x_1, \dots, x_n]$ be an affine-derived polynomial system such that $\degree{g_1} \geq \ldots \geq \degree{g_m}$.
        Assume that the matrix
        \begin{equation*}
            \mathbf{A} =
            \begin{pmatrix}
                \mathbf{a}_1 & \dots & \mathbf{a}_m
            \end{pmatrix}
            ^\intercal \in k^{m \times n}
        \end{equation*}
        has rank $n$.
        Then
        \begin{equation*}
            \solvdeg_{DRL} (\mathcal{F}) \leq \sum_{i = 1}^{n + 1} \left( \degree{g_i} - 1 \right) + 1.
        \end{equation*}
        In particular if $d \geq \degree{g_1}$, then
        \begin{equation*}
            \solvdeg_{DRL} (\mathcal{F}) \leq \left( n + 1 \right) \cdot \left( d - 1 \right) + 1
        \end{equation*}
    \end{cor}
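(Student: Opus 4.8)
The plan is to verify the hypotheses of the Macaulay bound, \Cref{Cor: Macauly bound}, and then simply read off the claimed estimate. Since $\mathbf{A} \in k^{m \times n}$ has rank $n$, there exist indices $i_1, \dots, i_n$ for which $\mathbf{a}_{i_1}, \dots, \mathbf{a}_{i_n}$ are linearly independent; write $\mathcal{H} = \Big( \big( g_{i_1}, \mathbf{a}_{i_1}, b_{i_1} \big), \dots, \big( g_{i_n}, \mathbf{a}_{i_n}, b_{i_n} \big) \Big) \subseteq \mathcal{F}$ for the corresponding affinely derived sub-system. Its $n \times n$ coefficient matrix is invertible, so \Cref{Th: affinely derived polynomial system generic generators} applies to $\mathcal{H}$; in particular $\mathcal{H}$, and hence also $\mathcal{F}$, is zero-dimensional.

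I then distinguish two cases to establish genericity of coordinates. If $(\mathcal{F}) \neq (1)$, then with $\mathcal{G} = \mathcal{F}$ the inclusion $\mathcal{H} \subseteq \mathcal{F}$ meets the hypotheses of \Cref{Th: affinely derived polynomial system generic generators}(4), so $\mathcal{F}^\homog$ is in generic coordinates over $k = \bar{k}$, and by \Cref{Def: generic coordinates} this already forces $\big| \mathcal{Z}_+ \big( \mathcal{F}^\homog \big) \big| < \infty$. If instead $(\mathcal{F}) = (1)$, then the affine zero locus of $\mathcal{F}$ is empty, so $\mathcal{Z}_+ \big( \mathcal{F}^\homog \big)$ consists only of points at infinity, i.e.\ of the set $\mathcal{Z}_+ \big( \mathcal{F}^\topcomp \big) \subseteq \mathcal{Z}_+ \big( \mathcal{H}^\topcomp \big)$; after the affine change of coordinates of \Cref{Th: affinely derived polynomial system generic generators}(1) the system $\mathcal{H}^\topcomp$ is cut out by the pure powers $\big( \mathbf{a}_{i_j}^\intercal \mathbf{x} \big)^{\degree{g_{i_j}}}$ of linearly independent linear forms, so $\mathcal{Z}_+ \big( \mathcal{H}^\topcomp \big) = \emptyset$, whence $\big| \mathcal{Z}_+ \big( \mathcal{F}^\homog \big) \big| = 0$ and $\mathcal{F}^\homog$ is in generic coordinates by definition. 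In both cases the hypotheses of \Cref{Cor: Macauly bound} are satisfied.

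Now apply \Cref{Cor: Macauly bound} to $\mathcal{F}$ with $d_i = \degree{g_i}$; these satisfy $d_1 \geq \dots \geq d_m$ by assumption, and $l = \min \{ n + 1, m \} = n + 1$ since $m > n$. This gives
\[
\solvdeg_{DRL} (\mathcal{F}) \leq \reg \big( \mathcal{F}^\homog \big) \leq d_1 + \dots + d_{n + 1} - (n + 1) + 1 = \sum_{i = 1}^{n + 1} \big( \degree{g_i} - 1 \big) + 1 .
\]
For the ``in particular'' statement, if $d \geq \degree{g_1} \geq \degree{g_i}$ for every $i$, one either substitutes into the bound above or invokes the corresponding part of \Cref{Cor: Macauly bound} directly to obtain $\solvdeg_{DRL} (\mathcal{F}) \leq (n + 1)(d - 1) + 1$.

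The computation is essentially bookkeeping on top of \Cref{Th: affinely derived polynomial system generic generators} and \Cref{Cor: Macauly bound}; the only point that requires care is that $\mathcal{F}$ has $m > n$ polynomials, so one cannot feed $\mathcal{F}$ into \Cref{Th: affinely derived polynomial system generic generators} directly. Passing to a full-rank $n$-element sub-system $\mathcal{H}$ and invoking part~(4) — together with the vacuous treatment of the inconsistent case $(\mathcal{F}) = (1)$ — is exactly what makes the genericity hypothesis of the Macaulay bound go through.
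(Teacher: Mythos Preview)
Your proof is correct and follows the same approach as the paper, which simply states that the result follows from \Cref{Th: affinely derived polynomial system generic generators} and the Macaulay bound \Cref{Cor: Macauly bound}. You have carefully filled in the details the paper's one-line proof leaves implicit: extracting a full-rank $n$-element subsystem $\mathcal{H}$ so that \Cref{Th: affinely derived polynomial system generic generators}(4) can be invoked, and separately disposing of the degenerate case $(\mathcal{F}) = (1)$ where part~(4) does not apply.
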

    \begin{proof}
        Follows from \Cref{Th: affine-derived polynomial system generic generators} and the Macaulay bound \Cref{Cor: Macauly bound}. \qed
    \end{proof}

    \subsection{LWE Polynomial Systems}\label{Sec: general LWE}
    Arora \& Ge proposed a noise-free polynomial system to solve the Search-LWE problem \cite{ICALP:AroGe11}.
    If the error is distributed via a Gaussian distribution $\mathcal{N} (0, \sigma)$, then one assumes that the error always falls in the range $[-t \cdot \sigma, t \cdot \sigma]$ for some $t \in \mathbb{Z}$ such that $d = 2 \cdot t + 1 < q$.
    As we saw in \Cref{Equ: Gaussian exponential decrease}, the probability of falling outside this interval decreases exponentially in $t$.
    Therefore, up to some probability, in $\Fq$ the error is then always a root of the polynomial
    \begin{equation}\label{Equ: LWE polynomial}
        f (x) = x \cdot \prod_{i = 1}^{t} \left( x + i \right) \cdot \left( x - i \right) \in \Fq [x].
    \end{equation}
    Since by construction $2 \cdot t + 1 < q$ there cannot exist $1 \leq i < j \leq t$ such that $i \equiv -j \mod q$.
    So $f$ is a square-free polynomial and therefore divides the field equation $x^q - x$.
    For LWE samples $(\mathbf{a}_i, c_i) = \left( \mathbf{a}_i, \mathbf{a}_i^\intercal \mathbf{s} + e_i \right) \in \mathbb{Z}_q^n \times \mathbb{Z}_q$ one then has that in $\Fq [x_1, \dots, x_n]$
    \begin{equation}\label{Equ: LWE equation}
        f  \left( c_i - \mathbf{a}_i^\intercal \mathbf{x} \right) = 0
    \end{equation}
    with probability $\geq 1 - \frac{2}{t \cdot \sqrt{2 \cdot \pi}} \cdot \exp \left( -\frac{t^2}{2} \right)$.
    Given $m$ LWE samples one then constructs $m$ polynomials of the form of \Cref{Equ: LWE equation}, we call this polynomial system the LWE polynomial system $\mathcal{F}_\text{LWE}$.
    Obviously, the LWE polynomial system is an affine-derived polynomial system.
    The failure probability, i.e.\ the probability that at least one error term does not lie in the interval $[-t \cdot \sigma, t \cdot \sigma]$, can be estimated via the union bound
    \begin{equation}\label{Equ: failure probability}
        p_{fail} = m \cdot \prob \left[ \left| X \right| > t \cdot \sigma \right] \leq m \cdot \frac{2}{t \cdot \sqrt{2 \cdot \pi}} \cdot \exp \left( -\frac{t^2}{2} \right).
    \end{equation}
    Moreover, by \Cref{Th: affine-derived polynomial system generic generators} for the polynomial system to be fully determined we have to require that $m \geq n$ and that $n$ sample vectors are linearly independent.

    To devise the complexity of Gr\"obner basis computations we in principle follow the strategy of \cite[\S 5]{EPRINT:ACFP14}.
    We assume that $\sigma = n^\epsilon$, where $0 \leq \epsilon \leq 1$, and let $\theta$ be such that $0 \leq \theta \leq \epsilon \leq 1$.
    We consider sample numbers of the following form
    \begin{equation}
        m_\text{GB} = e^{\gamma_\theta},
    \end{equation}
    where $\gamma_\theta = 2^{2 \cdot \left( \epsilon - \theta \right)}$.
    \begin{lem}[{\cite[Lemma~5]{EPRINT:ACFP14}}]\label{Lem: success probability}
        Let $q, n, \sigma$ be parameters of an LWE instance.
        Let $\left( \mathbf{a}_1, b_1 \right), \dots, \left( \mathbf{a}_m, b_m \right)$ be elements of $\mathbb{Z}_q^n \times \mathbb{Z}$ sampled according to LWE.
        If $t = \sqrt{2 \cdot \log \left( m \right)}$, then the LWE polynomial system vanishes with probability at least
        \begin{equation*}
            p_g = 1 - \sqrt{\frac{1}{\pi \cdot \log \left( m \right)}}.
        \end{equation*}
    \end{lem}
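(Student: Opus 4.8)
The plan is to substitute the prescribed value $t = \sqrt{2 \cdot \log (m)}$ directly into the failure probability bound \Cref{Equ: failure probability} and simplify. Recall that \Cref{Equ: failure probability} was obtained by a union bound over the $m$ samples together with the Gaussian tail estimate \Cref{Equ: Gaussian exponential decrease}: the LWE equation \Cref{Equ: LWE equation} for the $i$-th sample fails to hold precisely when the corresponding error term $e_i \leftarrow \mathcal{N} (0, \sigma)$ falls outside the interval $[-t \cdot \sigma, t \cdot \sigma]$, which by \Cref{Equ: Gaussian exponential decrease} happens with probability at most $\frac{2}{t \cdot \sqrt{2 \cdot \pi}} \cdot \exp \left( -\frac{t^2}{2} \right)$, so that $p_{fail} \leq m \cdot \frac{2}{t \cdot \sqrt{2 \cdot \pi}} \cdot \exp \left( -\frac{t^2}{2} \right)$.

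The key computation is then the following. With $t = \sqrt{2 \cdot \log (m)}$ we have $\frac{t^2}{2} = \log (m)$, hence $\exp \left( -\frac{t^2}{2} \right) = \exp \left( -\log (m) \right) = \frac{1}{m}$. Plugging this into \Cref{Equ: failure probability} yields
\[
    p_{fail} \leq m \cdot \frac{2}{\sqrt{2 \cdot \log (m)} \cdot \sqrt{2 \cdot \pi}} \cdot \frac{1}{m}
    = \frac{2}{\sqrt{4 \cdot \pi \cdot \log (m)}}
    = \frac{1}{\sqrt{\pi \cdot \log (m)}}
    = \sqrt{\frac{1}{\pi \cdot \log (m)}}.
\]
Therefore the LWE polynomial system vanishes with probability $p_g = 1 - p_{fail} \geq 1 - \sqrt{\frac{1}{\pi \cdot \log (m)}}$, which is the claimed bound.

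There is essentially no obstacle here: the statement is a direct specialization of \Cref{Equ: failure probability}, and the only steps are the arithmetic simplification above. The points worth a remark are purely bookkeeping: one should take $t$ to be (the ceiling of) $\sqrt{2 \cdot \log (m)}$ so that the shifts in the polynomial $f$ of \Cref{Equ: LWE polynomial} are integers, and the instance must satisfy $2 \cdot t + 1 < q$ so that $f$ is square-free and divides $x^q - x$; both hold for the parameter ranges under consideration, so they do not affect the estimate.
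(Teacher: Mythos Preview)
Your proof is correct and is exactly the intended derivation: substitute $t = \sqrt{2 \cdot \log(m)}$ into the union-bound failure estimate \Cref{Equ: failure probability} and simplify. The paper does not supply its own proof of this lemma---it is quoted from \cite[Lemma~5]{EPRINT:ACFP14}---but the failure-probability bound \Cref{Equ: failure probability} is precisely the ingredient that makes your computation go through, so there is nothing to add.
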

    By \cite[Remark~1]{EPRINT:ACFP14} $m \in \mathcal{O} (n)$ implies that $p_g \in 1 - o (1)$.

    Therefore, we can deduce the degree $D_{GB}$ required for $m_\text{GB} = e^{\gamma_\theta}$ equations in the LWE polynomial system.
    By the previous lemma, we have to fix $t_\text{GB} = \sqrt{2 \cdot \log \left( m_\text{GB} \right)} = \sqrt{2 \cdot \gamma_\theta}$, so
    \begin{equation}\label{Equ: degree LWE system}
        \begin{split}
            D_\text{GB}
            &= 2 \cdot \sqrt{2 \cdot \log \left( m_\text{GB} \right)} \cdot \sigma + 1 \\
            &\in \mathcal{O} \left( \sqrt{\log \left( m_\text{GB} \right) \cdot \sigma} \right)
            = \mathcal{O} \left( \sqrt{\gamma_\theta} \cdot \sigma \right)
            = \mathcal{O} \left( n^{2 \cdot \epsilon - \theta} \right)
            = \mathcal{O} \left( \gamma_\theta \cdot n^\theta \right).
        \end{split}
    \end{equation}

    \begin{thm}\label{Th: LWE complexity bounds}
        Let $q, n \geq 2, \sigma = \sqrt{\frac{n}{2 \cdot \pi}}$ be parameters of an LWE instance.
        Let $m_\text{GB} = e^\frac{\pi \cdot n}{4}$, and let $\left( \mathbf{a}_i, b_i \right)_{1 \leq i \leq m_\text{GB}}$ be elements of $\Fqn \times \Fq$ sampled according to LWE.
        If the matrix $\mathbf{A} =
        \begin{pmatrix}
            \mathbf{a}_1 & \dots & \mathbf{a}_m
        \end{pmatrix}
        ^\intercal$ has rank $n$, then a linear algebra-based Gr\"obner basis algorithm that computes a DRL Gr\"obner basis has time complexity
        \begin{equation*}
            \mathcal{O} \left( n \cdot 2^{\omega \cdot 2^\frac{1}{\log \left( 2 \right)} \cdot n^{2 - \frac{1}{\log \left( 4 \right)}} + \frac{\pi \cdot \log_2 \left( e \right)}{4} \cdot n} \right)
        \end{equation*}
        and memory complexity
        \begin{equation*}
            \mathcal{O} \left( n \cdot 2^{2^{1 + \frac{1}{\log \left( 2 \right)}} \cdot n^{2 - \frac{1}{\log \left( 4 \right)}} + \frac{\pi \cdot \log_2 \left( e \right)}{4} \cdot n} \right).
        \end{equation*}
        The algorithm has success probability $\geq 1 - \frac{2}{\pi \cdot \sqrt{n}}$.
    \end{thm}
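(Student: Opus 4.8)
The plan is to chain the Macaulay bound for affinely derived systems together with the generic complexity estimates already collected in the paper. The three pillars are: (a) \Cref{Th: affinely derived polynomial system generic generators} together with \Cref{Cor: solving degree}, which control $\solvdeg_{DRL}$ of an LWE system with $m>n$ linearly independent samples in terms of the degree of the Arora--Ge polynomial; (b) the cost estimate \Cref{Equ: complexity estimate} for Gaussian elimination on Macaulay matrices; and (c) the binomial bound \Cref{Prop: general complexity estimation}, with \Cref{Lem: success probability} supplying the probabilistic part.

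First I would pin down the degree $D_\text{GB}$ of the univariate LWE polynomial. By \Cref{Lem: success probability} one sets $t_\text{GB}=\sqrt{2\log(m_\text{GB})}$, so with $m_\text{GB}=e^{\pi n/4}$ one has $t_\text{GB}=\sqrt{\pi n/2}$, and substituting $\sigma=\sqrt{n/(2\pi)}$ into \Cref{Equ: degree LWE system} gives
\begin{equation*}
    D_\text{GB} = 2\,t_\text{GB}\,\sigma + 1 = 2\sqrt{\tfrac{\pi n}{2}\cdot\tfrac{n}{2\pi}} + 1 = n+1,
\end{equation*}
the flooring implicit in $\lfloor t_\text{GB}\sigma\rfloor$ only decreasing the degree. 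Since $\mathbf{A}$ has rank $n$ and $m_\text{GB}>n$, the system $\mathcal{F}_\text{LWE}$ is a fully determined affinely derived system; on the event that all errors lie in $[-t_\text{GB}\sigma,t_\text{GB}\sigma]$ the secret is a common zero, hence $(\mathcal{F}_\text{LWE})\neq(1)$ and \Cref{Th: affinely derived polynomial system generic generators} applies over $\overline{\Fq}$. (The passage to $\overline{\Fq}$ is cosmetic: the row echelon form of a Macaulay matrix with entries in $\Fq$ already lies over $\Fq$, so $\solvdeg_{DRL}$ does not change.) \Cref{Cor: solving degree} then gives
\begin{equation*}
    \solvdeg_{DRL}\left(\mathcal{F}_\text{LWE}\right) \leq (n+1)\left(D_\text{GB}-1\right)+1 = n^2+n+1.
\end{equation*}

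Abbreviating $d := n^2+n+1$, I would next run the cost estimate. Gaussian elimination on the Macaulay matrix $M_{\leq d}$ of $\mathcal{F}_\text{LWE}$ produces a DRL Gröbner basis; this matrix has $\binom{n+d}{d}=\binom{(n+1)^2}{n^2+n+1}=\binom{(n+1)^2}{n}$ columns and at most $m_\text{GB}\binom{(n+1)^2}{n}$ rows, so by \Cref{Equ: complexity estimate} the running time is $\mathcal{O}(m_\text{GB}\,d\,\binom{(n+1)^2}{n}^{\omega})$ and the memory is $\mathcal{O}(m_\text{GB}\binom{(n+1)^2}{n}^{2})$. Now apply \Cref{Prop: general complexity estimation} with $\nu:=n+1$ variables and $p(\nu):=\nu^2-\nu+1=n^2+n+1$, which satisfies $p(\nu)\geq\nu-1$, once with $\alpha=\omega\geq 2$ and once with $\alpha=2$: one obtains $\binom{(n+1)^2}{n}^{\alpha}\in\mathcal{O}(\tfrac1n\,2^{\alpha(4n(n^2+n+1)/(n+1)^{2(2-\log 4)})^{1/\log 4}})$, and using the identities $4^{1/\log 4}=e=2^{1/\log 2}$ together with a Bernoulli estimate such as $(n+1)^{4-2\log 4}\geq n^{4-2\log 4}(1+(4-2\log 4)/n)$ one checks that the exponent is $\leq\alpha\cdot 2^{1/\log 2}\cdot n^{2-1/\log 4}$ for all large $n$. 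Substituting $m_\text{GB}=2^{(\pi\log_2 e/4)\,n}$, $d=\mathcal{O}(n^2)$, and the $\tfrac1n$ factor back into the two bounds yields precisely the claimed time complexity $\mathcal{O}(n\cdot 2^{\omega\cdot 2^{1/\log 2}\cdot n^{2-1/\log 4}+(\pi\log_2 e/4)\,n})$ and memory complexity $\mathcal{O}(n\cdot 2^{2^{1+1/\log 2}\cdot n^{2-1/\log 4}+(\pi\log_2 e/4)\,n})$.

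Finally, by \Cref{Lem: success probability} with $t_\text{GB}=\sqrt{2\log(m_\text{GB})}$ the event that every error term lies in $[-t_\text{GB}\sigma,t_\text{GB}\sigma]$ — on which the above applies and the secret is a zero of $\mathcal{F}_\text{LWE}$, hence recoverable from the Gröbner basis — has probability $\geq 1-\sqrt{1/(\pi\log(m_\text{GB}))}=1-\sqrt{4/(\pi^2 n)}=1-\tfrac{2}{\pi\sqrt n}$, while the cost bounds hold unconditionally since they only involve running Gaussian elimination up to the fixed degree $d$. The one step that needs care is isolating the constant $2^{1/\log 2}$ (rather than a larger one) in the exponent coming out of \Cref{Prop: general complexity estimation}: this reduces to the inequality $n(n^2+n+1)\leq n^{2\log 4-1}(n+1)^{4-2\log 4}$, valid for $n$ large, which is all that an asymptotic $\mathcal{O}$-statement requires; the binomial identity $\binom{(n+1)^2}{n^2+n+1}=\binom{(n+1)^2}{n}$, the evaluation of $D_\text{GB}$, and the bookkeeping with $\log_2 e$ and $\log 4$ are routine.
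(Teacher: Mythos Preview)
Your proof is correct and follows the same route as the paper: compute $D_{\text{GB}}=n+1$, bound the solving degree by $n^2+n+1$ via \Cref{Cor: solving degree}, plug into \Cref{Equ: complexity estimate}, simplify the resulting binomial via \Cref{Prop: general complexity estimation}, and read off the success probability from \Cref{Lem: success probability}. The only cosmetic difference is that the paper applies \Cref{Prop: general complexity estimation} directly in $n$ variables with $p(n)=n^2+n+1$ and simplifies the exponent via the clean identities $(n-1)(n^2+n+1)=n^3-1<n^3$ and $(n^2)^{2-\log 4}\le(n^2+2n)^{2-\log 4}$ (valid for all $n\ge 1$), rather than your reparametrisation $\nu=n+1$ and asymptotic Bernoulli estimate.
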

    \begin{proof}
        As in \Cref{Lem: success probability} let $t = \sqrt{2 \cdot \log \left( m_\text{GB} \right)}$.
        By our assumptions and \Cref{Equ: degree LWE system} we have that
        \begin{equation*}
            D_\text{GB}
            = 2 \cdot \sqrt{2 \cdot \log \left( m_\text{GB} \right)} \cdot \sigma + 1
            = 2 \cdot \sqrt{2 \cdot \frac{\pi \cdot n}{4}} \cdot \sqrt{\frac{n}{2 \cdot \pi}} + 1
            = n + 1.
        \end{equation*}
        Since the matrix $\mathbf{A}$ has full rank we can apply \Cref{Cor: solving degree} to estimate the solving degree of the LWE polynomial system
        \begin{equation*}
            \solvdeg_{DRL} \left( \mathcal{F}_\text{LWE} \right) \leq \left( n + 1 \right) \cdot \left( D_\text{GB} - 1 \right) + 1 = n^2 + n + 1.
        \end{equation*}
        Now we apply \Cref{Prop: general complexity estimation} with $p (n) = n^2 + n + 1$, then we perform the additional estimations
        \begin{align*}
            n^3 - 1 &< n^3, \\
            \left( n^2 \right)^{2 - \log \left( 4 \right)} &\leq \left( n^2 + 2 \cdot n \right)^{2 - \log \left( 4 \right)},
        \end{align*}
        for all $n \geq 1$.
        Also note that $2 - \log \left( 4 \right) \approx 0.6137$, so we can divide by the expressions in the last inequality without affecting the sign.
        Therefore,
        \begin{equation*}
            \big( n + p (n) - 1 \big) \cdot H_2 \left( \frac{p (n)}{n + p (n) - 1} \right) \leq 2^\frac{1}{\log \left( 2 \right)} \cdot n^{2 - \frac{1}{\log \left( 4 \right)}}.
        \end{equation*}
        The final claim then follows by converting $m_\text{GB}$ into base $2$. \qed
    \end{proof}

    Numerically we have that $2 - \frac{1}{\log \left( 4 \right)} \approx 1.2787$.

    \subsection{LWE With Small Errors}\label{Sec: small error LWE}
    Suppose that the LWE error distribution $\chi$ can only take values in $\mathcal{E} \subset \Fq$ with $| \mathcal{E} | = D \ll \sqrt{n}$.
    Then the error polynomial is
    \begin{equation}
        f (x) = \prod_{e \in \mathcal{E}} (x - e)
    \end{equation}
    of degree $D$.
    Moreover, for any LWE sample $(\mathbf{a}, b)$ we have $f \left( b - \mathbf{a}^\intercal \mathbf{x} \right) = 0$ with probability $1$.
    Analog to \Cref{Th: LWE complexity bounds} we can estimate the complexity of a DRL Gr\"obner basis computation.
    \begin{thm}\label{Th: small error LWE complexity bounds}
        Let $q$ be a prime, and let $m > n \geq 2$ be integers.
        Let $\left( \mathbf{a}_i, b_i \right)_{1 \leq i \leq m}$ be elements of $\Fqn \times \Fq$ sampled according to a LWE distribution $A_{\mathbf{s}, \chi}$ such that the error distribution that $\chi$ can take at most $D$ values.
        If the matrix $\mathbf{A} =
        \begin{pmatrix}
            \mathbf{a}_1 & \dots & \mathbf{a}_m
        \end{pmatrix}
        ^\intercal$ has rank $n$, then a linear algebra-based Gr\"obner basis algorithm that computes a DRL Gr\"obner basis has time complexity
        \begin{equation*}
            \mathcal{O} \left( m \cdot (D - 1) \cdot n \cdot 2^{\omega \cdot \left( 8 \cdot D^{\log \left( 4 \right) - 1} \right)^\frac{1}{\log \left( 4 \right)} \cdot n} \right)
        \end{equation*}
        and memory complexity
        \begin{equation*}
            \mathcal{O} \left( m \cdot (D - 1) \cdot n \cdot 2^{2 \cdot \left( 8 \cdot D^{\log \left( 4 \right) - 1} \right)^\frac{1}{\log \left( 4 \right)} \cdot n} \right).
        \end{equation*}
    \end{thm}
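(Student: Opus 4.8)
The plan is to mirror the proof of \Cref{Th: LWE complexity bounds} step by step, but with a fixed error polynomial of degree $D$ rather than an error polynomial whose degree grows with $n$ and $\sigma$. First I would observe that since $\chi$ takes at most $D$ values, the error polynomial $f(x) = \prod_{e \in \mathcal{E}}(x-e)$ has degree $\deg(f) = D$, and for every LWE sample $(\mathbf{a}_i, b_i)$ the equation $f(b_i - \mathbf{a}_i^\intercal \mathbf{x}) = 0$ holds with probability $1$; hence $\mathcal{F}_\text{LWE}$ is an affinely derived polynomial system in which every constituent univariate polynomial has degree $D$. Since $\mathbf{A}$ has rank $n$ and $m > n$, \Cref{Cor: solving degree} applies with $d = D$, giving
\begin{equation*}
    \solvdeg_{DRL}\left( \mathcal{F}_\text{LWE} \right) \leq (n+1)\cdot(D-1) + 1.
\end{equation*}

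Next I would feed this solving-degree bound into the complexity estimate \Cref{Equ: complexity estimate}, which for a system of $m$ polynomials computed up to degree $d$ costs $\mathcal{O}\!\left(m \cdot d \cdot \binom{n+d-1}{d}^\omega\right)$ time (and the analogous expression with exponent $1$ in place of $\omega$ for memory, since memory is the size of the largest Macaulay matrix). Setting $d = (n+1)(D-1)+1$, I would note $d \le n\cdot D$ for $n \ge 2$ and $D \ge 2$, and more precisely that $d - 1 = (n+1)(D-1)$, so $n + d - 1 = n + (n+1)(D-1) = nD + D - 1$. The binomial coefficient is then estimated via \Cref{Prop: general complexity estimation}(2) with $p(n) = (n+1)(D-1)$: the entropy term yields
\begin{equation*}
    \big(n+d-1\big)\cdot H_2\!\left(\frac{p(n)}{n+d-1}\right) \leq \big(n+d-1\big)\cdot\left(4\cdot \frac{(n-1)\,p(n)}{(n+d-1)^2}\right)^{\frac{1}{\log(4)}},
\end{equation*}
and I would simplify the right-hand side using $(n-1)\,p(n) = (n-1)(n+1)(D-1) \le n D \cdot (n+d-1)$ together with $n + d - 1 \ge nD$, so that the base of the power collapses to $4D/(nD)\cdot\text{(bounded factor)}$ and, after taking the $\frac{1}{\log 4}$-th power, leaves exactly the factor $n$ outside and the exponent $\big(8 D^{\log(4)-1}\big)^{1/\log(4)}\cdot n$ inside the $2^{(\cdot)}$, once the constant $8$ and the rearrangement $D/D^{2-\log 4} = D^{\log 4 - 1}$ are tracked through; multiplying by $\omega$ (resp.\ $2$) and by the prefactor $m\cdot d \in \mathcal{O}(m\cdot(D-1)\cdot n)$ gives the two claimed bounds.

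The main obstacle is the bookkeeping in the last step: making the messy expression $\binom{nD+D-1}{(n+1)(D-1)}^{\omega}$ collapse into the clean form $n\cdot 2^{\omega\,(8 D^{\log 4 - 1})^{1/\log 4}\, n}$ requires choosing the right intermediate inequalities so that all the $(n\pm 1)$ and $(D\pm 1)$ corrections are absorbed without changing the leading constant, exactly as the analogous (but more delicate) estimation was carried out in \Cref{Th: LWE complexity bounds}; here $D$ is treated as a constant, which makes the asymptotics in $n$ genuinely single-exponential and the calculation somewhat more forgiving. The factor $n$ surviving outside the exponential comes from the $\frac{1}{n-1}$ in \Cref{Prop: general complexity estimation} combined with the polynomial prefactor $d \in \mathcal{O}(nD)$, and I would make sure the success probability is $1$ (no failure term) since, unlike in \Cref{Th: LWE complexity bounds}, the error polynomial vanishes at the error with certainty.
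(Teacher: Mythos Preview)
Your overall strategy is exactly the paper's: observe that the error polynomial has degree $D$, invoke \Cref{Cor: solving degree} to get $\solvdeg_{DRL}(\mathcal{F}_\text{LWE})\le (n+1)(D-1)+1$, plug this into \Cref{Equ: complexity estimate}, and then control the resulting binomial coefficient via \Cref{Prop: general complexity estimation} and ad-hoc inequalities. The paper also sets $p(n)=(n+1)(D-1)+1$ (your $p(n)$ is off by $1$, a harmless slip).

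The only genuine gap is in the specific simplifying inequalities you propose. Your bound $(n-1)\,p(n)\le nD\cdot(n+d-1)$ introduces an extra factor $(n+d-1)$ in the numerator; after dividing by $(n+d-1)^{2-\log 4}$ you are left with $4nD\cdot(n+d-1)^{\log 4-1}$, and since $\log 4-1>0$ the companion bound $n+d-1\ge nD$ goes the wrong way. Tracing this through yields an exponent of order $nD$, not the claimed $n\cdot D^{(\log 4-1)/\log 4}$. The paper instead keeps numerator and denominator separated via
\[
(n-1)\,p(n)=(n^2-1)(D-1)+(n-1)\le 2n^2 D,\qquad (nD)^{2-\log 4}\le (n+p(n)-1)^{2-\log 4},
\]
so that
\[
\frac{4(n-1)\,p(n)}{(n+p(n)-1)^{2-\log 4}}\le \frac{8n^2D}{(nD)^{2-\log 4}}=8\,n^{\log 4}\,D^{\log 4-1},
\]
whose $\tfrac{1}{\log 4}$-th power is precisely $(8D^{\log 4-1})^{1/\log 4}\cdot n$. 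Swap in these two inequalities and your plan goes through verbatim. (Your accounting for the polynomial prefactor is also slightly off: the paper records that $\frac{p(n)}{n-1}\in\mathcal{O}(1)$, so after absorbing the $\frac{1}{n-1}$ from \Cref{Prop: general complexity estimation} the surviving $m\cdot(D-1)\cdot n$ in the statement is simply a loose upper bound on $m\cdot d$.)
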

    \begin{proof}
        The LWE polynomial has degree $D$, therefore by \Cref{Cor: solving degree}
        \begin{equation*}
            \solvdeg_{DRL} (\mathcal{F}_\text{LWE}) \leq (n + 1) \cdot (D - 1) + 1.
        \end{equation*}
        We apply \Cref{Prop: general complexity estimation} with $p (n) = (n + 1) \cdot (D - 1) + 1$ and do the estimations
        \begin{equation*}
            \frac{(n + 1) \cdot (D - 1) + 1}{n - 1} = \frac{(n - 1) \cdot (D - 1) + 2 \cdot D - 1}{n - 1} \in \mathcal{O} (1),
        \end{equation*}
        for all $n \geq 2$,
        \begin{align*}
            \big( (n + 1) \cdot (D - 1) + 1 \big) \cdot (n - 1) &= \left( n^2 - 1 \right) \cdot  (D - 1) + n - 1 \leq 2 \cdot n^2 \cdot D, \\
            \left( n \cdot D \right)^{2 - \log \left( 4 \right)} &\leq \left( n \cdot D + D - 1 \right)^{2 - \log \left( 4 \right)},
        \end{align*}
        for all $n \geq 1$. \qed
    \end{proof}

    \subsection{LWE With Small Secrets}\label{Sec: small secret LWE}
    Suppose that the entries of the secret $\mathbf{s}$ of a LWE distribution $A_{\mathbf{s}, \chi}$ can only take values in $\mathcal{S} \subset \Fq$ with $| \mathcal{S} | = D$.
    Then for $1 \leq i \leq n$ we can add the equations
    \begin{equation}\label{Equ: secret equations}
        f_i (x_i) = \prod_{s \in \mathcal{S}} (x_i - s)
    \end{equation}
    to the LWE polynomial system.
    Trivially, $f_1, \dots, f_n$ is a DRL Gr\"obner basis, so the monomials $g \notin \inid_{DRL} (f_1, \dots, f_n)$ have degree $\leq n \cdot (D - 1)$.
    Moreover, any univariate polynomial is trivially affine-derived.
    \begin{thm}\label{Th: small secret LWE complexity bounds}
        Let $q$ be a prime, and let $m > n \geq 2$ be integers.
        Let $\left( \mathbf{a}_i, b_i \right)_{1 \leq i \leq m}$ be elements of $\mathbb{Z}_q^n \times \mathbb{Z}_q$ sampled according to a LWE distribution $A_{\mathbf{s}, \chi}$ such that the components of the secret can only take values in a set of size $D$.
        If the error polynomial $f$ has $\degree{f} > D$, then a linear algebra-based Gr\"obner basis algorithm that computes a DRL Gr\"obner basis has time complexity
        \begin{equation*}
            \mathcal{O} \left( m \cdot (D - 1) \cdot n^2 \cdot 2^{\omega \cdot 2^\frac{1}{\log \left( 2 \right)} \cdot (D - 1)^{1 - \frac{1}{\log \left( 4 \right)}} \cdot n^{2 - \frac{1}{\log \left( 4 \right)}}} \right)
        \end{equation*}
        and memory complexity
        \begin{equation*}
            \mathcal{O} \left( m \cdot (D - 1)^2 \cdot n^3 \cdot 2^{2^{1 + \frac{1}{\log \left( 2 \right)}} \cdot (D - 1)^{1 - \frac{1}{\log \left( 4 \right)}} \cdot n^{2 - \frac{1}{\log \left( 4 \right)}}} \right).
        \end{equation*}
    \end{thm}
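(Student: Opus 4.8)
The plan is to follow the template of \Cref{Th: LWE complexity bounds} and \Cref{Th: small error LWE complexity bounds}: build a generating set of the LWE ideal whose DRL solving degree is bounded by a concrete polynomial $p (n)$ in $n$ and $D$, and then feed $p (n)$ into \Cref{Prop: general complexity estimation} and the complexity estimate \Cref{Equ: complexity estimate}. Everything except the solving degree bound is then mechanical, so the new work is to exploit the secret equations \Cref{Equ: secret equations} to lower the degrees. First I record the structural facts: the univariate polynomials $f_1, \dots, f_n$ of \Cref{Equ: secret equations} form an affinely derived system whose coefficient matrix is the identity, so \Cref{Th: affinely derived polynomial system generic generators} applies to $\{ f_1, \dots, f_n \}$ — it is $0$-dimensional, it is by itself a DRL Gr\"obner basis with $\inid_{DRL} \left( f_1, \dots, f_n \right) = \left( x_1^D, \dots, x_n^D \right)$ (hence every standard monomial has degree $\leq n \cdot (D - 1)$), and by part (4) of that theorem every $\mathcal{G} \supseteq \{ f_1, \dots, f_n \}$ with $(\mathcal{G}) \neq (1)$ has $\mathcal{G}^\homog$ in generic coordinates with $\left| \mathcal{Z}_+ \left( \mathcal{G}^\homog \right) \right| < \infty$. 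Since the true secret $\mathbf{s}$ is a common zero of $f_1, \dots, f_n$ (its components lie in $\mathcal{S}$) and of the Arora--Ge equations (the errors are roots of $f$), the full system $\mathcal{F}$ obtained by adjoining $f_1, \dots, f_n$ to the Arora--Ge equations satisfies $(\mathcal{F}) \neq (1)$ and $\dim (\mathcal{F}) = 0$.

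Next I lower the degrees. I replace each Arora--Ge equation $f \left( b_i - \mathbf{a}_i^\intercal \mathbf{x} \right)$ by its normal form modulo $\{ f_1, \dots, f_n \}$ and call the new system $\widetilde{\mathcal{F}}$. Then $(\widetilde{\mathcal{F}}) = (\mathcal{F})$; the generators $f_1, \dots, f_n$ are still present, so part (4) of \Cref{Th: affinely derived polynomial system generic generators} applies again and $\widetilde{\mathcal{F}}^\homog$ is in generic coordinates with $\left| \mathcal{Z}_+ \left( \widetilde{\mathcal{F}}^\homog \right) \right| < \infty$; and every reduced equation has degree $\leq n \cdot (D - 1)$, since its support misses $\left( x_1^D, \dots, x_n^D \right)$. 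Now the Macaulay bound \Cref{Cor: Macauly bound}, with the degree upper bound $d = n \cdot (D - 1)$ valid for all $m + n > n$ polynomials of $\widetilde{\mathcal{F}}$, gives
\[
\solvdeg_{DRL} \left( \widetilde{\mathcal{F}} \right) \leq (n + 1) \cdot \big( n \cdot (D - 1) - 1 \big) + 1 =: p (n) \in \Theta \left( n^2 \cdot (D - 1) \right).
\]
The hypothesis $\deg f > D$ is exactly what makes this the relevant regime: when $\deg f \leq D$ the Arora--Ge equations already have degree $\leq D \leq n \cdot (D - 1)$ and one is back in the cheaper situation of \Cref{Th: small error LWE complexity bounds}.

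Then I substitute $p (n)$ into \Cref{Prop: general complexity estimation}, with $\alpha = \omega$ for the time bound and $\alpha = 2$ for the memory bound; note $p (n) \geq n - 1$ for $n \geq 2$, $D \geq 2$. Using the elementary estimates $n + p (n) - 1 = \left( n^2 + n \right) \cdot (D - 1) - 1 \geq n^2 \cdot (D - 1)$ and $(n - 1) \cdot p (n) \leq n^3 \cdot (D - 1)$, the exponent simplifies, exactly as in the proof of \Cref{Th: LWE complexity bounds}, to
\[
\left( \frac{4 \cdot (n - 1) \cdot p (n)}{\big( n + p (n) - 1 \big)^{2 - \log (4)}} \right)^{\frac{1}{\log (4)}} \leq 4^{\frac{1}{\log (4)}} \cdot n^{2 - \frac{1}{\log (4)}} \cdot (D - 1)^{1 - \frac{1}{\log (4)}} = 2^{\frac{1}{\log (2)}} \cdot n^{2 - \frac{1}{\log (4)}} \cdot (D - 1)^{1 - \frac{1}{\log (4)}}.
\]
Reading off the time complexity from \Cref{Equ: complexity estimate} with $m + n \in \mathcal{O} (m)$ polynomials and solving degree $\leq p (n)$, and the memory complexity from the size $\mathcal{O} \big( (m + n) \cdot \binom{n + p (n) - 1}{p (n)}^2 \big)$ of the largest Macaulay matrix, yields the two stated bounds after absorbing the remaining polynomial factors in $m$, $n$, $D - 1$ into $\mathcal{O} (\cdot)$.

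I expect the only genuine obstacle to be bookkeeping, at two spots. First, one must verify that the normal-form reduction does not spoil the generic-coordinates and $0$-dimensionality hypotheses — it does not, precisely because $\widetilde{\mathcal{F}} \supseteq \{ f_1, \dots, f_n \}$, so part (4) of \Cref{Th: affinely derived polynomial system generic generators} keeps applying verbatim. Second, the reduction step itself costs essentially the expansion of $f \left( b_i - \mathbf{a}_i^\intercal \mathbf{x} \right)$ and is therefore controlled by $\deg f$; this is dominated by the main Gr\"obner basis computation in all parameter ranges of interest (there $\deg f$ is polylogarithmic in $n$, hence far below $p (n) \in \Theta \left( n^2 \cdot (D - 1) \right)$), but it should be flagged. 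The constant chasing in the last display is routine once one reuses the identities from the proof of \Cref{Th: LWE complexity bounds}, notably $4^{1 / \log 4} = 2^{1 / \log 2}$ and $(2 \log 4 - 1) / \log 4 = 2 - 1 / \log 4$.
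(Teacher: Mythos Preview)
Your proposal is correct and follows essentially the same approach as the paper: reduce the Arora--Ge equations modulo the secret equations $f_1,\dots,f_n$ to cap their degrees at $n\cdot(D-1)$, invoke generic coordinates (via \Cref{Th: affinely derived polynomial system generic generators}\,(4) / \Cref{Th: generic generators and highest degree components}) to apply the Macaulay bound, and then feed the resulting solving-degree polynomial into \Cref{Prop: general complexity estimation} with the same elementary estimates on $(n-1)\cdot p(n)$ and $n+p(n)-1$. Your write-up is in fact slightly more careful than the paper's (you explicitly justify $(\mathcal{F})\neq(1)$ and flag the preprocessing cost), but the argument is the same.
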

    \begin{proof}
        Let $\mathcal{F}_\text{LWE}$ be the affine-derived LWE polynomial system, and let $\mathcal{F}_\mathcal{S}$ be the polynomials that have all possible values of the secret components as zeros, see \Cref{Equ: secret equations}.
        As preprocessing we compute the remainder of all polynomials in $\mathcal{F}_\text{LWE}$ with respect to $\mathcal{F}_\mathcal{S}$ and DRL, then the remainder polynomials can at most have degree $n \cdot (D - 1)$, see \cite[Chapter~2~\S 6~Proposition~1]{Cox-Ideals}.
        Now we join the remainders and $\mathcal{F}_\mathcal{S}$ in a single system $\mathcal{F}$ and start the Gr\"obner basis computation.
        By \Cref{Th: generic generators and highest degree components} this polynomial system is in generic coordinates, therefore
        \begin{equation*}
            \solvdeg_{DRL} \left( \mathcal{F} \right) \leq (n + 1) \cdot \big( n \cdot (D - 1) - 1 \big) + 1.
        \end{equation*}
        Now we apply \Cref{Prop: general complexity estimation} with $p (n) = (n + 1) \cdot n \cdot (D - 1) + 1$ and perform the additional estimations
        \begin{equation*}
            \frac{(n + 1) \cdot \big( n \cdot (D - 1) - 1 \big) + 1}{n - 1} \leq \frac{(D - 1) \cdot (n + 1)^2}{n - 1} \in \mathcal{O} \left( (D - 1) \cdot n \right)
        \end{equation*}
        for all $n \geq 2$, and
        \begin{align*}
            \big( (n + 1) \cdot n \cdot (D - 1) + 1 \big) \cdot (n - 1) &\leq n^3 \cdot (D - 1), \\
            n^2 \cdot (D - 1) &\leq n + (n + 1) \cdot n \cdot (D - 1),
        \end{align*}
        for all $n \geq 1$.
        Then
        \begin{equation*}
            \frac{n^3 \cdot (D - 1)}{\big( n^2 \cdot (D - 1) \big)^{2 - \log \left( 4 \right)}} = n^{2 \cdot \log \left( 4 \right) - 1} \cdot (D - 1)^{\log \left( 4 \right) - 1}
        \end{equation*}
        which proves the claim. \qed
    \end{proof}

    \subsubsection{LWE With Small Secrets \& Small Errors.}
    Lastly, let us shortly analyze the case of small secret small error LWE.
    Suppose that the errors are drawn from a set of size $D_\mathcal{E}$ and that the secrets are drawn from a set of size $D_\mathcal{S}$.
    As for \Cref{Th: small secret LWE complexity bounds} we can compute the DRL remainder of the LWE polynomials with respect to the $n$ univariate polynomials limiting the possible solutions for the secret.
    \begin{itemize}
        \item If $D_\mathcal{E} \gg D_\mathcal{S}$, then we can estimate the degrees of the remainders as $\leq n \cdot (D_\mathcal{S} - 1)$, then we obtain the Macaulay bound
        \begin{equation}
            \solvdeg_{DRL} \left( \mathcal{F} \right) \leq (n + 1) \cdot n \cdot (D_\mathcal{S} - 1) + 1.
        \end{equation}

        \item If $n \cdot (D_\mathcal{S} - 1) \gg D_\mathcal{E} \geq D_\mathcal{S}$, then we can always estimate the degrees of the remainders as $\leq D_\mathcal{E}$, then
        \begin{equation}
            \solvdeg_{DRL} \left( \mathcal{F} \right) \leq (n + 1) \cdot (D_\mathcal{E} - 1) + 1.
        \end{equation}
        \item If $n \cdot (D_\mathcal{S} - 1) \gg D_\mathcal{S} > D_\mathcal{E}$, then we perform a variable transformation so that the LWE polynomials $\mathcal{F}_\text{LWE}$ include $n$ univariate polynomials, i.e.\ we exchange the roles of $\mathcal{F}_\mathcal{S}$ and $\mathcal{F}_\text{LWE}$.
        The degrees of the remainders of $\mathcal{F}_\mathcal{S}$ are then bounded by $\leq D_\mathcal{S}$, and we obtain
        \begin{equation}
            \solvdeg_{DRL} \left( \mathcal{F} \right) \leq n \cdot (D_\mathcal{S} - 1) + D_\mathcal{E}.
        \end{equation}
    \end{itemize}
    So the first case reduces to \Cref{Th: small secret LWE complexity bounds} and the second and the third one to \Cref{Th: small error LWE complexity bounds}, though the third case has a different constant term in the solving degree bound than small error LWE.

    \section{Sub-Exponential Complexity Estimates via the Refined Solving Degree}\label{Sec: sub-exponential complexity estimates}
    In this section we use \Cref{Th: refined complexity estimate} to show that in an ideal scenario general LWE, binary secret LWE and binary error LWE admit sub-exponential Gr\"obner basis algorithms.

    \subsection{LWE With Exponential Many Samples}\label{Sec: LWE sub-exponential}
    For general LWE the lowest achievable degree of regularity is the degree $D$ of the error polynomial.
    In that degree there exist $\binom{n + D - 1}{D}$ many monomials, hence to achieve degree of regularity $m$ the number of samples $m$ has to be at least the aforementioned binomial coefficient.
    \begin{thm}\label{Th: LWE sub-exponential}
        Let $q, n, \sigma$ be parameters of an LWE instance, and let $D = 2 \cdot t \cdot \sigma + 1$ be the degree of the LWE polynomial.
        Let $m \in \mathcal{O} \left( \binom{n + D - 1}{D} \right)$ be such that $d_{\reg} \left( \mathcal{F}_\text{LWE}^\topcomp \right) = D$.
        Then a linear algebra-based Gr\"obner basis algorithm that computes a DRL Gr\"obner basis has time complexity
        \begin{equation*}
            \mathcal{O} \left( D^3 \cdot 2^{(\omega + 3) \cdot 2^\frac{1}{\log \left( 2 \right)} \cdot (2 \cdot D - 1)^\frac{1}{\log \left( 4 \right)} \cdot (n - 1)^{1 - \frac{1}{\log \left( 4 \right)}}} \right)
        \end{equation*}
        and memory complexity
        \begin{equation*}
            \mathcal{O} \left( D^3 \cdot 2^{5 \cdot 2^\frac{1}{\log \left( 2 \right)} \cdot (2 \cdot D - 1)^\frac{1}{\log \left( 4 \right)} \cdot (n - 1)^{1 - \frac{1}{\log \left( 4 \right)}}} \right).
        \end{equation*}
        For $t \to \infty$ the success probability of the algorithm approaches $1$.
    \end{thm}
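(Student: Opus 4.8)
The plan is to chain together three facts already established in the paper: the generalized Semaev--Tenti bound \Cref{Th: refined complexity estimate}, the refined-solving-degree complexity estimate \Cref{Equ: refined complexity estimate}, and the entropy bound for binomial coefficients in \Cref{Prop: general complexity estimation}(2).

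First I would note that every generator of $\mathcal{F}_\text{LWE}$ has degree exactly $D$, and that the degree of regularity only depends on highest-degree components, so the hypothesis reads $d_{\reg}\left(\mathcal{F}_\text{LWE}\right) = D < \infty$ and moreover $d_{\reg}\left(\mathcal{F}_\text{LWE}\right) = D \geq \max_i \degree{f_i}$. \Cref{Th: refined complexity estimate} applied with the DRL term order then gives $\overline{\solvdeg}_{DRL}\left(\mathcal{F}_\text{LWE}\right) \leq 2D - 1$. Feeding $d = 2D - 1$ (and $n$ variables) into \Cref{Equ: refined complexity estimate} yields the time bound $\mathcal{O}\!\left(m \cdot (2D - 1)^3 \cdot \binom{n + 2D - 2}{2D - 1}^{\omega + 2}\right)$. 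Since $\binom{n - 1 + j}{j}$ is increasing in $j$ and $2D - 1 \geq D$, the hypothesis $m \in \mathcal{O}\!\left(\binom{n + D - 1}{D}\right)$ gives $m \in \mathcal{O}\!\left(\binom{n + 2D - 2}{2D - 1}\right)$, so $m$ is absorbed into one further power of the binomial coefficient and the exponent $\omega + 2$ becomes $\omega + 3$.

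It remains to estimate $\binom{n + 2D - 2}{2D - 1}$. I would apply \Cref{Equ: binomial coefficient approximation} with $N = n + 2D - 2$ and $k = 2D - 1$, so $N - k = n - 1$; the polynomial prefactor $\sqrt{\tfrac{n + 2D - 2}{\pi \cdot (2D - 1)(n - 1)}}$ is $\leq 1$ for $n \geq 2$, $D \geq 1$, and the exponent is $(n + 2D - 2) \cdot H_2\!\left(\tfrac{2D - 1}{n + 2D - 2}\right)$. The first inequality of \Cref{Prop: general complexity estimation}(2), taken with $p(n) \equiv 2D - 1$, bounds this by $(n + 2D - 2)^{1 - \frac{2}{\log 4}} \cdot \big(4 (n - 1)(2D - 1)\big)^{\frac{1}{\log 4}}$; since $1 - \tfrac{2}{\log 4} < 0$ and $n + 2D - 2 \geq n - 1$, the first factor is at most $(n - 1)^{1 - \frac{2}{\log 4}}$, and with $4^{1/\log 4} = 2^{1/\log 2}$ the exponent collapses to $2^{\frac{1}{\log 2}} \cdot (2D - 1)^{\frac{1}{\log 4}} \cdot (n - 1)^{1 - \frac{1}{\log 4}}$. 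Raising the resulting estimate to the power $\omega + 3$ and using $(2D - 1)^3 \in \mathcal{O}(D^3)$ produces the claimed time complexity. For the memory complexity I would bound the space by the size of the largest Macaulay matrix occurring in the procedure and run the same estimate with $\omega$ replaced by $2$, so that $\omega + 3$ becomes $5$.

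Finally, for the success probability I would observe that the computation recovers the secret as soon as every error term $e_i$ lies in $[-t\sigma, t\sigma]$, so that all LWE equations of the form \Cref{Equ: LWE equation} hold; by \Cref{Equ: failure probability} this fails with probability at most $m \cdot \tfrac{2}{t \sqrt{2\pi}} \exp\!\left(-\tfrac{t^2}{2}\right)$, and since for fixed $n$ and $\sigma$ the bound $m \in \mathcal{O}\!\left(\binom{n + D - 1}{D}\right)$ is polynomial in $t$ (as $D = 2 t \sigma + 1$), this tends to $0$ as $t \to \infty$. I expect the only delicate part to be the binomial/entropy bookkeeping in the third paragraph — in particular keeping track of the sign of $1 - \tfrac{2}{\log 4}$ and verifying that all polynomial prefactors are dominated by the $D^3$ term and the doubly-exponential factor.
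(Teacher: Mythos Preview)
Your proposal is correct and follows essentially the same route as the paper: apply \Cref{Th: refined complexity estimate} to get $\overline{\solvdeg}_{DRL}\le 2D-1$, plug into \Cref{Equ: refined complexity estimate}, absorb $m\in\mathcal{O}\big(\binom{n+D-1}{D}\big)$ into one extra power of the binomial, and then bound $\binom{n+2D-2}{2D-1}$ via \Cref{Equ: binomial coefficient approximation} together with the entropy inequality of \Cref{Prop: general complexity estimation}(2). The only cosmetic differences are that the paper keeps the entropy estimate in the form $\big(4(2D-1)(n-1)/(n+2D-2)^{2-\log 4}\big)^{1/\log 4}$ and then uses $n-1\le n+2D-2$ with the \emph{positive} exponent $2-\log 4$, whereas you factor out $(n+2D-2)^{1-2/\log 4}$ and use the \emph{negative} exponent $1-2/\log 4$; the two manipulations are algebraically identical. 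For the success probability the paper pushes the estimate a bit further, bounding $\binom{n+D-1}{D}$ by roughly $2^{2\sqrt{Dn}}$ so that $p_{\text{fail}}\in\mathcal{O}\big(\exp(2\log 2\sqrt{2t\sigma n}-t^2/2)\big)$, while your ``polynomial in $t$ versus Gaussian decay'' argument for fixed $n,\sigma$ reaches the same limit more directly.
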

    \begin{proof}
        We can use \Cref{Th: refined complexity estimate} and \Cref{Equ: refined complexity estimate} to estimate the complexity of a linear algebra based Gr\"obner basis algorithm.
        Then
        \begin{equation*}
            \mathcal{O} \left( m \cdot (2 \cdot D - 1)^3 \cdot \binom{n + 2 \cdot D - 2}{2 \cdot D - 1}^{\omega + 2} \right)
            \in \mathcal{O} \left( D^3 \cdot \binom{n + 2 \cdot D - 2}{2 \cdot D - 1}^{\omega + 3} \right).
        \end{equation*}
        To estimate the binomial coefficient we use \Cref{Equ: binomial coefficient approximation} and \cite[Theorem~1.2]{Topsoe-BinaryEntropy}.
        Similar to \Cref{Prop: general complexity estimation}, the term in the square root is estimated by $\mathcal{O} \left( 1 \right)$.
        For the entropy term we have that
        \begin{align*}
            (n + 2 \cdot D - 2) \cdot H_2 \left( \frac{2 \cdot D - 1}{n + 2 \cdot D - 2} \right) \leq \left( 4 \cdot \frac{(2 \cdot D - 1) \cdot (n - 1)}{(n + 2 \cdot D - 2)^{2 - \log \left( 4 \right)}} \right)^\frac{1}{\log \left( 4 \right)}.
        \end{align*}
        Without loss of generality $D \geq 1$, so $n - 1 \leq n + 2 \cdot D - 2$ which implies the complexity claim.

        For the success probability, recall that by \Cref{Equ: failure probability}
        \begin{align*}
            p_{fail}
            &\in \mathcal{O} \left( m \cdot \frac{2}{t \cdot \sqrt{2 \cdot \pi}} \cdot \exp \left( -\frac{t^2}{2} \right) \right) \\
            &\in \mathcal{O} \left( \binom{n + D - 1}{D} \cdot \frac{2}{t \cdot \sqrt{2 \cdot \pi}} \cdot \exp \left( -\frac{t^2}{2} \right) \right) \\
            &\in \mathcal{O} \left( \sqrt{\frac{n + D - 1}{D \cdot (n - 1)}} \cdot 2^{2 \cdot \sqrt{D \cdot n}} \cdot \frac{2}{t \cdot \sqrt{2 \cdot \pi}} \cdot \exp \left( -\frac{t^2}{2} \right) \right) \\
            &\in \mathcal{O} \left( \exp \left( 2 \cdot \log \left( 2 \right) \cdot \sqrt{2 \cdot t \cdot \sigma \cdot n} - \frac{t^2}{2} \right) \right),
        \end{align*}
        which proves the claim. \qed
    \end{proof}

    In particular, for $\sigma = \sqrt{n}$ and $t = \frac{k}{\sqrt{\sigma}}$, where $k \in \mathbb{Z}$ we obtain the complexity estimate
    \begin{equation}
        \mathcal{O} \left( \left( k \cdot \sqrt{n} \right)^3 \cdot 2^{(\omega + 3) \cdot 2^\frac{1}{\log \left( 2 \right)} \cdot (4 \cdot k + 1)^\frac{1}{\log \left( 4 \right)} \cdot n^{1 - \frac{1}{2 \cdot \log \left( 4 \right)}}} \right).
    \end{equation}
    Since $1 - \frac{1}{2 \cdot \log \left( 4 \right)} \approx 0.6393$ this complexity estimate is sub-exponential.

    \subsection{Sub-Exponential Complexity for Binary Secret LWE}\label{Sec: binary secret LWE sub-exponential}
    Recall that binary secret LWE is the simplest case of small secret LWE, see \Cref{Sec: small secret LWE}.
    Let $F = (x_1^2 - x_1, \dots, x_n^2 - x_n)$, and let $\mathcal{F}_\text{LWE} = \{ f_1, \dots, f_m \}$ be a binary secret LWE polynomial system where the (univariate) LWE error polynomial is of degree $D$.
    Without loss of generality we can first reduce the polynomials in $\mathcal{F}_\text{LWE}$ modulo $F$ with respect to the DRL term order.
    Let $f \in \mathcal{F}_\text{LWE}$, after the preprocessing step only monomials of the form
    \begin{equation}\label{Equ: monomials in binary secret LWE}
        m = x_1^{\alpha_1} \cdots x_n^{\alpha_n},
    \end{equation}
    where $\alpha_i \in \{ 0, 1 \}$ for all $i$, are present in $f$ and by elementary properties of multivariate polynomial division, see \cite[Chapter~2~\S 3]{Cox-Ideals}, also $\degree{f} \leq D$ after the reduction.

    Suppose that all $f \in \mathcal{F}_\text{LWE}$ are of degree $D$ after the reduction, we want to find the minimal achievable degree of regularity $d_{\reg} \big( (\mathcal{F}_\text{LWE}) + F \big)$.
    Let $g \in P = \Fq [x_1, \dots, x_n]$ be a monomial such that $x_i^2 \mid g$ for some $i$.
    Such a monomial can always be generated by some element in $F^\topcomp$, therefore we only have to consider monomials as in \Cref{Equ: monomials in binary secret LWE}.
    Necessarily, these monomials must be generated by the elements in $\mathcal{F}_\text{LWE}^\topcomp$.
    Moreover, by elementary combinatorics there exist $\binom{n}{d}$ many monomials of the form of \Cref{Equ: monomials in binary secret LWE} in degree $d$.

    To compute $d_{\reg} \big( (\mathcal{F}_\text{LWE}) + F \big)$ one iterates through:
    \begin{enumerate}
        \item Let $d = 0$, and $\mathcal{G} = \left( \mathcal{F}_\text{LWE}^\topcomp \right)$.

        \item Perform Gaussian elimination on $\mathcal{G}$ to obtain a minimal generating set.
        If $\abs{\mathcal{G}} = \binom{n}{D + d}$ return $D + d$, else set $d = d + 1$.

        \item Compute $\mathcal{G} = \sum_{i = 1}^{n} x_i \cdot (\mathcal{G}) \mod (x_1^2, \dots, x_n^2)$, and return to step (2).
    \end{enumerate}

    In order to achieve $d_{\reg} \big( (\mathcal{F}_\text{LWE}) + F \big) \leq D + d$, for some $d \geq 0$, we must require that
    \begin{align}
        m \cdot \binom{n}{d} &\stackrel{!}{\geq} \binom{n}{D + d} \\
        \Leftrightarrow m &\stackrel{!}{\geq} \frac{\binom{n}{D + d}}{\binom{n}{d}} = \prod_{i = 1}^{D} \frac{n - d - i + 1}{d + i}. \label{Equ: necessary number of samples}
    \end{align}
    I.e., $m \in \mathcal{O} \left( n^D \right)$ many samples can be sufficient to achieve $d_{\reg} \big( (\mathcal{F}_\text{LWE}) + F \big) \leq D + 1$.

    Provided that $m \in \mathcal{O} \left( n^D \right)$ and $d_{\reg} (\mathcal{F}_\text{LWE}) \leq D + 1$, then we obtain analog to \Cref{Th: LWE sub-exponential} the following complexity estimate
    \begin{equation}\label{Equ: binary secret LWE sub-exponential}
        \mathcal{O} \left( n^D \cdot D^3 \cdot 2^{(\omega + 2) \cdot 2^\frac{1}{\log \left( 2 \right)} \cdot (2 \cdot D + 1)^\frac{1}{\log \left( 4 \right)} \cdot (n - 1)^{1 - \frac{1}{\log \left( 4 \right)}}} \right).
    \end{equation}
    If $D = 2 \cdot t \cdot \sigma + 1$ and $\sigma = \sqrt{n}$, then we can further estimate $2 \cdot D + 1 \in \mathcal{O} \left( \sqrt{n} \right)$.
    In particular, the exponent of $n$ then becomes
    \begin{equation}
        \frac{1}{2 \cdot \log \left( 4 \right)} + 1 - \frac{1}{\log \left( 4 \right)} = 1 - \frac{1}{2 \cdot \log \left( 4 \right)} \approx 0.6393,
    \end{equation}
    so the complexity estimate is indeed sub-exponential.

    \subsection{Polynomial Complexity for Binary Error LWE}\label{Sec: binary error LWE sub-exponential}
    Recall that binary error LWE is the simplest case of small error LWE, see \Cref{Sec: small error LWE}.
    Every polynomial has degree $2$.
    Analog to \Cref{Th: small error LWE complexity bounds}, we first pick $n$ linearly independent samples $(\mathbf{a}_i, b_i)$ and perform a coordinate transformation.
    So without loss of generality we can assume that $\mathbf{a}_i$ is the $i$\textsuperscript{th} standard basis vector of $\Fqn$.
    After the transformation these $n$ LWE equations become $x_i^2 - x_i = 0$.
    We allocate them in the ideal $F = (x_1^2 - x_1, \dots, x_n^2 - x_n)$, the remaining $m - n$ LWE polynomials we collect in $\mathcal{F}_\text{LWE}$.
    Therefore, we can interpret binary error LWE as special case of binary secret LWE, see \Cref{Sec: binary secret LWE sub-exponential}.
    Suppose that we want to achieve $d_{\reg} \big( (\mathcal{F}_\text{LWE}) + F \big) \leq 2 + d$ for some $d \geq 0$, then by \Cref{Equ: necessary number of samples}
    \begin{equation}\label{Equ: necessary number of samples binary error LWE}
        m - n \stackrel{!}{\geq} \frac{(n - d - 1) \cdot (n - d)}{(d + 1) \cdot (d + 2)}
    \end{equation}
    many LWE samples are necessary.
    In particular, for $d = 0$ this reduces to Arora \& Ge's analysis \cite{ICALP:AroGe11}.
    Analog to \Cref{Th: LWE sub-exponential} and \Cref{Equ: binary secret LWE sub-exponential}, for $m \in \mathcal{O} \left( n^2 \right)$ we then obtain the complexity estimate
    \begin{equation}\label{Equ: binary error LWE polynomial}
        \mathcal{O} \left( n^2 \cdot d^3 \cdot \binom{n + 2 \cdot d + 2}{2 \cdot d + 3}^{\omega + 2} \right)
        \in \mathcal{O} \left( d^3 \cdot n^{(\omega + 2) \cdot (2 \cdot d + 3) + 2} \right).
    \end{equation}

    It is easy to see from \Cref{Equ: necessary number of samples binary error LWE} that the higher the value of $d$ the fewer samples are necessary to achieve a certain degree of regularity.
    Let us see an example.
    \begin{ex}\label{Ex: binary secret LWE optimal estimate}
        Let $q$ be a prime, and let $n = 256$, and
        \begin{enumerate}
            \item Let $m = 2 \cdot n$.
            The minimum $d \in \mathbb{Z}_{\geq 0}$ such that \Cref{Equ: necessary number of samples binary error LWE} is satisfied is $d = 14$.
            Analog to \Cref{Equ: binary error LWE polynomial} with $m = 2 \cdot n$ we yield the complexity of a DRL Gr\"obner basis computation
            \begin{equation*}
                \mathcal{O} \left( 2 \cdot n \cdot d^3 \cdot \binom{n + 30}{31}^{\omega + 2} \right) \in \mathcal{O} \left( n^{31 \cdot \omega + 64} \right).
            \end{equation*}
            If we use $\omega \leq 3$, then direct evaluation of the left complexity yields $434$ bits.

            \item Let $m = n^\frac{3}{2}$.
            The minimum $d \in \mathbb{Z}_{\geq 0}$ such that \Cref{Equ: necessary number of samples binary error LWE} is satisfied is $d = 3$.
            Then we yield the complexity of a DRL Gr\"obner basis computation
            \begin{equation*}
                \mathcal{O} \left( n^\frac{3}{2} \cdot d^3 \cdot \binom{n + 8}{9}^{\omega + 2} \right) \in \mathcal{O} \left( n^{9 \cdot \omega + 19.5} \right).
            \end{equation*}
            If we use $\omega \leq 3$, then direct evaluation of the left complexity yields $178$ bits.
        \end{enumerate}
    \end{ex}

    \subsection{A Conjecture on the Castelnuovo-Mumford Regularity}
    Experimentally we observed the following property for all LWE polynomial systems studied in this paper.
    \begin{conj}\label{Conj: regularity bound}
        Let $\Fq$ be a finite field, and let $\mathcal{F}_\text{LWE} \subset \Fq [x_1, \dots, x_n]$ be a LWE polynomial system.
        \begin{enumerate}
            \item For small secret LWE where the error is drawn from the interval $[-N, N]$
            \begin{equation*}
                \reg \left( \mathcal{F}_\text{LWE}^\homog \right) \leq d_{\reg} \left( \mathcal{F}_\text{LWE} \right) + N - 1.
            \end{equation*}

            \item For binary secret or binary error LWE
            \begin{equation*}
                \reg \left( \mathcal{F}_\text{LWE}^\homog \right) \leq d_{\reg} \left( \mathcal{F}_\text{LWE} \right) + 1.
            \end{equation*}
        \end{enumerate}
    \end{conj}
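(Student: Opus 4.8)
Since \Cref{Conj: regularity bound} is stated as a conjecture, we merely outline the approach we would take and point to the step we expect to be the genuine difficulty.

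Write $\mathcal{F} = \mathcal{F}_\text{LWE} \subset \Fq [x_1, \dots, x_n]$, $d_{\reg} = d_{\reg} (\mathcal{F})$, $P = \Fq [x_0, x_1, \dots, x_n]$, and $I = \big( \mathcal{F}^\homog \big) \subset P$; put $D = 2 N + 1$ for the degree of the error polynomial in part (1). Being a fully determined affinely derived system, $\mathcal{F}$ is $0$-dimensional, $\mathcal{F}^\homog$ is in generic coordinates, and $d_{\reg} < \infty$ by \Cref{Th: affinely derived polynomial system generic generators} and \Cref{Th: generic generators and highest degree components}. The plan starts from the local-cohomology formula for regularity. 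Since $P/I$ is one-dimensional only $H^0_{\mathfrak{m}} (P/I)$ and $H^1_{\mathfrak{m}} (P/I)$ survive, and writing $I^\satur$ for the ideal of the finite set $Z \subset \ProjSp{n}{\bar{\Fq}}$ of affine solutions (there are no solutions at infinity since $\sqrt{\mathcal{F}^\topcomp} = (x_1, \dots, x_n)$), one has $H^0_{\mathfrak{m}} (P/I) \cong I^\satur / I$ of finite length, $H^1_{\mathfrak{m}} (P/I) \cong H^1_{\mathfrak{m}} (P/I^\satur)$, and the coordinate ring $P/I^\satur$ of the points is Cohen--Macaulay; hence
\begin{equation*}
    \reg \big( \mathcal{F}^\homog \big) = \max \Big\{ \reg \big( I^\satur \big), \ 1 + \max \{ e : \big( I^\satur / I \big)_e \neq 0 \} \Big\}.
\end{equation*}
So the conjecture splits into bounding $\reg (I^\satur)$ and bounding the top nonzero degree of the saturation defect $I^\satur / I = H^0_{\mathfrak{m}} (P/I)$.

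The first bound is the easy half. The affine solution set $Z$ is (a shear of a subset of) the grid $\{ r_1, \dots, r_D \}^n$ of roots of the error polynomial, a subscheme of the degree-$D^n$ complete intersection $\big( (\mathbf{a}_i^\intercal \mathbf{x})^D \big)_i$, and in all LWE regimes of interest --- in particular the overdetermined ones of \Cref{Sec: sub-exponential complexity estimates} --- it is a single point, so $\reg (I^\satur) = 1 \leq d_{\reg}$; the fully determined case gives equality $\reg (I^\satur) = \reg (\mathcal{F}^\homog) = n (D - 1) + 1 = d_{\reg}$. In general $\reg (I^\satur) \leq d_{\reg} \leq d_{\reg} + N - 1$ should hold and is all that is required from this term.

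The real work is the second bound: one must show that the $x_0$-torsion of $P/I$ is concentrated in degrees $\leq d_{\reg} + N - 2$ (resp.\ $\leq d_{\reg}$ for binary secret or binary error LWE). Here the special shape of the equations is crucial. Over $\bar{\Fq}$ every homogenized LWE polynomial factors into distinct linear forms --- an error equation whose error polynomial has roots $r_1, \dots, r_D$ homogenizes to $\prod_{k=1}^{D} \big( (b_i - r_k) x_0 - \mathbf{a}_i^\intercal \mathbf{x} \big)$, and a secret restriction to $\prod_{s \in \mathcal{S}} (x_i - s x_0)$ --- so $I$ is generated by products of linear forms forming pencils through $\{ x_0 = 0 \}$, with exactly $N$ intermediate ``levels'' (resp.\ a single level, after reducing modulo the quadrics $x_i^2 - x_i x_0$) separating each generator from its top form $(-1)^D (\mathbf{a}_i^\intercal \mathbf{x})^D$ (resp.\ $x_i^2$). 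The idea is to assemble $H^0_{\mathfrak{m}} (P/I)$ from these levels by a mapping-cone / inclusion--exclusion computation over the arrangement and to read off that every torsion class is killed after at most $d_{\reg} + N - 1$ (resp.\ $d_{\reg} + 1$) multiplications by $\mathfrak{m}$. This is precisely where the difficulty lies: it needs actual control of $H^0_{\mathfrak{m}} (P/I)$ --- equivalently of the first syzygies of a subspace-arrangement ideal --- and Hilbert-function counting alone, which would at best reproduce the far weaker general bounds available here (the Macaulay bound of \Cref{Cor: Macauly bound}, or the slack $d_{\reg} - 1$ that \Cref{Th: refined complexity estimate} leaves for the refined solving degree), does not detect the additive constant $N - 1$. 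We would tackle the binary cases first, since there the single-level structure of $x_i^2 - x_i x_0$ makes the torsion calculation the most rigid.
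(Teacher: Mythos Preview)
The paper does not attempt a proof of \Cref{Conj: regularity bound}: it is presented solely as an experimental observation (``Experimentally we observed the following property for all LWE polynomial systems studied in this paper''), with no argument offered. So there is no proof on the paper's side to compare your outline against.

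That said, your framework is the right one. For a zero-dimensional affine system in generic coordinates the local-cohomology split
\[
\reg (I) \;=\; \max \Bigl\{ \reg (I^\satur),\ 1 + \max \{ e : (I^\satur / I)_e \neq 0 \} \Bigr\}
\]
is correct, and isolating the $x_0$-torsion of $P/I$ as the genuine obstruction is the right diagnosis; the saturated term really is harmless in the overdetermined regimes of interest (single point) and in the fully determined complete-intersection extreme you check. Where the proposal stops being a proof and becomes a plan is exactly where you flag it: the ``mapping-cone / inclusion--exclusion over the arrangement'' step is a plausible heuristic for why the pencil structure of the factored generators $\prod_k \bigl( (b_i - r_k) x_0 - \mathbf{a}_i^\intercal \mathbf{x} \bigr)$ ought to cap the torsion depth at $N - 1$ extra levels, but no actual bound on $H^0_{\mathfrak{m}} (P/I)$ is established, and --- as you yourself note --- Hilbert-series counting alone cannot see the additive constant. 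One further caveat: your claim that $\reg (I^\satur) \leq d_{\reg}$ holds ``in general'' is also asserted rather than argued, and is not entirely obvious beyond the single-point and complete-intersection cases you verify. In short, you have correctly located the difficulty, which is already further than the paper goes.
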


    In case the conjecture holds, then the complexity estimates discussed in this section improve significantly since we can utilize the complexity estimate for Gaussian elimination on a \emph{single} Macaulay matrix (\Cref{Equ: complexity estimate}).
    \begin{itemize}
        \item The binary error LWE estimate from \Cref{Equ: binary secret LWE sub-exponential} improves to
        \begin{equation}
            \mathcal{O} \left( n^D \cdot D \cdot 2^{\omega \cdot 2^\frac{1}{\log \left( 2 \right)} \cdot (D + 2)^\frac{1}{\log \left( 4 \right)} \cdot (n - 1)^{1 - \frac{1}{\log \left( 4 \right)}}} \right).
        \end{equation}

        \item The binary secret LWE estimate from \Cref{Equ: binary error LWE polynomial} improves to
        \begin{equation}
            \mathcal{O} \left( d \cdot n^{\omega \cdot (d + 3) + 2} \right).
        \end{equation}
    \end{itemize}

    E.g., under the conjecture the numeric complexities of \Cref{Ex: binary secret LWE optimal estimate} improve to $279$ bits and $96$ bits respectively.

    We also note that for the conservative cryptanalyst there is a non-hypothetical alternative to \Cref{Conj: regularity bound}.
    By \cite[Theorem~5.3]{Caminata-Degrees} for a polynomial system $\mathcal{F}^\homog \subset P [x_0]$ in generic coordinates one always has that
    \begin{equation}\label{Equ: regularity lower bound}
        d_{\reg} \left( \mathcal{F} \right) \leq \reg \left( \mathcal{F}^\homog \right).
    \end{equation}

    Thus, one can estimate the lowest achievable complexity estimate for Gaussian elimination on the Macaulay matrix to produce a Gr\"obner basis of $\mathcal{F}_\text{LWE}$ as follows:
    \begin{enumerate}
        \item Compute/Estimate the lowest achievable degree of regularity $\hat{d}$ for $\mathcal{F}_\text{LWE}$.

        \item Use \Cref{Equ: complexity estimate} with $d = \hat{d}$ and $\omega = 2$ to estimate the lowest achievable complexity upper bound of a Gr\"obner basis computation for $\mathcal{F}_\text{LWE}$.
    \end{enumerate}

    We also recommend utilizing \Cref{Equ: complexity estimate} itself for numerical computations rather than our complexity estimations.
    Our estimations are not tight but merely showcase the complexity class, i.e.\ exponential, sub-exponential \& polynomial, for various LWE Gr\"obner basis computations.

    \subsection{Complexity Estimation of Kyber768}
    Finally, let us showcase our complexity estimation methods for a concrete cryptographic example: Kyber768 \cite{NISTPQC:CRYSTALS-KYBER22}, a selected algorithm in the NIST post-quantum competition.
    Kyber768 is based on the Module-LWE problem, it has parameters $q = 3329$, $n = 3 \cdot 256$, $m = n$, $D = 2$ and errors as well as secrets are drawn from the interval $[-D, D]$.
    I.e., it is an instance of small error and small secret LWE.
    Thus, it induces a polynomial system of $1536$ equations in $768$ variables, where $768$ polynomials stem from LWE samples.
    The lowest achievable degree of regularity for Kyber768 is estimated via
    \begin{equation}
        m \cdot \binom{n + d - 1}{d} \stackrel{!}{\geq} \binom{n + (2 \cdot D + 1) + d - 1}{(2 \cdot D + 1) + d}.
    \end{equation}
    In \Cref{Tab: complexity estimates} we list our complexity estimates together with estimates for various lattice-based attacks.
    The complexities for lattice-based attacks have been computed via the lattice estimator tool\footnote{\url{https://github.com/malb/lattice-estimator}} by Albrecht et al.\ \cite{Albrecht-Hardness}.

    \begin{table}[H]
        \centering
        \caption{Bit complexity estimation for various attack strategies on Kyber768.
                 Complexity of lattice-based attacks are computed via the lattice estimator \cite{Albrecht-Hardness}.
                 For attacks where the lattice estimator provides estimations for multiple steps in an attack the most difficult step is shown in the table.
                 For Gr\"obner basis attacks, the proven complexity estimate is computed via \Cref{Equ: complexity estimate} and the Macaulay bound (\Cref{Cor: Macauly bound}).
                 The optimistic complexity estimate is computed via \Cref{Equ: refined complexity estimate}, \Cref{Th: refined complexity estimate} and the lowest achievable degree of regularity.
                 The lowest achievable complexity estimate is computed via \Cref{Equ: complexity estimate} with $\solvdeg_{DRL} \left( \mathcal{F}_\text{Kyber768} \right) \leq d_{\reg} \left( \mathcal{F}_\text{Kyber768} \right) + (2 \cdot D + 1) - 1$ (\Cref{Conj: regularity bound}).
                 Gr\"obner basis complexity estimates are computed with $\omega = 2$.
        }
        \label{Tab: Kyber768 complexity estimation}
        \resizebox{\linewidth}{!}{
            \begin{tabular}{ M{18mm} | M{10mm} | M{10mm} | M{10mm} | M{12mm} | M{12mm} | M{10mm} | M{12mm} || M{9mm} M{9mm} | M{9mm} M{9mm} | M{9mm} M{9mm} }
                \toprule
                Method  & BKW & USVP & BDD & BDD Hybrid & BDD MiTM Hybrid & Dual & Dual Hybrid & \multicolumn{2}{ M{18mm} | }{Proven complexity estimate} & \multicolumn{2}{ M{18mm} | }{Optimistic complexity estimate} & \multicolumn{2}{ M{18mm} }{Lowest achievable complexity estimate} \\
                \midrule

                Samples           & $2^{226}$ & $768$ & $768$ & $768$ & $768$ & $768$ & $768$ & $768$  & $768^4$   & $768$  & $768^4$   & $768$  & $768^4$   \\
                \hline
                Complexity (bits) & $239$     & $205$ & $201$ & $201$ & $357$ & $214$ & $206$ & $5554$ & $5581$    & $4717$ & $419$     & $1588$ & $203$    \\

                \hline

                Solving degree & \multicolumn{7}{ c || }{n.a.} & \multicolumn{2}{ c |}{$3077$} & \multicolumn{2}{ c | }{n.a.} & \multicolumn{2}{ c }{n.a.} \\
                \hline
                Lowest achievable degree of regularity & \multicolumn{7}{ c || }{n.a.} & \multicolumn{2}{ c |}{n.a.} & $232$ & $7$ & $232$ & $7$ \\

                \bottomrule
            \end{tabular}
        }
    \end{table}

    \section{Integrating Hints into LWE Polynomial Models}\label{Sec: hints}
    In two recent works Dachman-Soled et al.\ \cite{C:DDGR20,C:DGHK23} introduced a framework for cryptanalysis of LWE in the presence of side information.
    E.g., in presence of a side-channel the information can come from the power consumption, electromagnetic radiation, sound emission, etc.\ of a device.
    Once side information has been obtained it has to be modeled as mathematical hints.
    Dachman-Soled et al.\ categorize hints for LWE into four classes \cite[\S 1]{C:DDGR20}:
    \begin{itemize}
        \item Perfect hints: $\braket{\mathbf{s}, \mathbf{v}} = l \in \Fq$.\footnote{
            Dachman-Soled et al.\ \cite{C:DDGR20} considered perfect hints over $\mathbb{Z}^n$, our notion of perfect hint corresponds to their modular hint, where the modulus is the characteristic of $\Fq$.
            They made this distinction, because affine equations over $\mathbb{Z}^n$ and $\Fqn$ require different integration into lattice algorithms, see \cite[\S 4.1, 4.2]{C:DDGR20}.
            Though, for integration into polynomial systems perfect hints are always projected to $\Fq$.
            }

        \item Modular hints: $\braket{\mathbf{s}, \mathbf{v}} \equiv l \mod k$.

        \item Approximate hints: $\braket{\mathbf{s}, \mathbf{v}} + e_\sigma = l \in \Fq$.

        \item Short vector hints: $\mathbf{v} \in \Lambda$, where $\Lambda$ is the lattice associated to a LWE instance.
    \end{itemize}

    Dachman-Soled et al.\ \cite{C:DDGR20,C:DGHK23} then discuss how these hints can be incorporated into Distorted Bounded Distance Decoding (DBDD) problems and lattice reduction algorithms to attack LWE.
    For readers interested how such hints can be obtained in practice we refer to \cite[\S 4, 6]{C:DDGR20}.
    Except for short vector hints that do not involve the LWE secret, we can incorporate these hints into LWE polynomial models.

    Integrating a perfect hint is straight-forward since including an affine equation to the polynomial systems simply eliminates one variable.

    If we are given a modular hint, then in principle one can compute a subset $\Omega \in \Fq$ such that $\braket{\mathbf{s}, \mathbf{v}} - l \in \Omega$ (in $\Fq$).
    Hence, we can set up a new polynomial with roots in $\Omega$, substitute $\braket{\mathbf{s}, \mathbf{v}} - l$ into the polynomial and add it to the LWE polynomial system.
    Although this sounds simple, in practice the computation of $\Omega$ can be a challenge.
    In particular, if $\mathbf{s}$ and $\mathbf{v}$ can take all values in $\Fq^n$, then we expect the set $\Omega$ to be too big to improve Gr\"obner basis computations.
    On the other hand, if $\mathbf{s}, \mathbf{v} \in \{ 0, 1 \}^n$ and we have the modular equation $\braket{\mathbf{s}, \mathbf{v}} \equiv 1 \mod 2$, then only the odd numbers in the interval $\left[ 0, n \right]$ can be in $\Omega$, so the univariate polynomial with roots in $\Omega$ is of degree $\leq \ceil{\frac{n}{2}}$.

    More interesting are approximate hints.
    Such hints are obtained from noisy side-channel information.
    In case the probability distribution of $e_\sigma$ has smaller width than the one of the LWE error, then we can reduce the degree of a polynomial in the LWE polynomial system.
    Another class of hints that we interpret as approximate hints are Hamming weight hints.
    Suppose that the LWE secret entry $s_1$ is drawn from $D \subset \Fq$ and that we know the Hamming weight $H (s_1) = k$.
    Then we can add a univariate polynomial in $x_1$ to the LWE polynomial system whose roots are exactly the elements of $D$ of Hamming weight $k$.
    I.e., Hamming weight hints restrict the number of possible solutions.
    We illustrate this with an example.
    \begin{ex}\label{Ex: Kyber inspired}
        Let $q$ be a $16$ bit prime number, and let $(\mathbf{a}_i, b_i)_{1 \leq i \leq m} \subset \Fqn \times \Fq$ be a LWE sample generated with secret $\mathbf{s} \subset [-5, 5]^n$.
        As discussed in \Cref{Sec: small secret LWE}, for every variable $x_i$ we can add a polynomial of degree $11$ to the polynomial system to restrict the solutions to the interval.
        Suppose that $s_i$ is represented by a signed $16$ bit integer and that we learned its Hamming weight $H (s_i) = 2$, then $s_i \in \{ 3, 5 \}$ and we can replace the degree $11$ polynomial by a polynomial of degree $2$.
    \end{ex}

    Note that such Hamming weight biases can also persist if one opts for a more efficient memory representation of the secret entries.
    \begin{ex}\label{Ex: Dilithium inspired}
         Let $q$ be a $16$ bit prime number, and let $(\mathbf{a}_i, b_i)_{1 \leq i \leq m} \subset \Fqn \times \Fq$ be a LWE sample generated with secret $\mathbf{s} \subset [-2, 2]^n$.
         As discussed in \Cref{Sec: small secret LWE}, for every variable $x_i$ we can add a polynomial of degree $5$ to the polynomial system to restrict the solutions to the interval.
         Assume that the entries of $\mathbf{s}$ are stored as signed integers in the interval $\left[ -\frac{q}{2}, \frac{q}{2} \right]$, then
         \begin{itemize}
             \item if $H (s_i) = 0$, then $s_1 = 0$,

             \item if $H (s_i) = 1$, then $s_1 \in \{ 1, 2 \}$, and

             \item if $H (s_i) = 2$, then $s_1 \in \{ -1, -2 \}$.
         \end{itemize}
         So if one can learn the Hamming weight of $s_i$, then one either obtains a perfect hint or one can replace the degree $5$ polynomial by a degree $2$ polynomial.
    \end{ex}

    Moreover, modular and approximate hints can be combined in a hybrid manner.
    \begin{ex}\label{Ex: hints complexity}
        Let $q$ be a $16$ bit prime number, and let $(\mathbf{a}_i, b_i)_{1 \leq i \leq m} \subset \Fqn \times \Fq$ be a LWE sample generated with secret $\mathbf{s} \subset [-5, 5]^n$.
        Assume that the entries of $\mathbf{s}$ are stored as signed integers in the interval $\left[ -\frac{q}{2}, \frac{q}{2} \right]$.
        If $H (s_i) = 2$ and $s_i \equiv 1 \mod 3$, then $s_i \in \{ -2, 4 \}$.
        So we can replace the degree $11$ polynomial by a polynomial of degree $2$.
    \end{ex}

    In practice this can have devastating consequences.
    If we can reduce a small secret LWE instance to binary secret LWE or even worse to binary secret binary error LWE, then we expect to achieve a lower degree of regularity with less number of samples necessary compared to the plain polynomial system.
    We numerically showcase this in the following example.
    \begin{ex}\label{Ex: Kyber hints}
        Let $q$ be a $16$ bit prime number, assume that we are given small secret small error LWE over $\Fq^{256}$ whose secrets and error are drawn from $[-2, 2]$.
        Let $m = 256^\frac{3}{2}$ samples be given, and assume that we have enough Hamming weight hints for the secret and the error terms to transform the LWE polynomial system to either
        \begin{enumerate}[label=(\roman*)]
            \item binary secret LWE, or

            \item binary secret binary error LWE.
        \end{enumerate}
        In \Cref{Tab: complexity estimates} we record the least integer $d$ such that $d_{\reg} \left( \mathcal{F}_\text{LWE} \right) \leq D + d$ together with the optimistic complexity estimate from \Cref{Equ: refined complexity estimate} and the lowest achievable complexity estimate implied by \Cref{Equ: regularity lower bound} for various numbers of perfect hints.
    \end{ex}
    \clearpage 
    \begin{table}[H]
        \centering
        \caption{Complexity estimates for small secret small error LWE, binary secret LWE and binary secret binary error LWE over $\Fq^{256}$ with error polynomial degree $D = 5$ and $m = 256^\frac{3}{2}$.
                 The column $d$ lists the least integer such that $d_{\reg} \left( \mathcal{F}_\text{LWE} \right) \leq D + d$ for a given number of perfect hints.
                 The optimistic complexity estimate is computed via \Cref{Equ: refined complexity estimate} and the lowest achievable complexity estimate is computed via \Cref{Equ: complexity estimate} with $\solvdeg_{DRL} \left( \mathcal{F}_\text{LWE} \right) = d_{\reg} \left( \mathcal{F}_\text{LWE} \right) + D - 1$ where $D = 5, 2$ (\Cref{Conj: regularity bound}).}
        \label{Tab: complexity estimates}
        \resizebox{\textwidth}{!}{
            \begin{tabular}{ M{12mm} || M{5mm} | M{20mm} | M{25mm} || M{5mm} | M{20mm} | M{25mm} || M{5mm} | M{20mm} | M{25mm} }
                \toprule
                & \multicolumn{3}{ c || }{Small Secret Small Error LWE} & \multicolumn{3}{ c || }{Binary Secret LWE} & \multicolumn{3}{ c }{Binary Secret Binary Error LWE} \\
                & \multicolumn{3}{ c || }{$D = 5$} & \multicolumn{3}{ c || }{$D = 5$} & \multicolumn{3}{ c }{$D = 2$} \\
                \midrule
                Perfect hints & $d$ & Optimistic complexity estimate (bits) & Lowest achievable complexity estimate (bits) & $d$ & Optimistic complexity estimate (bits) & Lowest achievable complexity estimate (bits) &
                $d$ & Optimistic complexity estimate (bits) & Lowest achievable complexity estimate (bits) \\
                \midrule

                \multicolumn{ 10 }{ c }{$\omega = 2$} \\
                \midrule

                $0$   & $57$ & $1391$ & $481$ & $38$ & $1118$ & $370$ & $3$ & $237$ & $92$  \\
                $50$  & $45$ & $1122$ & $393$ & $30$ & $906$  & $303$ & $2$ & $188$ & $78$  \\
                $150$ & $22$ & $596$  & $221$ & $15$ & $499$  & $174$ & $1$ & $127$ & $59$  \\
                $190$ & $13$ & $387$  & $152$ & $8$  & $320$  & $117$ & $0$ & $80$  & $45$  \\

                \midrule
                \multicolumn{ 10 }{ c }{$\omega = 3$} \\
                \midrule

                $0$   & $57$ & $1731$ & $712$ & $38$ & $1389$ & $547$ & $3$ & $291$ & $131$ \\
                $50$  & $45$ & $1394$ & $580$ & $30$ & $1125$ & $336$ & $2$ & $229$ & $110$ \\
                $210$ & $8$  & $339$  & $165$ & $5$  & $290$  & $127$ & $0$ & $88$  & $56$  \\

                \bottomrule
            \end{tabular}
        }
    \end{table}

    \section{Discussion}
    In this paper we proved that any fully-determined LWE polynomial system is in generic coordinates.
    Therefore, bounds for the complexity of DRL Gr\"obner basis computations can be found via the Castelnuovo-Mumford regularity.
    In particular, this permits provable complexity estimates without relying on strong but unproven theoretical assumptions like semi-regularity \cite{Froeberg-Conjecture,Pardue-Generic}.

    We also demonstrated how the degree of regularity of a LWE polynomial system can be used to derive complexity estimates.
    Though, in practice one has to keep in mind that a degree of regularity computation usually requires a non-trivial Gr\"obner basis computation for the highest degree components.
    Hence, we interpret complexity bounds based on the lowest achievable degree of regularity as worst-case bounds from a designer's perspective that \emph{could} be achievable by an adversary.

    Based on the lowest achievable degree of regularity, we discussed that a conservative cryptanalyst should assume that Gaussian elimination on a single Macaulay matrix in the degree of regularity is sufficient to solve Search-LWE.

    Moreover, we discussed how side information can be incorporated into LWE polynomial systems, and we showcased how it can affect the complexity of Gr\"obner basis computations.

    Overall, we have presented a new framework to aid algebraic cryptanalysis for LWE-based cryptosystems under minimal theoretical assumptions on the polynomial system.

    \subsubsection{\ackname}
    The author would like to thank the anonymous reviewers at Eurocrypt 2024 for their valuable comments and helpful suggestions which improved both the quality and presentation of the paper.
    The author would like to thank Prof.\ Elisabeth Oswald at Alpen-Adria-Universit\"at Klagenfurt for her suggestion to study LWE polynomial system as well as the hints framework of Dachman-Soled et al.
    Matthias Steiner has been supported by the European Research Council (ERC) under the European Union's Horizon 2020 research and innovation program (grant agreement No.\ 725042).

    \appendix
    \section{Semaev \& Tenti's Probability Analysis}\label{Sec: Appendix}
    Let $P = \Fq [x_1, \dots, x_n]$ and $R = P / \left( x_1^q, \dots, x_n^q \right)$, and let $\mathcal{F} = \{ f_1, \dots, f_m\} \subset R$ be a homogeneous polynomial system such $\degree{f_i} = D$ for all $i$.
    Additionally, we define the notation
    \begin{equation}
        l_e (n, d) = \abs{\left\{ (a_1, \dots, a_n) \in \mathbb{Z}^n \; \middle\vert \; 0 \leq a_i < e,\ \sum_{i = 1}^{n} a_i = d \right\}}.
    \end{equation}
    Moreover, if we refer to the degree of regularity in the following paragraph, then we implicitly mean the extension of the degree of regularity to $R$ (see \cite[Definition~3.48, Theorem~3.53]{Tenti-Overdetermined}).

    Semaev \& Tenti analyzed the probability that a uniformly random polynomial system $\mathcal{F}$ achieves a certain degree of regularity.
    In particular, they proved that for $D > d > 0$ and $m \geq \frac{l_q (n, D + d)}{l_q (n, d)}$ the probability that $d_{\reg} \left( \mathcal{F} \right) \leq D + d$ converges to $1$ for $n \to \infty$ \cite[Theorem~1.1]{Semaev-Complexity}.
    For $q = D = 2$ Tenti also provided an explicit probability in his PhD thesis \cite[Theorem~4.2]{Tenti-Overdetermined}
    \begin{equation}
        \prob \Big[ d_{\reg} \left( \mathcal{F} \right) \leq 3 \Big] \geq 1 - \sum_{v = 0}^{n - 1} q^{\binom{n - v}{3} + (n - v + 1) \cdot v - (n - v) \cdot m}.
    \end{equation}

    For LWE polynomial systems we encounter a very similar scenario, since in all scenarios in \Cref{Sec: general LWE,Sec: small secret LWE,Sec: small error LWE} we can find a set of univariate polynomials $F = \big( f (x_1), \dots, f (x_n) \big)$, where $f$ is univariate and $\degree{f} = e \geq 2$, that restricts the number of possible solutions.
    Moreover, in all our scenarios $f (x_i) \mid x_i^q - x_i$.
    Thus, to analyze the degree of regularity analog to Semaev \& Tenti we first construct the polynomials $F$, and then we replace the polynomials in $\mathcal{F}_\text{LWE}$ by their remainders modulo $F$.
    Finally, for the degree of regularity only the highest degree components matter, so we can restrict the analysis to $\mathcal{F}_\text{LWE}^\topcomp \in \Fq [x_1, \dots, x_n] / \left( x_1^e, \dots, x_n^e \right)$.

    The proof of Semaev \& Tenti is combinatorial, i.e.\ in principle it does not depend on $e = q$ nor the characteristic of the finite field.
    Therefore, we expect that their results can be generalized to arbitrary rings $\Fq [x_1, \dots, x_n] / \left( x_1^e, \dots, x_n^e \right)$.

    Instead of repeating their analysis now, we will investigate a simpler problem.
    Can we consider the highest degree components of LWE polynomial systems as uniformly distributed in $\Fq [x_1, \dots, x_n]$?

    \subsection{On The Distance Of LWE Highest Degree Components To The Uniform Distribution}\label{Sec: distance to uniform distrubition}
    The highest degree components of LWE polynomials are of the form $\big( \! \braket{\mathbf{a}_i, \mathbf{x}} \! \big)^d$, for some uniform and independent distributed $\mathbf{a}_i \in \Fqn$.
    In this section we compute the distance of the coefficient distribution of $\big( \! \braket{\mathbf{a}_i, \mathbf{x}} \! \big)^d$ to the uniform distribution over $\Fq^{\binom{n + d - 1}{d}}$.
    Let us first recall the notion of statistical distance also known as total variation distance.
    \begin{defn}[{\cite[\S 4.1]{Levin-Markov}}]\label{Def: statistical distance}
        Let $\mu$ and $\nu$ be two probability distributions on a finite set $\Omega$.
        The statistical distance (or total variation distance) between $\mu$ and $\nu$ is defined as
        \begin{equation*}
            \dtv \left( \mu, \nu \right) = \frac{1}{2} \cdot \sum_{x \in \Omega} \left\vert \mu (x) - \nu (x) \right\vert.
        \end{equation*}
    \end{defn}

    The following identity will be useful to compute the statistical distance.
    \begin{lem}[{\cite[Remark~4.3]{Levin-Markov}}]\label{Lem: formula for statistical distance}
        Let $\mu$ and $\nu$ be two probability distributions on a finite set $\Omega$.
        Then
        \begin{equation*}
            \dtv \left( \mu, \nu \right) = \sum_{\substack{x \in \Omega, \\ \mu (x) \geq \nu (x)}} \mu (x) - \nu (x).
        \end{equation*}
    \end{lem}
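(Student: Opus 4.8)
The plan is to split $\Omega$ according to the sign of $\mu(x) - \nu(x)$. Write $A = \{ x \in \Omega \mid \mu(x) \geq \nu(x) \}$ and $B = \Omega \setminus A$. The key first observation is that, since $\mu$ and $\nu$ are both probability distributions on the \emph{same} finite set, $\sum_{x \in \Omega} \big( \mu(x) - \nu(x) \big) = 1 - 1 = 0$. Splitting this sum over $A$ and $B$ yields
\[
\sum_{x \in A} \big( \mu(x) - \nu(x) \big) = - \sum_{x \in B} \big( \mu(x) - \nu(x) \big) = \sum_{x \in B} \big( \nu(x) - \mu(x) \big).
\]
Denote this common value by $S$; note $S \geq 0$ since every summand of the leftmost sum is nonnegative by definition of $A$.

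Next I would rewrite the defining expression for the statistical distance by resolving the absolute values blockwise: on $A$ one has $\lvert \mu(x) - \nu(x) \rvert = \mu(x) - \nu(x)$, and on $B$ one has $\lvert \mu(x) - \nu(x) \rvert = \nu(x) - \mu(x)$. Using \Cref{Def: statistical distance},
\[
\dtv(\mu, \nu) = \frac{1}{2} \sum_{x \in \Omega} \lvert \mu(x) - \nu(x) \rvert = \frac{1}{2} \left( \sum_{x \in A} \big( \mu(x) - \nu(x) \big) + \sum_{x \in B} \big( \nu(x) - \mu(x) \big) \right) = \frac{1}{2} (S + S) = S,
\]
and $S = \sum_{x \in A} \big( \mu(x) - \nu(x) \big)$ is precisely the sum over $\{ x \in \Omega \mid \mu(x) \geq \nu(x) \}$ appearing in the statement, which completes the argument.

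There is essentially no obstacle here; the only point requiring minimal care is the boundary case $\mu(x) = \nu(x)$, which contributes $0$ to each of the sums involved, so it is immaterial whether such $x$ are assigned to $A$ or to $B$. Everything else is routine manipulation of finite sums.
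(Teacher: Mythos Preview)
Your argument is correct and is the standard proof of this identity. Note that the paper does not actually supply a proof of this lemma: it is merely stated with a citation to \cite[Remark~4.3]{Levin-Markov}, so there is nothing in the paper to compare against. Your write-up would serve perfectly well as the omitted justification.
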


    The following lemma is an easy consequence of the previous lemma.
    \begin{lem}
        Let $\mu$ and $\nu$ be two probability distributions on a finite set $\Omega$, and assume that $\dtv \left( \mu, \nu \right) \leq \epsilon$ for some $\epsilon > 0$.
        Then
        \begin{align*}
            \nu (x) - \epsilon \leq &\mu (x) \leq \nu (x) + \epsilon, \\
            \mu (x) - \epsilon \leq &\nu (x) \leq \mu (x) + \epsilon.
        \end{align*}
    \end{lem}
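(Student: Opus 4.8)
The plan is to deduce the bound $\abs{\mu(x) - \nu(x)} \leq \epsilon$ for every single $x \in \Omega$ directly from \Cref{Lem: formula for statistical distance}, and then read off the four claimed inequalities. The key observation is that one should \emph{not} start from \Cref{Def: statistical distance} itself: the definition only yields $\abs{\mu(x) - \nu(x)} \leq \sum_{y \in \Omega} \abs{\mu(y) - \nu(y)} = 2 \cdot \dtv(\mu, \nu)$, which is a factor $2$ too weak. Instead, using the representation $\dtv(\mu, \nu) = \sum_{y \colon \mu(y) \geq \nu(y)} \mu(y) - \nu(y)$ keeps the constant sharp, since a single nonnegative summand is bounded by the whole sum.

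Concretely, I would fix $x \in \Omega$ and split into two cases. If $\mu(x) \geq \nu(x)$, then $\mu(x) - \nu(x)$ is one of the nonnegative terms in the sum of \Cref{Lem: formula for statistical distance}, hence $0 \leq \mu(x) - \nu(x) \leq \sum_{y \colon \mu(y) \geq \nu(y)} \big( \mu(y) - \nu(y) \big) = \dtv(\mu, \nu) \leq \epsilon$. If instead $\mu(x) < \nu(x)$, I apply the same argument with the roles of $\mu$ and $\nu$ interchanged — this is legitimate because $\dtv(\mu, \nu) = \dtv(\nu, \mu)$ is symmetric — to obtain $0 < \nu(x) - \mu(x) \leq \dtv(\nu, \mu) = \dtv(\mu, \nu) \leq \epsilon$. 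In either case we conclude $\abs{\mu(x) - \nu(x)} \leq \epsilon$, i.e.\ $-\epsilon \leq \mu(x) - \nu(x) \leq \epsilon$.

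Rearranging this last chain of inequalities gives $\nu(x) - \epsilon \leq \mu(x) \leq \nu(x) + \epsilon$, and rearranging the other way gives $\mu(x) - \epsilon \leq \nu(x) \leq \mu(x) + \epsilon$; since $x \in \Omega$ was arbitrary, this is exactly the assertion of the lemma. There is essentially no obstacle here: the only point requiring a moment's care is the choice to invoke \Cref{Lem: formula for statistical distance} rather than the defining formula, together with the symmetry $\dtv(\mu, \nu) = \dtv(\nu, \mu)$ to cover the case $\mu(x) < \nu(x)$ without a separate computation.
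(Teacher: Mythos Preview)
Your proof is correct and follows essentially the same approach as the paper: both split into the cases $\mu(x) \geq \nu(x)$ and $\nu(x) \geq \mu(x)$, bound the single nonnegative difference by the full sum in \Cref{Lem: formula for statistical distance}, and then combine the two resulting inequalities. The paper is slightly terser (it does not explicitly invoke the symmetry $\dtv(\mu,\nu) = \dtv(\nu,\mu)$, instead directly writing the sum with the roles swapped), but the argument is the same.
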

    \begin{proof}
        If $\mu (x) - \nu (x) \geq 0$, then by \Cref{Lem: formula for statistical distance} $\mu (x) - \nu (x) \leq \sum_{\substack{x \in \Omega, \\ \mu (x) \geq \nu (x)}} \mu (x) - \nu (x) = \dtv \left( \mu, \nu \right) \leq \epsilon$.
        If $\nu (x) - \mu (x) \geq 0$, then $\nu (x) - \mu (x) \leq \sum_{\substack{x \in \Omega, \\ \nu (x) \geq \mu (x)}} \nu (x) - \mu (x) = \dtv \left( \mu, \nu \right) \leq \epsilon$.
        By combining these two inequalities we derive the claims. \qed
    \end{proof}

    Hence, if some property holds for the distribution $\nu$, then in principle one can extend this property to the distribution $\mu$ up to some error term that depends on $\epsilon$.

    Let us now return to $\big( \! \braket{\mathbf{a}_i, \mathbf{x}} \! \big)^d$, for ease of notation we abbreviate $N_{n, d} = \binom{n + d - 1}{d}$.
    We consider the function
    \begin{equation}
        \begin{split}
            \phi_{n, d}: \Fqn &\to \Fq^{N_{n, d}}, \\
            (a_1, \dots, a_n) & \mapsto \left( \frac{d!}{i_1! \cdots i_n!} \cdot a_1^{i_1} \cdots a_n^{i_n} \right)_{\subalign{0 \leq &i_j \leq d, \\ \sum_{j = 1}^{n} &i_j = d}}.
        \end{split}
    \end{equation}
    Obviously, $\phi$ maps $\mathbf{a} \in \Fqn$ to the coefficient vector of $(\braket{\mathbf{a}, \mathbf{x}})^d$.
    Moreover, we assume that $\frac{n!}{i_1! \cdots i_n!} \neq 0$ in $\Fq$ for all possible $(i_1, \dots, i_n)$.
    This condition is for example satisfied if $q$ is prime and $d < q$.
    Then
    \begin{equation}
        \mu (\mathbf{b}) = \prob \left[ \mathbf{b} \in \imag (\phi_{n, d}) \right] = \frac{\abs{\phi_{n, d}^{-1} (\mathbf{b})}}{q^n}.
    \end{equation}
    In particular, if $\abs{\phi_{n, d}^{-1} (\mathbf{b})} \neq 0$, then $\mu (\mathbf{b}) \geq \frac{1}{q^n}$.
    Also, for $d \geq 1$ we have that $N_{n, d} \geq n$.
    Then with \Cref{Lem: formula for statistical distance} we have that
    \begin{align}
        \dtv \left( \mu, \frac{1}{q^{N_{n, d}}} \right)
        &= \sum_{\substack{\mathbf{b} \in \Fq^{N_{n, d}}, \\ \mu (\mathbf{b}) \geq \frac{1}{q^{N_{n, d}}}}} \mu (\mathbf{b}) - \frac{1}{q^{N_{n, d}}} \\
        &= \sum_{\mathbf{b} \in \imag (\phi_{n, d})} \frac{\abs{\phi_{n, d}^{-1} (\mathbf{b})}}{q^n} - \frac{1}{q^{N_{n, d}}} \\
        &= 1 - \frac{\abs{\imag (\phi_{n, d})}}{q^{N_{n, d}}} \\
        &\geq 1 - \frac{q^n}{q^{N_{n, d}}}.
    \end{align}
    Obviously, the last expression has limit $1$ for $d \to \infty$.
    Now let $\mathbf{a} \in \Fqn$ be uniformly random and let $\mathbf{b} \in \Fq^{N_{n, d}}$, then
    \begin{equation}
        \prob \left[ \phi_{n, d} (\mathbf{a}) = \mathbf{b} \right] = \sum_{\mathbf{c} \in \Fqn} \prob \left[ \mathbf{c} \right] \cdot \prob \left[ \phi_{n, d} (\mathbf{a}) = \mathbf{b} \mid \mathbf{a} = \mathbf{c} \right].
    \end{equation}
    Note that $\prob \left[ \phi_{n, d} (\mathbf{a}) = \mathbf{b} \mid \mathbf{a} = \mathbf{c} \right] \in \{ 0, 1 \}$, and it is equal to $1$ exactly $\abs{\phi_{n, d}^{-1} (\mathbf{b})}$-many times.
    Therefore,
    \begin{equation}
        \prob \left[ \phi_{n, d} (\mathbf{a}) = \mathbf{b} \right] = \prob \left[ \mathbf{b} \in \imag (\phi_{n, d}) \right].
    \end{equation}
    Hence, in general we consider the distribution of $\phi_{n, d} (\mathbf{a})$, where $\mathbf{a} \in \Fqn$ is uniformly random, to be far from the uniform distribution over $\Fq^{N_{n, d}}$.
    For example for $d = 2$ we have that
    \begin{equation}
        \dtv \left( \mu, \frac{1}{q^{N_{n, 2}}} \right) \geq 1 - \frac{q^n}{q^\frac{(n + 1) \cdot n}{2}} = 1 - q^\frac{-n^2 + n}{2} \geq \frac{1}{2},
    \end{equation}
    where the last inequality follows by $n, q \geq 2$.

    Hence, probability estimations for uniformly distributed highest degree components, like the one of Semaev \& Tenti, are not applicable to LWE polynomial systems.

    \bibliographystyle{splncs04.bst}
    \bibliography{abbrev0.bib,crypto.bib,literature.bib}

\begin{thebibliography}{10}
\providecommand{\url}[1]{\texttt{#1}}
\providecommand{\urlprefix}{URL }
\providecommand{\doi}[1]{https://doi.org/#1}

\bibitem{STOC:Ajtai96}
Ajtai, M.: Generating hard instances of lattice problems (extended abstract).
  In: 28th Annual {ACM} Symposium on Theory of Computing. pp. 99--108. {ACM}
  Press, Philadephia, PA, USA (May~22--24, 1996). \doi{10.1145/237814.237838}

\bibitem{Albrecht-AlgebraicAlgorithms}
Albrecht, M.R., Cid, C., Faug\`{e}re, J.C., Fitzpatrick, R., Perret, L.:
  Algebraic algorithms for {LWE} problems. ACM Commun. Comput. Algebra
  \textbf{49}(2), ~62 (aug 2015). \doi{10.1145/2815111.2815158}

\bibitem{EPRINT:ACFP14}
Albrecht, M.R., Cid, C., Faug{\`e}re, J.C., Perret, L.: Algebraic algorithms
  for {LWE}. Cryptology ePrint Archive, Report 2014/1018 (2014),
  \url{https://eprint.iacr.org/2014/1018}

\bibitem{Albrecht-Hardness}
Albrecht, M.R., Player, R., Scott, S.: On the concrete hardness of {L}earning
  with {E}rrors. J. Math. Cryptol.  \textbf{9}(3),  169--203 (2015).
  \doi{10.1515/jmc-2015-0016}

\bibitem{SODA:AlmWil21}
Alman, J., Williams, V.V.: A refined laser method and faster matrix
  multiplication. In: Marx, D. (ed.) 32nd Annual {ACM}-{SIAM} Symposium on
  Discrete Algorithms. pp. 522--539. {ACM-SIAM}, Virtual Conference
  (Jan~10--13, 2021). \doi{10.1137/1.9781611976465.32}

\bibitem{ICALP:AroGe11}
Arora, S., Ge, R.: New algorithms for learning in presence of errors. In:
  Aceto, L., Henzinger, M., Sgall, J. (eds.) ICALP 2011: 38th International
  Colloquium on Automata, Languages and Programming, Part~I. Lecture Notes in
  Computer Science, vol.~6755, pp. 403--415. Springer, Heidelberg, Germany,
  Zurich, Switzerland (Jul~4--8, 2011). \doi{10.1007/978-3-642-22006-7_34}

\bibitem{Bardet-Complexity}
Bardet, M., Faug\`{e}re, J.C., Salvy, B.: On the complexity of {G}r{\"o}bner
  basis computation of semi-regular overdetermined algebraic equations. In:
  Proceedings of the International Conference on Polynomial System Solving. pp.
  71--74 (2004)

\bibitem{BayerStillman}
Bayer, D., Stillman, M.: A criterion for detecting m-regularity. Invent. Math.
  \textbf{87}(1),  1--11 (2 1987). \doi{10.1007/BF01389151}

\bibitem{CCS:BCDMNN16}
Bos, J.W., Costello, C., Ducas, L., Mironov, I., Naehrig, M., Nikolaenko, V.,
  Raghunathan, A., Stebila, D.: Frodo: Take off the ring! {P}ractical,
  quantum-secure key exchange from {LWE}. In: Weippl, E.R., Katzenbeisser, S.,
  Kruegel, C., Myers, A.C., Halevi, S. (eds.) ACM CCS 2016: 23rd Conference on
  Computer and Communications Security. pp. 1006--1018. {ACM} Press, Vienna,
  Austria (Oct~24--28, 2016). \doi{10.1145/2976749.2978425}

\bibitem{STOC:BLPRS13}
Brakerski, Z., Langlois, A., Peikert, C., Regev, O., Stehl{\'e}, D.: Classical
  hardness of learning with errors. In: Boneh, D., Roughgarden, T., Feigenbaum,
  J. (eds.) 45th Annual {ACM} Symposium on Theory of Computing. pp. 575--584.
  {ACM} Press, Palo Alto, CA, USA (Jun~1--4, 2013).
  \doi{10.1145/2488608.2488680}

\bibitem{Buchberger}
Buchberger, B.: Ein Algorithmus zum Auffinden der Basiselemente des
  Restklassenringes nach einem nulldimensionalen Polynomideal. Ph.D. thesis,
  Universit\"{a}t Innsbruck (1965)

\bibitem{Caminata-SolvingPolySystems}
Caminata, A., Gorla, E.: Solving multivariate polynomial systems and an
  invariant from commutative algebra. In: Bajard, J.C., Topuzo{\u{g}}lu, A.
  (eds.) Arithmetic of Finite Fields. pp. 3--36. Springer International
  Publishing, Cham (2021). \doi{10.1007/978-3-030-68869-1_1}

\bibitem{Caminata-SolvingPolySystemsPreprint}
Caminata, A., Gorla, E.: Solving multivariate polynomial systems and an
  invariant from commutative algebra. arXiv:
  \href{https://arxiv.org/abs/1706.06319}{1706.06319} (2022), {V}ersion: 7

\bibitem{Caminata-Degrees}
Caminata, A., Gorla, E.: Solving degree, last fall degree, and related
  invariants. J. Symb. Comput.  \textbf{114},  322--335 (2023).
  \doi{10.1016/j.jsc.2022.05.001}

\bibitem{Chardin-Regularity}
Chardin, M.: Some results and questions on {Castelnuovo-Mumford} regularity.
  In: Peeva, I. (ed.) Syzygies and {Hilbert} Functions. Lecture Notes in Pure
  and Applied Mathematics, vol.~254, pp. 1--40. Chapman and Hall/CRC (2007)

\bibitem{Cover-InformationTheory}
Cover, T.M., Joy, T.A.: Elements of Information Theory. John Wiley \& Sons,
  Ltd, Hoboken, New Jersey, 2 edn. (2006). \doi{10.1002/0471200611}

\bibitem{Cox-Ideals}
Cox, D.A., Little, J., O'Shea, D.: Ideals, Varieties, and Algorithms: An
  Introduction to Computational Algebraic Geometry and Commutative Algebra.
  Undergraduate Texts in Mathematics, Springer International Publishing, 4 edn.
  (2015). \doi{10.1007/978-3-319-16721-3}

\bibitem{C:DDGR20}
{Dachman-Soled}, D., Ducas, L., Gong, H., Rossi, M.: {LWE} with side
  information: Attacks and concrete security estimation. In: Micciancio, D.,
  Ristenpart, T. (eds.) Advances in Cryptology -- {CRYPTO}~2020, Part~II.
  Lecture Notes in Computer Science, vol. 12171, pp. 329--358. Springer,
  Heidelberg, Germany, Santa Barbara, CA, USA (Aug~17--21, 2020).
  \doi{10.1007/978-3-030-56880-1_12}

\bibitem{C:DGHK23}
{Dachman-Soled}, D., Gong, H., Hanson, T., Kippen, H.: Revisiting security
  estimation for {LWE} with hints from a geometric perspective. In: Handschuh,
  H., Lysyanskaya, A. (eds.) Advances in Cryptology -- {CRYPTO}~2023, Part~V.
  Lecture Notes in Computer Science, vol. 14085, pp. 748--781. Springer,
  Heidelberg, Germany, Santa Barbara, CA, USA (Aug~20--24, 2023).
  \doi{10.1007/978-3-031-38554-4_24}

\bibitem{Ding-SolvingDegree}
Ding, J., Schmidt, D.: Solving degree and degree of regularity for polynomial
  systems over a finite fields. In: Fischlin, M., Katzenbeisser, S. (eds.)
  Number Theory and Cryptography: Papers in Honor of Johannes Buchmann on the
  Occasion of His 60th Birthday. pp. 34--49. Springer Berlin Heidelberg,
  Berlin, Heidelberg (2013). \doi{10.1007/978-3-642-42001-6_4}

\bibitem{Eisenbud-Syzygies}
Eisenbud, D.: The Geometry of Syzygies: A Second Course Commutative Algebra and
  Algebraic Geometry. Springer New York, 1 edn. (2005). \doi{10.1007/b137572}

\bibitem{Faugere-F4}
Faug\`{e}re, J.C.: A new efficient algorithm for computing {G}r{\"o}bner bases
  {(F4)}. J. Pure Appl. Algebra  \textbf{139}(1),  61--88 (1999).
  \doi{10.1016/S0022-4049(99)00005-5}

\bibitem{Faugere-F5}
Faug\`{e}re, J.C.: A new efficient algorithm for computing {G}r\"{o}bner bases
  without reduction to zero {(F5)}. In: Proceedings of the 2002 International
  Symposium on Symbolic and Algebraic Computation. p. 75–83. ISSAC '02,
  Association for Computing Machinery (2002). \doi{10.1145/780506.780516}

\bibitem{Froeberg-Conjecture}
Fr{\"{o}}berg, R.: An inequality for {Hilbert} series of graded algebras. Math.
  Scand.  \textbf{56},  117–144 (12 1985). \doi{10.7146/math.scand.a-12092}

\bibitem{Gao-Counting}
Gao, S.: Counting Zeros over Finite Fields Using {Gr\"{o}bner} Bases. Master's
  thesis, Carnegie Mellon University (2009),
  \url{https://www.cs.cmu.edu/~sicung/papers/MS_thesis.pdf}

\bibitem{Lazard-Groebner}
Lazard, D.: {G}r{\"o}bner bases, {G}aussian elimination and resolution of
  systems of algebraic equations. In: van Hulzen, J.A. (ed.) Computer Algebra,
  {EUROCAL} '83, European Computer Algebra Conference, London, England, March
  28-30, 1983, Proceedings. Lecture Notes in Computer Science, vol.~162, pp.
  146--156. Springer Berlin Heidelberg (1983). \doi{10.1007/3-540-12868-9_99}

\bibitem{Levin-Markov}
Levin, D.A., Peres, Y.: {M}arkov Chains and Mixing Times. American Mathematical
  Society, 2 edn. (2017). \doi{10.1090/mbk/107}

\bibitem{Micciancio-BinaryLWE}
Micciancio, D.: On the hardness of learning with errors with binary secrets.
  Theory Comput.  \textbf{14}(13),  1--17 (2018).
  \doi{10.4086/toc.2018.v014a013}

\bibitem{Pardue-Generic}
Pardue, K.: Generic sequences of polynomials. J. Algebra  \textbf{324}(4),
  579--590 (2010). \doi{10.1016/j.jalgebra.2010.04.018}

\bibitem{STOC:Peikert09}
Peikert, C.: Public-key cryptosystems from the worst-case shortest vector
  problem: extended abstract. In: Mitzenmacher, M. (ed.) 41st Annual {ACM}
  Symposium on Theory of Computing. pp. 333--342. {ACM} Press, Bethesda, MD,
  USA (May~31~--~Jun~2, 2009). \doi{10.1145/1536414.1536461}

\bibitem{STOC:Regev05}
Regev, O.: On lattices, learning with errors, random linear codes, and
  cryptography. In: Gabow, H.N., Fagin, R. (eds.) 37th Annual {ACM} Symposium
  on Theory of Computing. pp. 84--93. {ACM} Press, Baltimore, MA, USA
  (May~22--24, 2005). \doi{10.1145/1060590.1060603}

\bibitem{NISTPQC:CRYSTALS-KYBER22}
Schwabe, P., Avanzi, R., Bos, J., Ducas, L., Kiltz, E., Lepoint, T.,
  Lyubashevsky, V., Schanck, J.M., Seiler, G., Stehl{\'e}, D., Ding, J.:
  {CRYSTALS-KYBER}. Tech. rep., {N}ational {I}nstitute of {S}tandards and
  {T}echnology (2022), available at
  \url{https://csrc.nist.gov/Projects/post-quantum-cryptography/selected-algorithms-2022}

\bibitem{Semaev-Complexity}
Semaev, I., Tenti, A.: Probabilistic analysis on {M}acaulay matrices over
  finite fields and complexity of constructing {G}r\"{o}bner bases. J. Algebra
  \textbf{565},  651--674 (2021). \doi{10.1016/j.jalgebra.2020.08.035}

\bibitem{FOCS:Shor94}
Shor, P.W.: Algorithms for quantum computation: Discrete logarithms and
  factoring. In: 35th Annual Symposium on Foundations of Computer Science. pp.
  124--134. {IEEE} Computer Society Press, Santa Fe, NM, USA (Nov~20--22,
  1994). \doi{10.1109/SFCS.1994.365700}

\bibitem{Steiner-Solving}
Steiner, M.J.: Solving degree bounds for iterated polynomial systems. arXiv:
  \href{https://arxiv.org/abs/2310.03637}{2310.03637} (2023).
  \doi{10.48550/ARXIV.2310.03637}, {A}ccepted into IACR Trans.\ Symm.\
  Cryptol.\ 2024(1)

\bibitem{Storjohann-Matrix}
Storjohann, A.: Algorithms for matrix canonical forms. Doctoral thesis, ETH
  Zurich, Zürich (2000). \doi{10.3929/ethz-a-004141007}, diss., Technische
  Wissenschaften ETH Zürich, Nr. 13922, 2001.

\bibitem{ACISP:SunTibAbe20}
Sun, C., Tibouchi, M., Abe, M.: Revisiting the hardness of binary error {LWE}.
  In: Liu, J.K., Cui, H. (eds.) ACISP 20: 25th Australasian Conference on
  Information Security and Privacy. Lecture Notes in Computer Science, vol.
  12248, pp. 425--444. Springer, Heidelberg, Germany, Perth, WA, Australia
  (Nov~30~--~Dec~2, 2020). \doi{10.1007/978-3-030-55304-3_22}

\bibitem{Tenti-Overdetermined}
Tenti, A.: Sufficiently overdetermined random polynomial systems behave like
  semiregular ones. Ph.D. thesis, University of Bergen (2019),
  \url{https://hdl.handle.net/1956/21158}

\bibitem{Topsoe-BinaryEntropy}
Tops{\o}e, F.: Bounds for entropy and divergence for distributions over a
  two-element set. J. Ineq. Pure Appl. Math.  \textbf{2}(2),  Paper No. 25, 13
  p.--Paper No. 25, 13 p. (2001), \url{http://eudml.org/doc/122035}

\end{thebibliography}

\end{document}